\pdfoutput=1
\documentclass{article}
\usepackage{fullpage}
\usepackage{mathpazo}
\usepackage{graphicx}
\usepackage{amssymb}
\usepackage{amsthm}
\usepackage[ruled,linesnumbered]{algorithm2e}
\usepackage{algpseudocode}
\usepackage{natbib}
\usepackage{amsmath}
\usepackage{tikz}
\usepackage[colorlinks=true, allcolors=blue]{hyperref}
\usepackage[nameinlink]{cleveref}

% Packages
\usepackage{makecell}
\usepackage{booktabs}
\usepackage{amsthm,amsmath,amssymb,amsfonts}
\usepackage{mathtools,thmtools}
\usepackage{thm-restate}
\usepackage[svgnames]{xcolor}
\usepackage{todonotes}
\usepackage{xspace}
\usepackage{nicefrac}
\usepackage{url} 

% handle Cref in section titles
\usepackage{crossreftools}
% \pdfstringdefDisableCommands{%
% 	\let\Cref\crtCref
% 	\let\cref\crtcref
% }
% Cleveref + Algorithm2e line numbers
\crefname{AlgoLine}{line}{lines}        % for \cref
\Crefname{AlgoLine}{Line}{Lines}        % for \Cref
\crefalias{AlgoLine}{line}              % ensures \label on a line is treated as a line

% Algorithm comments in blue

\SetCommentSty{mycommfont}

% Open Question
\usepackage{tcolorbox}

\renewcommand{\cite}{\citep}

% Theorems
\newtheorem{theorem}{Theorem}

\newtheorem{proposition}{Proposition}
\newtheorem{lemma}{Lemma}
\theoremstyle{definition}
\newtheorem{definition}{Definition}

% Open Question
\newtheorem{open}{Open Question}
\usepackage{tcolorbox}

% Equation numbering side switch
\makeatletter
\newcommand{\leqnomode}{\tagsleft@true}
\newcommand{\reqnomode}{\tagsleft@false}
\makeatother

% Spacing
\renewcommand{\paragraph}[1]{\smallskip\noindent\textbf{#1}}

% Math
\renewcommand{\le}{\leqslant}
\renewcommand{\leq}{\leqslant}
\renewcommand{\ge}{\geqslant}
\renewcommand{\geq}{\geqslant}
\newcommand{\argmax}{\arg\max}
\newcommand{\argmin}{\arg\min}

\newcommand{\bbR}{\mathbb{R}}

\renewcommand{\hat}{\widehat}

\DeclarePairedDelimiter\floor{\lfloor}{\rfloor}
\DeclarePairedDelimiter\ceil{\lceil}{\rceil}
\DeclarePairedDelimiter\set{\{}{\}}           % {braces}
           % (parentheses)
 % <angle brackets>
   % |absolute value|
  % ||norm||
             % [brackets]

% Paper-specific macros
\newcommand{\calM}{\mathcal{M}}
\newcommand{\calN}{\mathcal{N}}
\newcommand{\calX}{\mathcal{X}}
\newcommand{\distm}{d^m}
\newcommand{\distc}{d^c}
\newcommand{\lossm}{\ell^m}
\newcommand{\lossc}{\ell^c}
\newcommand{\lossa}{\ell}

% Comments

\title{Unifying Proportional Fairness in\\Centroid and Non-Centroid Clustering}
\author{Benjamin Cookson \and Nisarg Shah \and Ziqi Yu}
\date{\{bcookson,nisarg,ziqiyu\}@cs.toronto.edu}

\begin{document}

\maketitle

\begin{abstract}
  Proportional fairness criteria inspired by democratic ideals of proportional representation have received growing attention in the clustering literature. Prior work has investigated them in two separate paradigms. \citet{CFLM19} study \emph{centroid clustering}, in which each data point's loss is determined by its distance to a representative point (centroid) chosen in its cluster. \citet{CMS24} study \emph{non-centroid clustering}, in which each data point's loss is determined by its maximum distance to any other data point in its cluster. 
  
  We generalize both paradigms to introduce \emph{semi-centroid clustering}, in which each data point's loss is a combination of its centroid and non-centroid losses, and study two proportional fairness criteria---the core, and its relaxation, fully justified representation (FJR). Our main result is a novel algorithm which achieves a constant approximation to the core, in polynomial time, even when the distance metrics used for centroid and non-centroid loss measurements are different. We also derive improved results for more restricted loss functions and the weaker FJR criterion, and establish lower bounds in each case. 
\end{abstract}

\section{Introduction}\label{sec:intro}
Clustering is a central task in AI, in which the goal is to either partition a set of $n$ data points into $k$ clusters $C_1,\ldots,C_k$, or return $k$ centroids $x_1,\ldots,x_k$, or perhaps return both the clusters and their assigned centroids. In many cases, partitioning the data points and choosing centroids are viewed as interchangeable. For example, in the traditional clustering setting, the popular $k$-means loss function (to be minimized) can be written in two equivalent forms:
\[
\min_{C_1,\ldots,C_k} \sum_{t=1}^k \frac{1}{|C_t|} \sum_{i,j \in C_t} d(i,j)^2 \equiv \min_{C_1,\ldots,C_k,x_1,\ldots,x_k} \sum_{t=1}^k \sum_{i \in C_t} d(i,x_t)^2.
\]
This is because for a given partition $(C_1,\ldots,C_k)$, the optimal choice of the centroid $x_t$ for each cluster $C_t$ is the average of its points (this can be generalized to non-Euclidean metrics too) and plugging in these optimal choices yields the formulation on the left. 

An emerging line of research at the intersection of economics and AI focuses on applications in which every data point represents an agent, and studies facets such as dealing with strategic manipulations by selfish agents~\cite{PP04,DFP10,MPR12,CPPS18,HS21} or ensuring fairness to the agents~\cite{BDNP19,KJWC+21,CMYL+24}. Here, it is not sufficient to have only a global loss function to be minimized; one needs to know the loss functions (i.e., preferences) of the individual agents and different formulations of loss functions at the agent level, even if equivalent in the aggregate, lead to different results.

Our focus is on fairness---specifically, on \emph{proportional fairness}. Informally, it provides a fairness guarantee to every possible group of agents---not just those determined by a limited set of sensitive attributes such as race and gender, or their intersections---with the strength of the guarantee scaling proportionally to the group's size and cohesiveness of its members' preferences. Proportional fairness criteria have been applied to a growing number of problems in AI~\cite{CFLM19,HMS21,KJWC+21,FPTF23,ZMCY23,CMYL+24}. 

For clustering, \citet{CFLM19} initiate the study of proportional fairness for \emph{centroid clustering}, in which the loss of an agent is its distance to the centroid of its cluster. They motivate it through facility location, in which the goal is to build $k$ facilities; each agent naturally prefers to be close to its assigned facility. \citet{CMS24} study proportional fairness for \emph{non-centroid clustering}, in which the loss of an agent is an aggregate (e.g., average or maximum) of its distances to the other agents in its cluster. Their motivation stems from clustered federated learning, in which data sources (agents) are partitioned into clusters and agents in each cluster collaboratively learn a model; each agent then wishes to be close (in terms of their data distributions) to the other agents in its cluster, so that the model they collaboratively learn is accurate for the agent personally. 

But in both these applications, and in many others, it may be natural for agents to care about \emph{both} their distance to the centroids of their clusters and their distances to the other agents in their clusters. For example, in facility location, an individual may want their assigned facility to be close to their house as well as other individuals using the same facility to also be from their neighborhood, which makes it more likely to bump into friends and acquaintances. In clustered federated learning, the centroids may represent the models collaboratively learned by the different clusters of agents. An agent may care about both the model being effective for their own data distribution and the other agents in their cluster having data distributions similar to their own, as such cohesion can be important to inducing trust and cooperation within the cluster. 

It may even be the case that agents' preferences are formed out of a combination of centroid and non-centroid losses that are induced by completely different metric spaces. When a teacher groups students together for final class projects, students' preferences for which group they are placed in would depend on the project topic that group is assigned (centroid) and which other students are in that group (non-centroid). Two students may be very close friends but have completely different opinions on project topics.

This motivates us to initiate the study of proportional fairness for a natural generalization of centroid and non-centroid clustering that we term \emph{semi-centroid clustering}, in which the loss of each agent $i$ is a function of both the cluster of agents $C_t$ it is a part of (i.e., $i \in C_t$) and the centroid $x_t$ assigned to this cluster. 

By leveraging approaches common in the works of \citet{CFLM19} and \citet{CMS24}, along with novel and intricate techniques we develop, we are able to attain desirable proportional fairness guarantees for semi-centroid clustering.  

\subsection{Our Results}

\begin{table*}[htb!]
\centering
\renewcommand{\arraystretch}{1.6}
\begin{tabular}{l|cc|cc}
\toprule
& \multicolumn{2}{c|}{\textbf{Dual Metric Loss}} & \multicolumn{2}{c}{\textbf{Weighted Single Metric Loss}} \\
& Core & FJR & Core & FJR \\
\midrule
Existential Upper Bound & $3$ & $1$ & $\min\set{\nicefrac{2}{\lambda},3}$ & $1$\\
Efficient Upper Bound & $3+2\sqrt{3}$ & $4$ & $\min\set{\nicefrac{2}{\lambda}, f_{\lambda}}$ & $\min\set{4,\nicefrac{2}{\lambda}, f_{\lambda}}$\\
Lower Bound & $2$ & $1$ & $\max\set{g_{\lambda},\frac{2(1-\lambda)}{2\lambda+1}}$ & $1$\\
\bottomrule
\end{tabular}
%}
\caption{Core and FJR approximations for semi-centroid clustering under dual metric loss and weighted single metric loss. Here, $f_{\lambda} = \frac{\sqrt{2\lambda-11\lambda^2+13}+3-\lambda}{2-2\lambda}$ and $g_{\lambda} = \frac{\sqrt{\lambda^2 - 2\lambda + 5} - \lambda + 1}{2}$.}
\label{table1}
\end{table*}

A semi-centroid clustering algorithm takes a set of $n$ data points (agents) as input and returns both a partition $(C_1,\ldots,C_k)$ of the data points and the corresponding centroids $(x_1,\ldots,x_k)$. The loss of each agent $i \in C_t$ is measured by a loss function $\ell_i(C_t,x_t)$, which depends both on the cluster $C_t$ it belongs to and its centroid $x_t$. While some of our results apply to arbitrary loss functions, most focus on two structured families. Dual metric loss measures the sum of agent $i$'s distance to its centroid $x_t$ according to a  metric $\distc$ and its maximum distance to any agent $j \in C_t$ in its cluster according to a different metric $\distm$. Weighted single metric loss is the special case in which both metrics are scaled versions of a common metric, i.e., $\distc = (1-\lambda) \cdot d$ and $\distm = \lambda \cdot d$ for some metric $d$ and $\lambda \in [0,1]$.  

We investigate (multiplicative approximations of) two proportional fairness criteria studied in prior work on centroid and non-centroid clustering: the core, and its relaxation, fully justified representation (FJR). Intuitively, when forming $k$ clusters given $n$ data points, the core demands that there is no set $S$ of at least $\nicefrac{n}{k}$ agents, who \emph{proportionally deserve} to be able to form a cluster, and a feasible centroid $y$ such that every member of $S$ prefers $(S,y)$ to its currently assigned cluster. FJR puts a greater demand on successful violations: even the maximum loss of any agent $i \in S$ for $(S,y)$ should be lower than the minimum loss of any agent $i \in S$ under the algorithm's clustering. 

First, we show that none of the algorithms developed in prior work for centroid or non-centroid clustering work in our more general paradigm of semi-centroid clustering, even for the restricted family of weighted single metric loss. Then, for both loss families, we obtain core and FJR approximation guarantees, both existential and polynomial-time attainable, by designing novel algorithms, and prove (existential) lower bounds. Our main results are summarized in \Cref{table1}. 

Finally, we also evaluate the performance of our algorithms on real-world datasets in \Cref{app:experiments}. We observe that our algorithms consistently achieve near-perfect approximations of the fairness notions we consider, beating both the theoretical worst-case bounds we establish, and the performance of classical clustering algorithms such as k-means++. We also show that they achieve these guarantees while sacrificing classical clustering objectives only slightly. This generalizes similar observations for centroid and non-centroid clustering in prior work~\cite{CFLM19,CMS24}. 

\subsection{Related Work}\label{sec:related-work}

As mentioned above, there are several existing papers studying proportional fairness in clustering. \citet{CFLM19} are the first to introduce the notion of the core for \emph{centroid} clustering, and use the \emph{Greedy Capture} algorithm to achieve a $(1+\sqrt{2})$-approximation to the core. \citet{MS20} improve the approximation factor to $2$ for the Euclidean $L^2$ metric in the ``unconstrained setting'' where the centroids can be placed anywhere in $\bbR^t$. \citet{ALCV24} observe that in this unconstrained setting, instead of the infinitely many possible centroid locations, one can focus on just the $n$ agent locations and achieve a constant approximation to the core in polynomial time. \citet{li2021approximate} study a stronger criterion than the core. \citet{CMS24} study proportional fairness in \emph{non-centroid} clustering, and use a variation of the Greedy Capture algorithm to achieve $2$-core with respect to the maximum-distance loss. They also study FJR as a weakening of the core. \citet{ebadian2025boosting} use the centroid clustering framework to study the problem of sortition. In this setting, they make the set of possible centroids equal to the set of agents, and look for a lottery over centroid clusterings with the property that each agent has an equal probability of being selected as a centroid in the chosen clustering. They achieve such a lottery with ex-post proportional fairness guarantees using an algorithm inspired by Greedy Capture.

\citet{LMNS23} study a similar graph-based problem of partitioning friends into groups, which is equivalent to non-centroid clustering, albeit when each person has a utility (rather than a loss) of $1$ for every friend of theirs placed in their cluster. \citet{ABEOMP09} look at the multi-dimensional geometric stable roommates problem, which is equivalent to the non-centroid clustering model of \cite{CMS24}, but focus on special cases such as $k = \nicefrac{n}{3}$.

Finally, other fairness notions have also been investigated for clustering (see the survey by \citet{chhabra2021overview}). The most prominent one assumes that every data point belongs to one of many classes, and certain protected classes must have equal representation across the clusters~\cite{chierichetti2017fair,bera2019fair}. 

\section{Preliminaries}\label{sec:prelims}

\subsection{Clustering Model}
A \emph{semi-centroid} clustering instance is a tuple $(\calN,\calM,\set{\ell_i}_{i \in \calN},k)$, where $\calN$ is a set of $n$ agents; $\calM$ is a set of possible cluster centers; $\ell_i: \set{C \subseteq \calN : i \in C} \times \calM \to \bbR_{\geq 0}$ is the loss function of agent $i \in \calN$, with $\ell_i(C,x)$ being the loss incurred when it is part of cluster $C$ and assigned cluster center $x$; and $k \in \mathbb{N}$ is the number of clusters to be returned by the algorithm.

A clustering algorithm takes such an instance and returns a clustering $\calX = \{(C_1,x_1),\ldots,(C_k,x_k)\}$, where $(C_1,\ldots,C_k)$ is a disjoint partition of the set of agents $\calN$ (so, $\cup_{t \in [k]}{C_t} = \calN$ and $C_t \cap C_{t'} = \emptyset$ for all distinct $t,t' \in [k]$). We refer to each tuple $(C_t,x_t) \in \calX$ as a \emph{cluster}, with $C_t$ as the \emph{member set} (and the agents in $C_t$ as the \emph{members}) of the cluster and $x_t$ as the \emph{center} assigned to the cluster. 

Given a clustering $\calX$ and an agent $i \in \calN$, let $\calX(i)$ denote the (unique) index of the cluster of which $i$ is a member (i.e., $i \in C_{\calX(i)}$). The loss incurred by each agent $i \in \calN$ under clustering $\calX$ is  $\ell_i(C_{\calX(i)},x_{\calX(i)})$, which, with slight abuse of notation, we also write as $\ell_i(\calX)$ for brevity.

\subsection{Loss Functions}

Some of our results hold for arbitrary loss functions as defined above. However, the more interesting results are obtained for two structured families of loss functions that naturally combine previously-studied loss functions for centroid and non-centroid clustering. 

\paragraph{Dual metric loss}: We are given two distance metrics,\footnote{More precisely, these are pseudometrics. A pseudometric $d : X \times X \to \bbR_{\ge 0}$ satisfies $d(x,x) = 0$ for all $x \in X$, $d(x,y)=d(y,x)$ for all $x,y \in X$, and $d(x,y) \le d(x,z)+d(z,y)$ for all $x,y,z \in X$.} a \emph{centroid metric} $\distc: \calN\times\calM \to \bbR_{\geq 0}$ and a \emph{non-centroid metric} $\distm: \calN\times\calN \to \bbR_{\geq 0}$. Given a clustering $\calX$ and an agent $i \in \calN$, the centroid metric induces its \emph{centroid loss} $\lossc_i(\calX) \triangleq \distc(i,x_{\calX(i)})$, which measures the distance of agent $i$ to its assigned center, and the non-centroid metric induces its \emph{non-centroid loss} $\lossm_i(\calX) = \max_{j \in C_{\calX(i)}}{\distm(i,j)}$, which measures the maximum distance of agent $i$ to any other agent in its cluster. The former is the canonical loss function used for centroid clustering~\cite{CFLM19,MS20,ALCV24,KP24} while the latter is a loss function for which appealing fairness guarantees have been derived for non-centroid clustering~\cite{CMS24}.

The overall \emph{dual metric loss} of agent $i$ is the sum of the centroid and non-centroid parts: 
\[\lossa_i(\calX) = \lossm_i(\calX) + \lossc_i(\calX) = \max_{j \in C_{\calX(i)}}{\distm(i,j)} + \distc(i,x_{\calX(i)}).\]
It is worth remarking that adding scaling factors so that $\lossa_i(\calX) = \alpha \cdot \lossm_i(\calX) + \beta \cdot \lossc_i(\calX)$ for some $\alpha,\beta \ge 0$ would not make the family any more general; one can instead just use $\hat{d}^m = \alpha \cdot \distm$ and $\hat{d}^c = \beta \cdot \distc$ as the new non-centroid and centroid metrics, respectively. 

\paragraph{Weighted single metric loss:} A special case of dual metric loss is where both centroid and non-centroid losses use (scaled versions of) the same metric $d:(\calN \cup \calM) \times (\calN \cup \calM) \to \bbR_{\geq 0}$. That is, for some $\lambda \in [0,1]$, we have $\distm(i,j) = \lambda \cdot d(i,j)$ for all $i,j \in \calN$ and $\distc(i,x) = (1-\lambda) \cdot d(i,x)$ for all $i \in \calN$ and $x \in \calM$, so that the overall loss is given by
\[\lossa_i(\calX) = \lambda \cdot \max_{j \in C_{\calX(i)}}{d(i,j)} + (1-\lambda) \cdot d(i,x_{\calX(i)}).\]
Once again, note that using arbitrary non-negative scaling factors $\alpha$ and $\beta$ instead of $\lambda$ and $1-\lambda$ does not offer any further generalization as one can equivalently use $\lambda = \frac{\alpha}{\alpha+\beta}$. 

Non-centroid and centroid clustering are easily seen as special cases of the weighted single metric loss (which is, in turn, a special case of the dual metric loss) with $\lambda=1$ and $\lambda=0$, respectively. Hence, the semi-centroid clustering algorithms we design for dual metric loss and for weighted single metric loss can be used for both centroid and non-centroid clustering.   

\subsection{Proportional Fairness Guarantees}

We focus on two proportional fairness criteria prominently studied by prior work. 

\begin{definition}[The Core]
    For $\alpha \ge 1$, a clustering $\calX = \set{(C_1,x_1),\ldots,(C_k,x_k)}$ is in the $\alpha$-core if there is no group of agents $S \subseteq \calN$ with $|S| \ge \nicefrac{n}{k}$ and feasible center $y \in \calM$ such that $\alpha \cdot \ell_i(S,y) < \ell_i(\calX)$ for all $i \in S$ (if this happens, we say that coalition $S$ deviates with center $y$).
\end{definition}

In words, no group $S$ of at least $\nicefrac{n}{k}$ agents, which proportionately deserves a cluster center representing it, can unilaterally improve by being a cluster of its own with a feasible center $y$. 

The second criterion is a relaxation that imposes a higher bar on deviating coalitions. 

\begin{definition}[Fully Justified Representation (FJR)]
    For $\alpha \ge 1$, a clustering $\calX = \set{(C_1,x_1),\ldots,(C_k,x_k)}$ is $\alpha$-FJR if there is no group of agents $S \subseteq \calN$ with $|S| \ge \nicefrac{n}{k}$ and feasible center $y \in \calM$ such that $\alpha \cdot \ell_i(S,y) < \min_{j \in S} \ell_j(\calX)$ for all $i \in S$.
\end{definition}

That is, the new loss for each member of the deviating coalition, scaled by $\alpha$, should not only be less than its own loss prior to deviation, but less than the loss of any deviating member prior to deviation. It is then clear that $\alpha$-FJR is a relaxation of $\alpha$-core for all $\alpha \ge 1$.

\begin{proposition}
    For every $\alpha \ge 1$, every clustering in the $\alpha$-core is also $\alpha$-FJR.
\end{proposition}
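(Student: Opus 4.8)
The plan is to argue by contraposition: I will show that any clustering that fails to be $\alpha$-FJR also fails to be in the $\alpha$-core. So suppose $\calX = \set{(C_1,x_1),\ldots,(C_k,x_k)}$ is not $\alpha$-FJR. By definition, there exist a group $S \subseteq \calN$ with $|S| \ge \nicefrac{n}{k}$ and a feasible center $y \in \calM$ such that $\alpha \cdot \ell_i(S,y) < \min_{j \in S} \ell_j(\calX)$ for every $i \in S$.

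The key observation is simply that $\min_{j \in S} \ell_j(\calX) \le \ell_i(\calX)$ for each individual $i \in S$, since the minimum over the set $S$ is at most any particular term. Chaining this with the FJR-violation inequality gives $\alpha \cdot \ell_i(S,y) < \ell_i(\calX)$ for all $i \in S$. This is precisely the condition for the coalition $S$ to deviate with center $y$ in the sense of the $\alpha$-core definition, and since $|S| \ge \nicefrac{n}{k}$, it witnesses that $\calX$ is not in the $\alpha$-core. Taking the contrapositive yields the claim: every clustering in the $\alpha$-core is $\alpha$-FJR.

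There is no real obstacle here; the statement is essentially immediate from the two definitions, the only content being that the FJR deviation condition (beating the minimum loss in $S$) is strictly stronger than the core deviation condition (beating one's own loss). I would keep the write-up to two or three sentences, making sure to note that the argument holds uniformly for every $\alpha \ge 1$ since $\alpha$ plays an identical, passive role in both inequalities.
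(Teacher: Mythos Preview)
Your proof is correct and essentially identical to the paper's: both note that an $\alpha$-FJR violation $(S,y)$ satisfies $\alpha \cdot \ell_i(S,y) < \min_{j \in S} \ell_j(\calX) \le \ell_i(\calX)$ for every $i \in S$, and hence is also an $\alpha$-core violation. The only cosmetic difference is that the paper frames it as a proof by contradiction while you use contraposition.
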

\begin{proof}
    For contradiction, assume this is false. For some instance, there is a clustering $\calX = \set{(C_1,x_1),\ldots,(C_k,x_k)}$ and an $\alpha \geq 1$ such that $\calX$ is in the $\alpha$-core, but is not $\alpha$-FJR. By the FJR definition, this means that there is a group of agents $S \subseteq N$ with $|S| \geq \nicefrac{n}{k}$ and a center $y \in \calM$ such that $\alpha \cdot \ell_i(S,y) < \min_{j \in S} \ell_j(\calX)$ for all $i \in S$.

    It is simple to see that for each $i \in S$, $\alpha \cdot \ell_i(S,y) < \min_{j \in S} \ell_j(\calX)$ implies that $\alpha \cdot \ell_i(S,y) < \ell_i(\calX)$, meaning that $(S,y)$ induces an $\alpha$-core violation as well, yielding a contradiction.
\end{proof}

\section{Greedy Capture Algorithms}\label{sec:greedy-capture}

\begin{minipage}[t]{0.46\textwidth}
\vspace{0pt}
\begin{algorithm}[H]
    \DontPrintSemicolon
    \LinesNotNumbered
    \caption{Centroid Greedy Capture \citep{CFLM19}}
    \label{alg:centroid-greedy-capture}
    \KwIn{$\calN,\calM,d,k$;}
    \KwOut{$\calX = \set{(C_1,x_1),\ldots,(C_k,x_k)}$.}
    Initialize: $\calN'\leftarrow\calN$, $\calX \gets \emptyset$, $\delta\leftarrow 0$\;
    \tcp{Uncaptured agents remain}
    \While{$\mathcal{N}'\neq\emptyset$}{
        Increase $\delta$ smoothly\;
        \tcp{Ball opens at a center}
        \If{$\exists x \in \calM, C\subseteq\mathcal{N}'$ s.t. $|C| \geq \nicefrac{n}{k}$ and $d(i, x)\leq\delta$ for all $i\in C$}{
        $\calX \gets \calX \cup \set{(C,x)}$\;
        $\mathcal{N}'\leftarrow \mathcal{N}'\setminus C$\;
        }
        \tcp{Open balls keep growing}
        \If{$\exists (C,x) \in \calX, i\in\mathcal{N}'$ s.t. $d(i,x)\leq \delta$}{
        $(C,x) \gets (C \cup \set{i},x)$\;
        $\calN' \gets \calN' \setminus \set{i}$\;
        }
    }
    \tcp{Agents switch clusters}
    $P \gets \set{x \in \calM : (C,x) \in \calX\text{ for some $C$}}$\;
    $\forall i \in \calN, x_i \in \argmin_{x \in P}d(i,x)$\;
    $\forall x \in P, \hat{C}_x \gets \set{i \in \calN : x_i = x}$\;
    $\calX \gets \set{(\hat{C}_x,x) : x \in P}$\;
    \Return{$\calX$}
\end{algorithm}
\end{minipage}%
\hfill%
\begin{minipage}[t]{0.51\textwidth}
\vspace{0pt}
\begin{algorithm}[H]
    \DontPrintSemicolon
    \LinesNotNumbered
    \caption{Non-Centroid Greedy Capture \citep{CMS24}}
    \label{alg:non-centroid-greedy-capture}
    \KwIn{$\calN,d,k$;}
    \KwOut{$\calX = \set{C_1,\ldots,C_k}$.}
    Initialize: $\calN'\leftarrow\calN$, $\calX \gets \emptyset$, $\delta\leftarrow 0$\;
    \tcp{Uncaptured agents remain}
    \While{$\mathcal{N}'\neq\emptyset$}{
        Increase $\delta$ smoothly\;
        \tcp{Ball opens at an agent}
        \If{$\exists i \in \calN', C\subseteq\calN'$ s.t. $|C|\geq \min(|\calN'|, \nicefrac{n}{k})$ and $d(i, i')\leq\delta$ for all $i'\in C$}{
            $\calX\leftarrow \calX\cup\set{C}$\;
            $\mathcal{N}'\leftarrow \mathcal{N}'\setminus C$\;
        }
        \tcp{Open balls don't grow}
    }
    \tcp{Agents don't switch clusters}
    \Return{$\calX$}
  \end{algorithm}
\end{minipage}

In prior work on proportionally fair clustering, the \emph{Greedy Capture} algorithm has been the go-to way to achieve a constant approximation of proportional fairness criteria in both the centroid and non-centroid worlds. \citet{CFLM19} introduce it for centroid clustering and \citet{CMS24} adapt it to non-centroid clustering. The centroid and non-centroid variants, presented formally as \Cref{alg:centroid-greedy-capture,alg:non-centroid-greedy-capture} respectively, share many similarities. In both variants, every agent starts off as ``uncaptured'' and a number of balls start growing at the same rate. As soon as a ball covers at least $\nicefrac{n}{k}$ uncaptured agents, it captures them and becomes ``open'' (and in the centroid variant, the center of the open ball is added as one of at most $k$ cluster centers to be selected by the algorithm). But the two variants differ in three key dimensions.
\begin{enumerate}
    \item \emph{Where the balls grow:} In the centroid version, a ball grows around every feasible cluster center. In the non-centroid version, which has no centers, one grows around each agent.
    \item \emph{Whether open balls keep growing:} In the centroid variant, a ball continues to grow after it opens, capturing any uncaptured agents as soon as it covers them. In the non-centroid variant, a ball stops growing once it opens. This is intuitive because an open ball that grows and captures an agent far away would not worsen previously-captured agents in the centroid paradigm, %in which their loss would remain bounded by their distance to the center of the ball, 
    but would do so in the non-centroid paradigm. %, where their loss would now be their distance to the far-away agent. 
    \item \emph{Whether agents switch clusters in the end:} In the centroid variant, the algorithm simply returns a set of (up to) $k$ cluster centers and each agent's loss is their distance to the nearest cluster center. Note that this may not be the center of the ball that captured it as the algorithm may have added a closer center afterwards. Hence, the set of agents captured by a ball when it opens may not form a single cluster. In the non-centroid variant, there is no ``switching'' in the end and the set of agents captured by a ball together do form a cluster. This is again because allowing an agent to switch to a different cluster she prefers could significantly worsen existing agents in that cluster. 
\end{enumerate}
\citet{CFLM19} prove that centroid greedy capture yields a clustering in the $(1+\sqrt2)$-core in the centroid paradigm, and \citet{CMS24} prove that non-centroid greedy capture yields a clustering in the $2$-core in the non-centroid paradigm. This naturally raises the question: \emph{Can some variant of greedy capture get a constant approximation to the core in the semi-centroid clustering paradigm?}

Unfortunately, we can show that solving the semi-centroid paradigm is not that simple. To see why, it is sufficient to consider the restricted case of the weighted single metric loss, in which the loss function is characterized by a single metric $d$ and a parameter $\lambda \in [0,1]$. Note that the greedy capture algorithms only take in $d$ as input, and not $\lambda$. Thus, for (some variant of) greedy capture to achieve $\alpha$-core with respect to the weighted single metric loss (for any $\lambda \in [0,1]$), it must achieve $\alpha$-core with respect to the non-centroid loss ($\lambda = 1$) and the centroid loss ($\lambda = 0$) simultaneously. 

For the eight variants of greedy capture that are induced by changing the three binary dimensions above, one could argue one at a time that they do not achieve any finite approximation of the core for centroid and non-centroid losses simultaneously (we discuss some of these variants in \Cref{app:greedy-capture}). In our first result, we show that in fact no algorithm can achieve such a simultaneous guarantee.

\begin{restatable}{theorem}{BoBWImpossibility}\label{thm:bobw-impossibility}
    For every $\alpha \ge 1$ and $\beta \ge 1$, there exists a semi-centroid clustering instance in which no clustering is simultaneously in the $\alpha$-core with respect to the centroid loss and in the $\beta$-core with respect to the non-centroid loss. 
\end{restatable}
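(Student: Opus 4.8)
The plan is to exhibit a dual metric instance with $n=6$ agents and $k=3$ clusters (so $\nicefrac{n}{k}=2$) in which the non-centroid metric rigidly pins down a single partition of the agents, while the centroid metric makes that very partition unstable no matter how the centers are assigned to its parts. Concretely, I would take three ``friendship pairs'' $P_1=\set{1,2}$, $P_2=\set{3,4}$, $P_3=\set{5,6}$. Let $\distm$ be the metric of a three-point space at mutual distance $1$, with the two agents of $P_\ell$ placed at the $\ell$-th point --- so agents in a common pair are at distance $0$ and agents in different pairs at distance $1$. Let $\distc$ be the metric of a two-point space at distance $1$: place $\set{1,3,5}$ (together with a center $a$) at one point and $\set{2,4,6}$ (together with a center $b$) at the other, and set $\calM=\set{a,b}$. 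Both are valid pseudometrics, and note that the two metrics group the agents ``orthogonally.''

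\textbf{The non-centroid core forces the partition $\set{P_1,P_2,P_3}$.} This partition gives every agent non-centroid loss $0$, so it is in the $\beta$-core with respect to $\lossm$ for every $\beta\ge 1$. Conversely, I would show that if a clustering's partition is anything else, then some pair $P_\ell$ has \emph{both} its members incurring positive non-centroid loss; then the coalition $S=P_\ell$, of size $2=\nicefrac{n}{k}$, deviates by forming its own cluster (with an arbitrary center $y\in\calM$, which is irrelevant to the non-centroid loss), dropping both members to loss $0<\lossm_i(\calX)$. For the key claim, suppose each $P_\ell$ contained a member $m_\ell$ with $\lossm_{m_\ell}(\calX)=0$; since $\distm(m_\ell,\cdot)=0$ only within $P_\ell$, the cluster of $m_\ell$ is then contained in $P_\ell$. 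Hence the clusters of $m_1,m_2,m_3$ are three distinct nonempty clusters, so they are \emph{all} of the (at most) three clusters; as they partition $\calN$ and each is contained in its own pair, each must equal its pair, i.e.\ the partition is $\set{P_1,P_2,P_3}$ after all --- a contradiction.

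\textbf{The centroid core rejects that partition.} Fix any clustering whose partition is $\set{P_1,P_2,P_3}$, with centers $x_1,x_2,x_3\in\set{a,b}$. Each pair contains one agent from $\set{1,3,5}$ and one from $\set{2,4,6}$, so whichever of $a,b$ is chosen for a given pair, exactly one of its two agents has centroid loss $0$ and the other --- the ``unhappy'' agent of that cluster --- has centroid loss $1$. There are therefore exactly three unhappy agents, one per cluster, all drawn from the two $\distc$-groups $\set{1,3,5}$ and $\set{2,4,6}$; by pigeonhole two of them, say $i$ and $j$, lie in the same $\distc$-group and so coincide with a common center, say $a$. Then $S=\set{i,j}$, of size $2=\nicefrac{n}{k}$, deviates with center $a$, since $\alpha\cdot\distc(i,a)=\alpha\cdot\distc(j,a)=0<1$, the common current centroid loss of $i$ and $j$. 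So the clustering is not in the $\alpha$-core with respect to $\lossc$. Combining the two steps, for all $\alpha,\beta\ge 1$ no clustering of this instance is simultaneously in the $\alpha$-core with respect to the centroid loss and the $\beta$-core with respect to the non-centroid loss.

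The main obstacle is the first step: one has to rule out \emph{every} way of perturbing the partition $\set{P_1,P_2,P_3}$ --- splitting a pair into two singletons, absorbing a pair into a larger cluster, leaving a cluster empty, etc.\ --- and verify that each such perturbation necessarily exposes some friendship pair in full, which is precisely what the counting argument sketched above accomplishes (it also handles empty clusters automatically, since a partition with fewer than three nonempty clusters cannot supply the three distinct nonempty clusters $C(m_1),C(m_2),C(m_3)$). The second step is then a short pigeonhole. Finally, to confirm the statement is a genuine tension rather than a vacuous failure of one criterion, I would check that the centroid core of this instance is nonempty --- e.g.\ the partition $\set{1,3},\set{5},\set{2,4,6}$ with centers $a,a,b$ gives every agent centroid loss $0$ and hence lies in the $1$-core with respect to $\lossc$.
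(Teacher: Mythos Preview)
Your proof is correct, but the construction differs meaningfully from the paper's. The paper builds a \emph{single-metric} instance ($\distm=\distc=d$, with $\calM=\calN$) consisting of two symmetric ``triangles'' $\set{a,b,c}$ and $\set{d,e,f}$ separated by infinite distance, where the short edge in each triangle has length $1$ and the two long edges have length $p>\max\set{\alpha,\beta}$. The argument then shows that the non-centroid $\beta$-core forces the partition $\set{\set{a,b},\set{c,d},\set{e,f}}$, and that any assignment of centers to this partition admits a centroid $\alpha$-core deviation. By contrast, your instance uses two genuinely different metrics $\distm\neq\distc$ and only two feasible centers.

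What each approach buys: your construction is cleaner---a single fixed instance works for \emph{all} $\alpha,\beta$ simultaneously, the metrics are $\set{0,1}$-valued, and the two steps reduce to a counting argument and a pigeonhole. The paper's construction, however, proves something strictly stronger than the theorem as stated: the impossibility holds even when $\distm=\distc$. This matters for the paper's narrative, since the preceding paragraph argues that any greedy-capture variant (which sees only a single metric $d$) would have to be simultaneously in the core with respect to both losses induced by that \emph{same} $d$; your dual-metric instance does not rule this out. The paper also reuses its single-metric instance later (for the weighted single metric lower bound and the balanced-clustering impossibility). So while your argument fully establishes the theorem, be aware that it does not immediately serve the downstream uses in the paper.
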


\begin{figure}[htb!]
    \centering
    \begin{tikzpicture}

        % Left side vertices
        \node (a) at (-0.5,0.5) {$a$};
        \node (b) at (-0.5,-0.5) {$b$};
        \node (c) at (1,0) {$c$};
        
        % Right side vertices
        \node (e) at (4.5,0.5) {$e$};
        \node (f) at (4.5,-0.5) {$f$};
        \node (d) at (3,0) {$d$};
        
        % Edges
        \draw (a) -- node[left] {$1$} (b);
        \draw (a) -- node[above] {$p$} (c);
        \draw (b) -- node[below] {$p$} (c);
        \draw (c) -- node[above] {$\infty$} (d);
        \draw (d) -- node[above] {$p$} (e);
        \draw (d) -- node[below] {$p$} (f);
        \draw (e) -- node[right] {$1$} (f);
    
    \end{tikzpicture}
    \caption{A semi-centroid clustering instance in which no clustering yields a finite approximation to the core with respect to both the centroid and non-centroid losses simultaneously.}
    \label{fig:bobw-impossibility}
\end{figure}
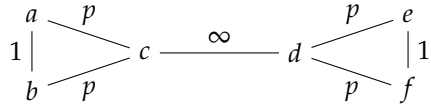
\BoBWImpossibility*
\begin{proof}
    Fix any $\alpha \ge 1$, $\beta \ge 1$, and $p > \max\set{\alpha,\beta}$. Consider the instance in \Cref{fig:bobw-impossibility}. Here $\calN = \set{a,b,c,d,e,f}$, $\calM = \calN$, $k = 3$, and let $\distm = \distc = d$, where $d$ is defined by the distances shown in the diagram,\footnote{The infinite distances can be replaced by very large distances without affecting the proof, but we keep them infinite for simplicity.} with each distance that is not shown assumed to be the maximum possible allowed under the triangle inequality. Since $n/k = 2$, a deviating coalition must contain $2$ or more agents. 

    For contradiction, assume that there exists a clustering in this instance that is simultaneously in the $\alpha$-core in the centroid paradigm and in the $\beta$-core in the non-centroid paradigm.

    \emph{Non-centroid $\beta$-core:} We claim that $\set{a,b}$ must be one of the clusters. If not, then either $a$ and $b$ are each in a cluster with a different agent, or at least one of them is a singleton cluster. In the former case, $\set{a,b}$ can deviate and improve the non-centroid loss of each member by a factor of $p > \beta$, which is a contradiction. In the latter case, since $k=3$, by the pigeonhole principle, there must exist a cluster $C$ with $|C| \ge 3$ (which means $C \cap \set{d,e,f} \neq \emptyset$). If $C \cap \set{a,b,c} \neq \emptyset$, then the larger of $C \cap \set{a,b,c}$ and $C \cap \set{d,e,f}$, which must contain at least two agents, can deviate and each of its members would improve by an infinite factor, again a contradiction. The only remaining possibility is that $C = \set{d,e,f}$, in which case $\set{e,f}$ can deviate and improve each member by a factor of $p > \beta$, a contradiction. 
    
    Hence, $\set{a,b}$ must be one of the clusters. By symmetry, $\set{e,f}$ must also be a cluster, so the remaining cluster must be $\set{c,d}$. Thus, the only clustering that can be in the $\beta$-core in the non-centroid paradigm is $\set{(\set{a,b},x_1),(\set{c,d},x_2),(\set{e,f},x_3)}$ for some $x_1,x_2,x_3 \in \calN$.

    \emph{Centroid $\alpha$-core:} If $x_1 \notin \set{a,b}$, then $\set{a,b}$ can deviate with either of $a$ or $b$ as the cluster center and improve the centroid loss of each by a factor at least $p > \alpha$, a contradiction. Without loss of generality, assume $x_1 = a$. Similarly, assume without loss of generality that $x_3 = e$. Now, either $x_2 \in \set{a,b,c}$ or $x_2 \in \set{d,e,f}$. In the former case, $\set{d,f}$ can deviate with a center at $f$, and in the latter case, $\set{b,c}$ can deviate with a center at $b$, in each case improving each member by a factor at least $p > \alpha$, which is a contradiction. 

    This concludes the proof.
\end{proof}

\section{The Core}\label{sec:core}

Given \Cref{thm:bobw-impossibility}, we switch our attention to rules that do not simply take a single metric $d$ as input; that is, for the weighted single metric loss, they also take $\lambda$ as input, and for the more general dual metric loss, they take both the centroid and non-centroid metrics, $\distc$ and $\distm$, as inputs. 

In \Cref{sec:dual-metric}, we present results for the dual metric loss. Later, in \Cref{sec:single-metric}, we present improvements for the restricted case of the weighted single metric loss. 

\subsection{Dual Metric Loss}\label{sec:dual-metric}

\begin{algorithm}[htbp]
\caption{Dual Metric Algorithm}\label{alg:multi-metric-algorithm}
\KwIn{Set of agents $\calN$, set of feasible cluster centers $\calM$, non-centroid metric $\distm$, centroid metric $\distc$, number of centers $k$, growth factor $c \geq 1$, MCC approximation factor $\alpha$}
\KwOut{Clustering $\calX = \{(C_1,x_1),\ldots,(C_k,x_k)\}$}
Initialize: $\calN'\leftarrow\calN$, $\hat{\calX} \gets \emptyset$, $t \gets 1$\;
\tcp{Phase 1: Build a tentative clustering $\hat{\calX}$}
\While(\tcp*[f]{Uncaptured agents remain}){$\calN' \neq \emptyset$}{
    \tcp{Find an $\alpha$-MCC cluster $(\hat{C}_t,x_t)$ and save its maximum loss as $r_t$}
    $(\hat{C}_t,x_t) \gets $ An $\alpha$-most cohesive cluster (of threshold $\nicefrac{n}{k}$) from $\calN', \calM$\;
    $r_t \gets \max_{i \in \hat{C}_t}\ell_i(\hat{C}_t,x_t)$\;
    \tcp{Add the cluster to $\hat{\calX}$ and mark its agents as captured}
    $\hat{\calX} \gets \hat{\calX} \cup \set{(\hat{C}_t,x_t)}$\;
    $\mathcal{N}' \gets \mathcal{N}' \setminus \hat{C}_t$\;
    $t \gets t+1$\;
}
\tcp{Phase 2: Selectively allow agents to switch to a better cluster}
$\forall t \in [k], C_t \gets \hat{C}_t$\;
\For{$i \in \calN$}{
    $t \gets \hat{\calX}(i)$\;
    \tcp{Clusters agent $i$ can switch to without harming its members too much}
    $V_i \gets \set{t' \in [k] : \forall j \in \hat{C}_{t'}, \distc(j,x_{t'}) + \distm(j,i) \leq c \cdot r_{t'}}$\;\label{algln:dest-cluster-not-hurt}
    \tcp{Among them, agent $i$ picks one minimizing an upper bound on its loss}
    $t^* \gets \argmin_{t' \in V_i} \left(\distc(i,x_{t'}) + c \cdot r_{t'} + \min_{j \in \hat{C}_{t'}} \left(\distm(i,j) - \distc(j,x_{t'})\right)\right)$\label{algln:loss-upper-bound}\;
    \tcp{And switches to it if that improves the upper bound on its loss}
    \If{\label{algln:i-not-hurt}$\distc(i,x_{t^*}) + c \cdot r_{t^*} + \min_{j \in \hat{C}_{t^*}} \left(\distm(i,j) - \distc(j,x_{t^*})\right) < c \cdot r_t$}{
        $C_t \gets C_t \setminus \set{i}$\;
        $C_{t^*} \gets C_{t^*} \cup \set{i}$\;
    }
}
\Return{$\calX = \set{(C_1,x_1),\ldots,(C_k,x_k)}$}
\end{algorithm}

The following two are the most significant and the most intricate results of our work. 
\begin{restatable}{theorem}{MultiMetric}\label{thm:multi-metric}
    For the dual metric loss, there always exists a clustering in the $3$-core. 
\end{restatable}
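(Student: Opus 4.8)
The plan is to run \Cref{alg:multi-metric-algorithm} with the most‑cohesive‑cluster oracle solved exactly (MCC factor $\alpha=1$) and the growth parameter $c$ set to an appropriate constant, and to argue that the output $\calX$ lies in the $3$‑core. First I would record the structure of Phase 1: writing $\calN'_t$ for the uncaptured set at the start of iteration $t$ and $\mathrm{OPT}_t$ for the minimum of $\max_{i\in C}\ell_i(C,x)$ over feasible clusters $(C,x)$ with $C\subseteq\calN'_t$, $|C|\ge n/k$, $x\in\calM$, an exact oracle gives $r_t=\mathrm{OPT}_t$, and since $\calN'_{t+1}\subseteq\calN'_t$ only removes feasible clusters the radii are non‑decreasing, $r_1\le r_2\le\cdots$. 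The heart of the construction is then a ``capped‑loss'' lemma: every agent $i$ satisfies $\ell_i(\calX)\le c\cdot r_{\hat{\calX}(i)}$, where $\hat{\calX}(i)$ is the index of $i$'s tentative (Phase‑1) cluster. This is exactly what the three constraints defining $V_i$, $t^*$, and the switch on \cref{algln:dest-cluster-not-hurt,algln:loss-upper-bound,algln:i-not-hurt} buy: (i) membership in $V_i$ forces any agent migrating into $\hat C_{t'}$ to lie within $\distm$‑distance $c\,r_{t'}-\distc(j,x_{t'})$ of every incumbent $j\in\hat C_{t'}$, which together with $\ell_j(\hat C_{t'},x_{t'})\le r_{t'}$ keeps every surviving incumbent's loss at most $c\,r_{t'}$ after all migrations; and (ii) the quantity on \cref{algln:loss-upper-bound}, $\distc(i,x_{t'})+c\,r_{t'}+\min_{j\in\hat C_{t'}}(\distm(i,j)-\distc(j,x_{t'}))$, is a genuine upper bound on the loss $i$ would incur in cluster $t'$ (via the triangle inequality for $\distm$: every member of the final $C_{t'}$ is within $\distm$‑distance $c\,r_{t'}-\distc(j,x_{t'})$ of any fixed $j\in\hat C_{t'}$), so an agent switches only when it can certify a loss below $c\,r_{\hat{\calX}(i)}$, while staying put also yields loss at most $c\,r_{\hat{\calX}(i)}$ since $\hat{\calX}(i)\in V_i$. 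Because every Phase‑2 decision refers only to the fixed quantities $\hat C_{t'},x_{t'},r_{t'}$, these per‑agent bounds compose without interference.

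Next I would set up the blocking argument. Suppose toward a contradiction that some $S$ with $|S|\ge n/k$ deviates with center $y$, so $3\,\ell_i(S,y)<\ell_i(\calX)$ for all $i\in S$; write $R_S=\max_{i\in S}\ell_i(S,y)$. Two facts feed in. First, as long as at least $n/k$ members of $S$ are still uncaptured, $(S\cap\calN'_t,y)$ is a feasible candidate of cost at most $R_S$, so $r_t\le R_S$; hence, by the capped‑loss lemma, every member of $S$ captured while $\ge n/k$ of $S$ is uncaptured ends with loss at most $c\,R_S$. Second, the triangle inequality for $\distm$ gives $\max_{j\in S}\distm(i,j)\ge\tfrac12\max_{p,q\in S}\distm(p,q)$ for each $i\in S$, so every member of $S$ has $\ell_i(S,y)\ge\tfrac12\max_{p,q\in S}\distm(p,q)$; moreover $\distc(i,y)\le\ell_i(S,y)$, so any two members of $S$ are within $\distc$‑distance $2R_S$ through $y$.

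The argument then splits on whether the maximiser $i^\#\in\argmax_{i\in S}\ell_i(S,y)$ is captured ``early'' (while $\ge n/k$ of $S$ remains uncaptured) or ``late''. In the early case, $\ell_{i^\#}(\calX)\le c\,R_S=c\,\ell_{i^\#}(S,y)\le 3\,\ell_{i^\#}(S,y)$ for $c\le3$, contradicting the deviation. The late case is the crux: once fewer than $n/k$ members of $S$ are uncaptured, $(S,y)$ is no longer an admissible competitor, so Phase 1's cohesiveness says nothing about where $i^\#$ lands, and one must show that Phase 2 rescues it — e.g., by exhibiting a tentative cluster $t'\in V_{i^\#}$ that already contains an early‑captured member $i^*\in S$ (so $r_{t'}\le R_S$ and $i^\#$ is $\distm$‑close to all of $\hat C_{t'}$ through $i^*$), and bounding the certified loss of $i^\#$ in $t'$, via $i^*$ and the $y$‑mediated triangle inequalities above, by $3R_S$, whence $\ell_{i^\#}(\calX)\le 3\,\ell_{i^\#}(S,y)$, again a contradiction. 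Making the quantitative bookkeeping in this last step close at factor exactly $3$ is what fixes the value of $c$.

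I expect the late case to be the main obstacle. The difficulty is that it forces one to control an agent's loss inside a cluster whose composition one does not control — it may contain far‑away non‑$S$ agents and other migrants — using only the $V_i$ gatekeeping, the radius bound $r_{t'}\le R_S$, and triangle inequalities that must bridge the two unrelated metrics $\distc$ and $\distm$ solely through the shared deviating center $y$ and a shared member $i^*$. Balancing these competing slacks — how far in $\distm$ an incumbent may be pushed, the $\distc$‑detour through $y$, and the growth factor $c$ — is delicate, and is precisely where the constant $3$ (rather than something smaller) comes from.
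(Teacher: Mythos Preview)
Your overall scaffolding---run \Cref{alg:multi-metric-algorithm} with an exact MCC oracle, establish the capped-loss bound $\ell_i(\calX) \le c \cdot r_{\hat{\calX}(i)}$, and derive a contradiction from a hypothetical deviating $(S,y)$---matches the paper, and your capped-loss argument is essentially \Cref{lem:loss-upper-bound-1,lem:loss-upper-bound-2,lem:loss-upper-bound-3,lem:loss-upper-bound-4}.

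The genuine gap is in your late case. You propose to \emph{exhibit} a tentative cluster $t' \in V_{i^\#}$ containing an early-captured $i^* \in S$, but this need not be possible: routing through $i^*$ only yields $\distc(j,x_{t'}) + \distm(j,i^\#) \le r_{t'} + \distm(i^*,i^\#)$ for each $j \in \hat{C}_{t'}$, and $\distm(i^*,i^\#)$ can be as large as $R_S \ge r_{t'}$, which exceeds $(c-1)r_{t'}$ whenever $c<2$. The idea you are missing is that the \emph{failure} of $t' \in V_{i^\#}$ is itself informative. Take $t'$ to be the cluster of the first-captured $i^* \in S$: if $t' \notin V_{i^\#}$, then some $i' \in \hat{C}_{t'}$ has $\distc(i',x_{t'}) + \distm(i',i^\#) > c\,r_{t'}$; since also $\distc(i',x_{t'}) + \distm(i',i^*) \le r_{t'}$ (as $i^*,i' \in \hat C_{t'}$), the $\distm$-triangle inequality forces $\distm(i^*,i^\#) > (c-1)r_{t'}$, whence $\ell_{i^*}(S,y) > (c-1)r_{t'}$. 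Combined with $\ell_{i^*}(\calX) \le c\,r_{t'}$, this bounds \emph{$i^*$'s} improvement by $\frac{c}{c-1}$---so the contradiction in this branch comes from the early agent, not from $i^\#$.

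The growth factor is then fixed by balancing this cannot-switch bound $\frac{c}{c-1}$ against the can-switch bound for $i^\#$ (which the paper works out, after also using that $i^*$ strictly improves under the deviation, to $1 + c\alpha(1+\tfrac{1}{\beta})$); at $\alpha=1$ both equal $3$ precisely when $c=\tfrac{3}{2}$. Your early-case takeaway that any $c \le 3$ would do is therefore a red herring: the binding constraints are the two late-case branches, not the early case.
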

\begin{restatable}{theorem}{MultiMetricPoly}\label{thm:multi-metric-poly}
    For the dual metric loss, a clustering in the $(3+2\sqrt3)$-core can be computed in polynomial time.
\end{restatable}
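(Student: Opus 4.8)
The plan is to reuse the Dual Metric Algorithm (\Cref{alg:multi-metric-algorithm}) --- the very algorithm behind \Cref{thm:multi-metric} --- but now (i) instantiate its ``most cohesive cluster'' (MCC) step with a concrete polynomial-time approximation oracle, and (ii) re-tune the growth factor $c$ to absorb the loss that oracle incurs. Concretely, I would first prove a \emph{parametric} guarantee: run with growth factor $c \ge 1$ and an $\alpha$-approximate MCC oracle, \Cref{alg:multi-metric-algorithm} outputs a clustering in the $\beta(c,\alpha)$-core, where I expect $\beta(c,\alpha) = c + \alpha + \nicefrac{\alpha}{c}$ (so that $\beta(1,1)=3$ recovers \Cref{thm:multi-metric}). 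Minimizing over $c \ge 1$ gives $c = \sqrt{\alpha}$ and $\beta = \alpha + 2\sqrt{\alpha}$; since this equals $3 + 2\sqrt{3}$ at $\alpha = 3$, it then suffices to exhibit a polynomial-time $3$-approximation for MCC under dual metric loss.

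For that subroutine, recall MCC asks for $(C,x)$ with $|C| \ge \nicefrac{n}{k}$ minimizing $\max_{i \in C}\bigl(\max_{j \in C}\distm(i,j) + \distc(i,x)\bigr)$. Let $(C^\star,x^\star)$ be optimal with value $r^\star$. Since $r^\star = \distm(i,j) + \distc(i,x^\star)$ for some $i,j \in C^\star$, we can enumerate the $O(n^2|\calM|)$ candidate values of $r^\star$; for each guessed value $r$, each candidate center $x \in \calM$, and each candidate ``anchor'' agent $p \in \calN$, form $\hat C = \set{j \in \calN : \distm(j,p) \le r \text{ and } \distc(j,x) \le r}$, and keep the feasible pair $(\hat C,x)$ of smallest max loss. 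When $r = r^\star$, $x = x^\star$, and $p \in C^\star$, every agent of $C^\star$ lies in $\hat C$, so $|\hat C| \ge \nicefrac{n}{k}$; and $\hat C$ has $\distm$-diameter at most $2r^\star$ (via $p$) and centroid loss at most $r^\star$, so its max loss is at most $3r^\star$. Hence the subroutine runs in polynomial time with approximation factor $3$ (the bookkeeping when fewer than $\nicefrac{n}{k}$ agents remain in a late iteration is handled as in \Cref{alg:non-centroid-greedy-capture}).

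The heart of the proof is the parametric core bound. Suppose for contradiction that a coalition $S$ with $|S| \ge \nicefrac{n}{k}$ blocks with center $y$, i.e., $\beta \cdot \ell_i(S,y) < \ell_i(\calX)$ for all $i \in S$. Let $t$ be the first Phase-1 iteration whose cluster $\hat C_t$ meets $S$; since no earlier cluster touches $S$, all of $S$ is still uncaptured at iteration $t$, so $(S,y)$ --- or any $\nicefrac{n}{k}$-size sub-coalition, whose max loss is only smaller --- is a valid MCC candidate, giving $r_t \le \alpha\cdot\max_{j \in S}\ell_j(S,y)$. Next I would show Phase 2 only mildly degrades agents relative to their Phase-1 clusters: the guard on \Cref{algln:dest-cluster-not-hurt} ensures an agent joining cluster $t'$ never raises an incumbent's non-centroid loss above $c\cdot r_{t'}$, while the rule on \Cref{algln:loss-upper-bound,algln:i-not-hurt} ensures an agent leaves its cluster only when a valid upper bound on its new loss drops strictly below $c\cdot r_{\hat\calX(i)}$; composing these (and checking that several agents entering one cluster do not jointly violate the bound) yields $\ell_i(\calX) \le \psi(c)\cdot r_{\hat\calX(i)}$ for every $i$, for an explicit $\psi$. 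Finally, using the triangle inequality inside $\distm$ to relate pairwise $\distm$-distances within $S$ --- and each $\distc(\cdot,y)$ --- to one fixed member's loss $\ell_i(S,y)$, one extracts a member $i \in S$ whose Phase-1 cluster satisfies a bound of the form $r_{\hat\calX(i)} = O(\ell_i(S,y))$, and the Phase-2 estimate then gives $\ell_i(\calX) \le \beta\cdot\ell_i(S,y)$, the desired contradiction.

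The main obstacle is precisely that extraction step. For FJR it suffices to find \emph{some} member of $S$ with small final loss (bounding $\min_{j\in S}\ell_j(\calX)$), which the ``first cluster touching $S$'' argument hands over immediately; the core instead demands that the \emph{same} member whose final loss we control also has a large enough deviation loss. This is delicate when the member $i^\star = \argmax_{j\in S}\ell_j(S,y)$ is captured only after fewer than $\nicefrac{n}{k}$ members of $S$ survive, so $(S,y)$ is no longer an admissible MCC candidate at that iteration and $r_{\hat\calX(i^\star)}$ need not be small; resolving it requires carefully invoking the Phase-2 switching step (which lets $i^\star$ migrate to an earlier, tighter cluster), complicated by the fact that in the dual metric $\distc$ obeys no triangle inequality through $y$, so one cannot simply transfer $i^\star$'s centroid loss from $y$ to a chosen center. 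A secondary, more mechanical obstacle is pinning down the constant $\psi(c)$ in the Phase-2 estimate tightly enough that the optimized bound lands at exactly $3 + 2\sqrt{3}$.
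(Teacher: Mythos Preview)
Your two-step plan---a parametric core bound for \Cref{alg:multi-metric-algorithm} in terms of the MCC approximation factor $\alpha$, followed by a polynomial-time MCC oracle---is exactly the paper's route. But the two steps diverge from the paper in opposite directions.

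\textbf{Your MCC oracle is better than the paper's.} The paper (\Cref{alg:approx-mcc}, \Cref{lem:mcc-approx-lemma}) builds, for each $y\in\calM$, the auxiliary metric $d_y(i,j)=\distm(i,j)+\distc(i,y)+\distc(j,y)$, runs non-centroid greedy capture on it, and obtains only a $4$-MCC. Your anchor-and-radius search is simpler and the argument you give is complete: it genuinely yields a $3$-MCC in polynomial time.

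\textbf{Your parametric formula is not what the analysis delivers.} The paper's \Cref{lem:dual-metric-lemma} shows that with an $\alpha$-MCC oracle the optimized growth factor is $c=\frac{3\alpha+\sqrt{\alpha(\alpha+8)}}{4\alpha}$, giving core approximation $\tfrac12\bigl(\alpha+\sqrt{\alpha(\alpha+8)}+2\bigr)$. This equals $3$ at $\alpha=1$ and $3+2\sqrt3$ at $\alpha=4$; at $\alpha=3$ it is $(5+\sqrt{33})/2\approx5.37$. Your conjectured $\beta(c,\alpha)=c+\alpha+\alpha/c$ does not arise from the mechanics of Phase~2; in particular, at your optimizer $c=1$ (for $\alpha=1$) the ``cannot switch'' branch of the argument gives the useless bound $c/(c-1)=\infty$.

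\textbf{The extraction obstacle is real, and the paper resolves it differently from what you sketch.} It does \emph{not} look for a single agent with both small $r_{\hat\calX(i)}$ and large $\ell_i(S,y)$. It fixes two agents: $i\in S$ the first captured in Phase~1 (so $r_t\le\alpha\max_{j'\in S}\ell_{j'}(S,y)$), and $j=\argmax_{j'\in S}\ell_{j'}(S,y)$ (so $\ell_j(S,y)\ge r_t/\alpha$). It then case-splits on whether $j$ passes the Phase-2 guard to enter cluster $t$. If yes, \Cref{lem:loss-upper-bound-3} bounds $\ell_j(\calX)$ and, after triangle-inequality work, $j$'s improvement is at most $1+c\alpha(1+1/\beta)$. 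If no, some $i'\in\hat C_t$ has $\distc(i',x_t)+\distm(i',j)>cr_t$; combined with $\distc(i',x_t)+\distm(i',i)\le r_t$ this forces $\distm(i,j)>(c-1)r_t$, so $\ell_i(S,y)>(c-1)r_t$ while $\ell_i(\calX)\le cr_t$, capping $i$'s improvement at $c/(c-1)$. Equating the two branches to $\beta$ yields the formula above.

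\textbf{Your worry about $\distc$ lacking a triangle inequality through $y$ is misplaced.} In the paper $\distc$ is (the restriction of) a pseudometric on $\calN\cup\calM$, so $\distc(j,x_t)\le\distc(j,y)+\distc(i,y)+\distc(i,x_t)$ is legitimate and is used precisely where you anticipate trouble.

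Net effect: your $3$-MCC plugged into the paper's (correct) \Cref{lem:dual-metric-lemma} would prove the stronger $(5+\sqrt{33})/2$-core bound, more than enough for the stated theorem. The gap in your proposal is the parametric lemma, where your guessed formula and one-agent extraction plan do not go through as written.
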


To understand \Cref{alg:multi-metric-algorithm}, first we need to introduce the \emph{Most Cohesive Cluster} problem, defined by \citet{CMS24} for non-centroid clustering. Adapted to our semi-centroid clustering setting, it simply requires finding the cluster $(C,x)$ with $|C| \geq \nicefrac{n}{k}$ that minimizes the maximum loss incurred by any agent in $C$. Thtis is defined and approximated for arbitrary loss functions as follows.

\begin{definition}[$\alpha$-MCC]
    For $\alpha \ge 1$, and given a set of agents $\calN$, a set of centers $\calM$, a loss function $\ell$, and a threshold $\theta$ (typically $\nicefrac{n}{k}$), a cluster $(C \subseteq \calN,x \in \calM)$ with $|C| \geq \min(|\calN|, \theta)$ is an $\alpha$-approximate most cohesive cluster (in short, an $\alpha$-MCC cluster) if $\max_{i \in C} \ell_i(C,x) \leq \alpha \cdot \max_{i \in S} \ell_i(S,y)$ for every cluster $(S \subseteq \calN, y \in \calM)$ with $|S| \geq \min(|\calN|, \theta)$. When $\alpha=1$, it is simply called a most cohesive cluster.
\end{definition}

\citet{CMS24} use this in place of the greedy capture subroutine, which finds the cluster $(C,x)$ with the smallest radius $\max_{i \in C} d(i,x)$, in order to obtain improved fairness guarantees. \Cref{thm:multi-metric,thm:multi-metric-poly} are both obtained using \Cref{alg:multi-metric-algorithm}, which calls a subroutine that finds an $\alpha$-MCC cluster; \Cref{thm:multi-metric} is obtained by an inefficient subroutine that finds an MCC cluster, while \Cref{thm:multi-metric-poly} is obtained by plugging in an efficient subroutine we design for computing a $4$-MCC cluster.  

\paragraph{Algorithm description:} \Cref{alg:multi-metric-algorithm} is split into two phases. Phase 1 works by iteratively finding an $\alpha$-MCC cluster with respect to the dual metric loss and adding it to a tentative solution. We denote the clustering found in Phase 1 by $\hat{\calX} = \set{(\hat{C}_1,x_1),\ldots,(\hat{C}_k,x_k)}$. For each cluster $(\hat{C}_t,x_t) \in \hat{\calX}$, the algorithm stores the maximum loss of any agent in that cluster as $r_t = \max_{i \in \hat{C}_t}\ell_i(\hat{C}_t,x_t)$.

Phase 2 takes the clustering $\hat{\calX}$ as a starting point and allows each agent $i \in \calN$ to switch from its $t$-th cluster to any $t'$-th cluster subject to two careful constraints:
\begin{itemize}
    \item \emph{Bound the loss of the agents in the new cluster}: We know the original loss of every $j \in \hat{C}_{t'}$ was at most $r_{t'}$. If agent $i$ switches to this cluster, it can increase $j$'s loss to $\distc(j,x_{t'})+\distm(j,i)$. We need this to be at most $c \cdot r_{t'}$ for some constant $c$ given as input to the algorithm. 
    \item \emph{Bound own loss}: We show that the expression in \Cref{algln:loss-upper-bound}---$\distc(i,x_{t'}) + c \cdot r_{t'} + \min_{j \in \hat{C}_{t'}} \left(\distm(i,j) - \distc(j,x_{t'})\right)$---serves as an upper bound on the loss of agent $i$ when switching to the $t'$-th cluster, regardless of which other agents switch to the $t'$-th cluster during Phase 2. This ensures that agent $i$'s decision to switch (or not) is independent of the decisions made by the other agents in Phase 2, and the loss of agent $i$, who was originally in $\hat{C}_t$, remains bounded by $c \cdot r_t$ throughout Phase 2, regardless of Phase 2 decisions. Agent $i$ makes a switch only if the new loss upper bound is lower than $c \cdot r_t$. 
\end{itemize}

The key to establishing an approximate core guarantee lies in this selective switching administered in Phase 2. Intuitively, this can be seen as striking a middle ground between the centroid greedy capture, which allows an agent to switch to any other cluster because it cannot increase the \emph{centroid loss} of the agents in that cluster, and the non-centroid greedy capture, which forbids such switches completely lest it might harm the agents in the cluster being switched to. The growth factor $c$ carefully controls the balance, and optimally setting this factor allows us to derive the following bound.

\begin{restatable}{lemma}{DualMetricLemma}\label{lem:dual-metric-lemma}
    Given a subroutine for computing an $\alpha$-approximate most cohesive cluster, \Cref{alg:multi-metric-algorithm}, with the growth factor set to $c=\frac{3\alpha+\sqrt{\alpha(\alpha+8)}}{4\alpha}$, finds a clustering in the $\frac{1}{2}(\alpha+\sqrt{\alpha(\alpha+8)}+2)$-core with respect to the dual metric loss.
\end{restatable}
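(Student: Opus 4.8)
The plan is to argue by contradiction: suppose the clustering $\calX$ returned by \Cref{alg:multi-metric-algorithm} (run with the stated $c$) admits an $\alpha'$-core deviation $(S,y)$, where $\alpha' = \tfrac12\bigl(\alpha + \sqrt{\alpha(\alpha+8)} + 2\bigr)$, and derive a contradiction. Writing $s=\sqrt{\alpha(\alpha+8)}$ (so $c=\tfrac{3\alpha+s}{4\alpha}$ and $\alpha'=\tfrac{\alpha+s+2}{2}$), the whole argument will hinge on two algebraic identities that follow routinely from $s^2=\alpha^2+8\alpha$: namely $\alpha'(c-1)=c$ and $\alpha'(\alpha'-1-c\alpha)=c\alpha$. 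These are precisely what make the inequalities below close with equality, which is why $c$ is set the way it is.

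The first step is to make rigorous the two Phase-2 invariants described after the algorithm. From the triangle inequality for $\distm$ and the membership test on \Cref{algln:dest-cluster-not-hurt}, one shows that for every cluster index $t'$, every final member $j\in C_{t'}$, and every original member $j'\in\hat C_{t'}$, we have $\distc(j',x_{t'})+\distm(j',j)\le c\,r_{t'}$ (split on whether $j$ was an original member of $\hat C_{t'}$ or joined in Phase 2). A short case analysis on whether agent $i$ switched then yields: \textbf{(i)} every agent $i$ has $\ell_i(\calX)\le c\,r_{\hat\calX(i)}$, i.e.\ its final loss is controlled by the radius of its \emph{original} cluster; and \textbf{(ii)} for every $t'\in V_i$, $\ell_i(\calX)\le \distc(i,x_{t'})+c\,r_{t'}+\min_{j\in\hat C_{t'}}\bigl(\distm(i,j)-\distc(j,x_{t'})\bigr)$, i.e.\ the quantity on \Cref{algln:loss-upper-bound} is a genuine upper bound on $i$'s final loss. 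The point is that (ii) holds whether or not $i$ actually ends up in cluster $t'$: if $i$ switches elsewhere its chosen destination has an even smaller upper bound, and if $i$ does not switch then $c\,r_{\hat\calX(i)}$ already lies below every $V_i$-upper-bound by the test on \Cref{algln:i-not-hurt}.

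Now fix the purported deviation $(S,y)$. Let $i_{\max}\in S$ maximize $\ell_i(S,y)$ and set $L=\ell_{i_{\max}}(S,y)=\max_{i\in S}\ell_i(S,y)$. Let $t_0$ be the first Phase-1 iteration whose cluster intersects $S$, and pick $j_0\in\hat C_{t_0}\cap S$. Since no agent of $S$ is captured before iteration $t_0$, the coalition $(S,y)$ is a valid candidate for the $\alpha$-MCC subroutine then, so $r_{t_0}\le\alpha L$. The crucial observation is that $j_0$, being simultaneously a member of the deviating coalition and an agent whose final loss is capped by (i), satisfies $\alpha'\,\ell_{j_0}(S,y)<\ell_{j_0}(\calX)\le c\,r_{t_0}$, and hence $j_0$ is very close to everything relevant: $\distc(j_0,y)<c\,r_{t_0}/\alpha'$ and $\distm(j_0,i)<c\,r_{t_0}/\alpha'$ for every $i\in S$. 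This is exactly what lets $i_{\max}$ safely relocate into $\hat C_{t_0}$: for every $j\in\hat C_{t_0}$, bounding $\distm(j,i_{\max})\le\distm(j,j_0)+\distm(j_0,i_{\max})$ and using $\distc(j,x_{t_0})+\distm(j,j_0)\le\ell_j(\hat C_{t_0},x_{t_0})\le r_{t_0}$ gives $\distc(j,x_{t_0})+\distm(j,i_{\max})< r_{t_0}+c\,r_{t_0}/\alpha' = c\,r_{t_0}$, the last equality being the identity $\alpha'(c-1)=c$. Hence $t_0\in V_{i_{\max}}$.

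Finally, apply invariant (ii) with $t'=t_0$, instantiating the inner minimum at $j=j_0$; then bound, via the triangle inequality through $y$, $\distc(i_{\max},x_{t_0})-\distc(j_0,x_{t_0})+\distm(i_{\max},j_0)\le\distc(i_{\max},y)+\distc(j_0,y)+\distm(i_{\max},j_0)\le L+\distc(j_0,y)$, where the last step uses $\distc(i_{\max},y)+\distm(i_{\max},j_0)\le\ell_{i_{\max}}(S,y)=L$. Combining, $\ell_{i_{\max}}(\calX)\le L+\distc(j_0,y)+c\,r_{t_0} < L + c\,r_{t_0}/\alpha' + c\,r_{t_0}\le L\bigl(1+c\alpha/\alpha'+c\alpha\bigr)=\alpha' L$, the last equality being the identity $\alpha'(\alpha'-1-c\alpha)=c\alpha$. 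Thus $\ell_{i_{\max}}(\calX)<\alpha'\,\ell_{i_{\max}}(S,y)$, contradicting the assumption that $i_{\max}$ strictly prefers $(S,y)$. I expect the main obstacle to be setting up invariant (ii) correctly --- proving that an agent's Phase-2 upper bound for a candidate cluster is valid no matter what the other agents do, so that switching decisions genuinely decouple --- together with pinning down the right pair of agents ($i_{\max}$ and the earliest-captured $j_0$) and the observation that a capped-loss violator must hug the coalition; once those are in place, and with the two identities for $c$ and $\alpha'$ in hand, the remaining inequalities are routine.
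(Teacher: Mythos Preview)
Your proposal is correct and follows essentially the same approach as the paper: you establish the same two loss upper-bound invariants (these are the paper's Lemmas~2--4), pick the same two distinguished agents (the earliest-captured member of $S$ and the max-loss member of $S$), use the $\alpha$-MCC property to get $r_{t_0}\le\alpha L$, and derive the same final chain of inequalities via the triangle inequality routed through $y$.

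The one organizational difference is that the paper does a case split on whether $t_0\in V_{i_{\max}}$: in the ``yes'' case it runs your final computation, and in the ``no'' case it deduces $\distm(j_0,i_{\max})>(c-1)r_{t_0}$ and gets a separate contradiction from the improvement ratio for $j_0$. You instead use the bound $\distm(j_0,i_{\max})<c\,r_{t_0}/\alpha'=(c-1)r_{t_0}$ (coming from $j_0$'s capped loss together with the deviation assumption) to establish $t_0\in V_{i_{\max}}$ outright, which is exactly the contrapositive of the paper's ``no'' case. This collapses the case split into a linear argument and makes the role of the identity $\alpha'(c-1)=c$ more transparent; the paper's version, on the other hand, makes it a bit more visible why the specific choice of $c$ is the balancing point between two competing constraints. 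Mathematically the content is the same.
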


We begin by establishing several useful properties of \Cref{alg:multi-metric-algorithm}. These properties bound the loss incurred by the agents when they (or other agents) switch (or choose not to switch) clusters during Phase 2 of \Cref{alg:multi-metric-algorithm}. 

Recall that Phase 1 produces a tentative clustering $\hat{\calX}$, with $r_t = \max_{i \in \hat{C}_t} \lossa_i(\hat{C}_t,x_t)$ being the maximum loss in the $t$-th tentative cluster, $(\hat{C}_t,x_t)$. Then, in Phase 2, each agent $i \in \hat{C}_t$ is given the opportunity to switch to the $t'$-th cluster for every $t' \in V_i$. For each such cluster, the agent evaluates the expression 
\[
\Phi(i,t') \triangleq \distc(i,x_{t'}) + c \cdot r_{t'} + \min_{j \in \hat{C}_{t'}}(\distm(i,j)-\distc(j,x_{t'})),
\]
picks its minimizer $t^* \in \argmin_{t' \in V_i} \Phi(i,t')$, and switches to it if $\Phi(i,t') < c \cdot r_t$. The following claims shed light into the reasoning behind this specific step. 

The first lemma establishes that if an agent is part of the $t$-th cluster at the end of both Phase 1 and Phase 2 (i.e., it does not switch clusters during Phase 2), then its loss, which is at most $r_t$ at the end of Phase 1, remains bounded by $c \cdot r_t$ at the end of Phase 2. The increase may come from other agents possibly switching into the $t$-th cluster during Phase 2. 

\begin{lemma}\label{lem:loss-upper-bound-1}
    For each $i \in \calN$, if $\hat{\calX}(i) = \calX(i) = t$, then $\lossa_i(\calX) = \lossa_i(C_t,x_t) \leq c \cdot r_t$.
\end{lemma}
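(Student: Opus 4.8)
The plan is to trace how the membership of cluster $t$ can change during Phase 2 and to bound, one agent at a time, the contribution of everyone who can end up in $C_t$. First I would record the Phase-1 guarantee: since $(\hat{C}_t,x_t)$ was added with $r_t = \max_{j\in\hat{C}_t}\lossa_j(\hat{C}_t,x_t)$, in particular $\lossa_i(\hat{C}_t,x_t) = \max_{j\in\hat{C}_t}\distm(i,j) + \distc(i,x_t) \le r_t$, so $\distm(i,j) + \distc(i,x_t) \le r_t \le c\cdot r_t$ for every $j\in\hat{C}_t$, where the last step uses $c\ge 1$.

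Next I would identify which agents can possibly be in $C_t$ at the end of Phase 2. Since $C_t$ is initialized to $\hat{C}_t$ and then only modified by the for-loop (agents leaving when they switch away, agents arriving when they switch in), the members of $C_t$ at the end are exactly the agents of $\hat{C}_t$ that never left, together with the set $S_t$ of agents $m$ that executed the switch $t^*=t$. For each $m\in S_t$, the switch was performed only because $t\in V_m$, and by the definition of $V_m$ on \Cref{algln:dest-cluster-not-hurt} this means $\distc(j,x_t)+\distm(j,m)\le c\cdot r_t$ for every $j\in\hat{C}_t$. The crucial point is that $V_m$ is defined relative to the fixed Phase-1 cluster $\hat{C}_t$ rather than the evolving set $C_t$, and $i\in\hat{C}_t$; hence $\distc(i,x_t)+\distm(i,m)\le c\cdot r_t$.

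Putting these together, every $j\in C_t$ satisfies $\distm(i,j)+\distc(i,x_t)\le c\cdot r_t$: if $j\in\hat{C}_t$ this is the Phase-1 bound from the first paragraph, and if $j\in S_t$ it is the $V_j$-membership bound. Since $\distc(i,x_t)$ does not depend on $j$, taking the maximum over $j\in C_t$ yields $\lossa_i(\calX) = \lossa_i(C_t,x_t) = \max_{j\in C_t}\distm(i,j) + \distc(i,x_t)\le c\cdot r_t$, as claimed. Agents of $\hat{C}_t$ that leave cluster $t$ during Phase 2 only shrink this maximum, so they cause no difficulty.

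I expect no genuine obstacle here; the one subtlety worth spelling out is that Phase 2 processes agents sequentially, so at the moment an agent $m$ joins, the set $C_t$ may already differ from $\hat{C}_t$. But because the admissibility test for $m$ is phrased in terms of $\hat{C}_t$ (and $r_t$, which is also fixed after Phase 1) rather than the current $C_t$, the bound is order-independent and applies uniformly to $i$. This order-independence is precisely the design feature that makes the lemma go through, and essentially all of the work is already baked into the definition of $V_i$.
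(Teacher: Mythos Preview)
Your proposal is correct and follows essentially the same approach as the paper: split $C_t$ into agents originally in $\hat{C}_t$ (bounded by $r_t\le c\cdot r_t$ via the definition of $r_t$) and agents that switched in (bounded by $c\cdot r_t$ via the $V_m$ condition on \Cref{algln:dest-cluster-not-hurt}, applied with $j=i\in\hat{C}_t$). Your added remark about order-independence of the $V_m$ test is a nice clarification but is not needed beyond what the paper already uses.
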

\begin{proof}
    Our goal is to show that $\distc(i,x_t) + \distm(i,j) \le c \cdot r_t$ for all $j \in C_t$. Consider any $j \in C_t$.
    
    \emph{Case 1: $j \in C_t \cap \hat{C}_t$.} By the definition of $r_t$, we must have $\distc(i,x_t) + \distm(i,j) \le r_t \le c \cdot r_t$, where the last inequality holds because $c \ge 1$. In words, such an agent does not increase the loss of agent $i$ even slightly. 

    \emph{Case 2: $j \in C_t \setminus \hat{C}_t$.} Because agent $j$ was allowed to switch to the $t$-th cluster, we must have $t \in V_j$. Then, \Cref{algln:dest-cluster-not-hurt} in \Cref{alg:multi-metric-algorithm} ensures that $\distc(i,x_t) + \distm(i,j) \le c \cdot r_t$; this is the condition that forbids an agent from joining a cluster if it would significantly hurt a ``core member'' of that cluster. 

    This concludes the proof.
\end{proof}

The next lemma shows that if agent $i$ switches to the $t^*$-th cluster during Phase 2, its final loss $\lossa_i(\calX)$ would be bounded by $\Phi(i,t^*)$.   
\begin{lemma}\label{lem:loss-upper-bound-2}
    For each $i \in \calN$, if $\calX(i) = t^* \neq \hat{\calX}(i)$, then 
    \[\lossa_i(\calX) = \lossa_i(C_{t^*},x_{t^*}) \leq \Phi(i,t^*) = \distc(i,x_{t^*}) + c \cdot r_{t^*} + \min_{j \in \hat{C}_{t^*}}(\distm(i,j) - \distc(j,x_{t^*})).
    \]
\end{lemma}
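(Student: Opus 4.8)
The plan is to unfold the dual metric loss as $\lossa_i(C_{t^*},x_{t^*}) = \distc(i,x_{t^*}) + \max_{j \in C_{t^*}}\distm(i,j)$ and reduce the whole statement to a single per-agent inequality: for every $j \in C_{t^*}$,
\[
\distm(i,j) \;\le\; c\cdot r_{t^*} + \min_{j' \in \hat{C}_{t^*}}\bigl(\distm(i,j') - \distc(j',x_{t^*})\bigr).
\]
Once this is established for all $j \in C_{t^*}$, taking the maximum over $j$ and adding $\distc(i,x_{t^*})$ reproduces exactly $\Phi(i,t^*)$, which is what the lemma claims. Note that $\hat{C}_{t^*} \neq \emptyset$ (Phase~1 only ever adds nonempty clusters), so the minimum on the right-hand side is well-defined, and crucially it only references the Phase~1 cluster $\hat{C}_{t^*}$, which is why the bound does not depend on the Phase~2 decisions of other agents.

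To prove the per-agent inequality I would split on whether $j$ is an original member of the $t^*$-th Phase~1 cluster. If $j \in \hat{C}_{t^*} \cap C_{t^*}$, then the definition $r_{t^*} = \max_{i' \in \hat{C}_{t^*}} \lossa_{i'}(\hat{C}_{t^*},x_{t^*})$ gives $\distc(j,x_{t^*}) \le \lossa_j(\hat{C}_{t^*},x_{t^*}) \le r_{t^*} \le c \cdot r_{t^*}$ (using $c \ge 1$), while trivially $\distm(i,j) - \distc(j,x_{t^*}) \ge \min_{j' \in \hat{C}_{t^*}}(\distm(i,j') - \distc(j',x_{t^*}))$ since $j \in \hat{C}_{t^*}$; combining the two yields the desired bound. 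If instead $j \in C_{t^*} \setminus \hat{C}_{t^*}$ — a case that also covers $j = i$ itself, since $\hat{\calX}(i) \neq t^*$ forces $i \notin \hat{C}_{t^*}$ — then $j$ switched into the $t^*$-th cluster in Phase~2, so $t^* \in V_j$, and the defining condition of $V_j$ on \Cref{algln:dest-cluster-not-hurt} gives $\distc(j',x_{t^*}) + \distm(j',j) \le c \cdot r_{t^*}$ for every $j' \in \hat{C}_{t^*}$. Applying the triangle inequality for $\distm$ through an arbitrary $j' \in \hat{C}_{t^*}$,
\[
\distm(i,j) \;\le\; \distm(i,j') + \distm(j',j) \;\le\; \distm(i,j') + c\cdot r_{t^*} - \distc(j',x_{t^*}),
\]
and minimizing the right-hand side over $j' \in \hat{C}_{t^*}$ gives exactly the claimed inequality.

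Both cases produce the same upper bound on $\distm(i,j)$, so the reduction goes through and the lemma follows. I do not expect a genuine obstacle here; the only things that need care are (i) correctly picking the direction of the triangle inequality in the second case so that the $\distc(j',x_{t^*})$ term ends up with the right sign to match $\Phi$, and (ii) observing that every quantity used in the bound ($r_{t^*}$, $\hat{C}_{t^*}$, $x_{t^*}$, and the membership test $t^* \in V_j$) is frozen at the end of Phase~1, which is precisely the point that makes $\Phi(i,t^*)$ an order-independent and switch-independent upper bound on agent $i$'s final loss.
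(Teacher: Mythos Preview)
Your overall reduction and your treatment of the case $j \in C_{t^*}\setminus \hat{C}_{t^*}$ are correct and match the paper's argument. However, your case $j \in \hat{C}_{t^*}\cap C_{t^*}$ has a genuine gap: the two facts you invoke cannot be ``combined'' to give the desired upper bound. Writing $M = \min_{j'\in\hat{C}_{t^*}}(\distm(i,j')-\distc(j',x_{t^*}))$, you need $\distm(i,j)\le c\cdot r_{t^*}+M$. Your fact (A) is $\distc(j,x_{t^*})\le c\cdot r_{t^*}$ and your fact (B) is $\distm(i,j)-\distc(j,x_{t^*})\ge M$. But (B) is a \emph{lower} bound on $\distm(i,j)$, namely $\distm(i,j)\ge M+\distc(j,x_{t^*})$; adding (A) to it gives nothing in the direction you want. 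Indeed, replacing $M$ on the right-hand side of the target by the larger quantity $\distm(i,j)-\distc(j,x_{t^*})$ only weakens the inequality you would be proving, so establishing (A) does not suffice.

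The fix is to run the \emph{same} triangle-inequality argument you use in the second case, routing through the minimizer $j^*\in\hat{C}_{t^*}$ of $\distm(i,\cdot)-\distc(\cdot,x_{t^*})$. Since $j,j^*\in\hat{C}_{t^*}$, the Phase~1 loss bound for $j^*$ gives $\distc(j^*,x_{t^*})+\distm(j^*,j)\le r_{t^*}$, hence $\distm(j^*,j)\le r_{t^*}-\distc(j^*,x_{t^*})$. Then
\[
\distm(i,j)\;\le\;\distm(i,j^*)+\distm(j^*,j)\;\le\;\distm(i,j^*)+r_{t^*}-\distc(j^*,x_{t^*})\;=\;r_{t^*}+M\;\le\;c\cdot r_{t^*}+M,
\]
using $c\ge 1$ only at the very end. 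This is exactly how the paper handles this case; note that the key Phase~1 inequality used here bounds $\distm(j^*,j)$ in terms of $\distc(j^*,x_{t^*})$ (not $\distc(j,x_{t^*})$), which is why the minus sign lines up with the definition of $M$.
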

\begin{proof}
    Let $i' = \argmax_{j \in C_{t^*}} \distm(i,j)$ be the agent in $C_{t^*}$ that is the farthest from $i$. Then, we have $\lossa_i(C_{t^*},x_{t^*}) = \distc(i,x_{t^*}) + \distm(i,i')$. We want to show that this is at most $\Phi(i,t^*)$. Let $j^* = \argmin_{j \in \hat{C}_{t^*}}(\distm(i,j) - \distc(j,x_{t^*}))$, so $\Phi(i,t^*) = \distc(i,x_{t^*}) + c \cdot r_{t^*} + \distm(i,j^*) - \distc(j^*,x_{t^*})$. Let us simplify our goal:
    \begin{align}
        \distc(i,x_{t^*}) + \distm(i,i') &\le \distc(i,x_{t^*}) + c \cdot r_{t^*} + \distm(i,j^*) - \distc(j^*,x_{t^*})\nonumber\\
        \Leftrightarrow \distm(i,i') &\le c \cdot r_{t^*} + \distm(i,j^*) - \distc(j^*,x_{t^*}).\label{eqn:bound2-goal}
    \end{align}

    \emph{Case 1: $i' \in \hat{C}_{t^*}$.} Then, by the fact that $j^* \in \hat{C}_{t^*}$ is also true, we must have that $\distc(j^*,x_{t^*}) + \distm(j^*,i') \le \lossa_{j^*}(\hat{C}_{t^*},x_{t^*}) \le r_{t^*}$, which yields  $\distm(j^*,i') \leq r_{t^*} - \distc(j^*,x_{t^*})$.
    
    \emph{Case 2: $i' \notin \hat{C}_{t^*}$.} Then, since agent $i'$ switched to the $t^*$-th cluster during Phase 2, we must have $t^* \in V_{i'}$. \Cref{algln:dest-cluster-not-hurt} of \Cref{alg:multi-metric-algorithm} and the fact that $j^* \in \hat{C}_{t^*}$ ensure that $\distc(j^*,x_{t^*}) + \distm(j^*,i') \leq c \cdot r_{t^*}$, i.e., $\distm(j^*,i') \leq c \cdot r_{t^*} - \distc(j^*,x_{t^*})$. 
    
    Since $c \geq 1$, we get $\distm(j^*,i') \leq c \cdot r_{t^*} - \distc(j^*,x_{t^*})$ in both cases. Using the triangle inequality, we get 
    \[
    \distm(i,i') \leq \distm(i,j^*) + \distm(j^*,i') \leq \distm(i,j^*) + c \cdot r_{t^*} - \distc(j^*,x_{t^*}),
    \]
    which is the desired \Cref{eqn:bound2-goal}, concluding the proof.
\end{proof}

Here is what we know now. During Phase 2, if agent $i$ remains in its original cluster $\hat{\calX}(i) = t$, its loss remains upper bounded by $c \cdot r_t$ (\Cref{lem:loss-upper-bound-1}), and if it switches to a cluster $t' \in V_i$, its loss remains upper bounded by $\Phi(i,t')$. This explains why agent $i$ switches to the cluster $t^* \in \argmin_{t' \in V_i} \Phi(i,t')$, only if $\Phi(i,t^*) < c \cdot r_t$: the agent is simply picking the cluster that offers the smallest upper bound on its eventual loss. This is captured by the next lemma.   
\begin{lemma}\label{lem:loss-upper-bound-3}
    For each $i \in \calN$ and $t' \in V_i$, we have $\lossa_i(\calX) \leq \Phi(i,t') = \distc(i,x_{t'}) + c \cdot r_{t'} + \min_{j \in \hat{C}_{t'}}(\distm(i,j) - \distc(j,x_{t'}))$.
\end{lemma}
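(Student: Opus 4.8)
The plan is to reduce the claim to a single inequality about the cluster $t^* \in \argmin_{t' \in V_i}\Phi(i,t')$ that agent $i$ considers in Phase~2, and then case on whether agent $i$ actually switches. First I would check that $V_i \neq \emptyset$, so that $t^*$ is well-defined: taking $t' = \hat{\calX}(i)$, every $j \in \hat{C}_{\hat{\calX}(i)}$ satisfies $\distc(j,x_{\hat{\calX}(i)}) + \distm(j,i) \le r_{\hat{\calX}(i)} \le c \cdot r_{\hat{\calX}(i)}$ by the definition of $r_{\hat{\calX}(i)}$ and $c \ge 1$, hence $\hat{\calX}(i) \in V_i$. Since $t' \in V_i$ we have $\Phi(i,t^*) \le \Phi(i,t')$, so it suffices to show $\lossa_i(\calX) \le \Phi(i,t^*)$.

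Write $t = \hat{\calX}(i)$. Because an agent's Phase-2 move never changes the cluster membership of any other agent and each agent is processed once, agent $i$'s final cluster $\calX(i)$ is either $t$ or $t^*$. If $\calX(i) = t^*$, then $\lossa_i(\calX) \le \Phi(i,t^*)$ by \Cref{lem:loss-upper-bound-2}: although that lemma is stated under the hypothesis $t^* \neq \hat{\calX}(i)$, its proof only uses that $i$ ends up in cluster $t^*$ and that any agent in $C_{t^*} \setminus \hat{C}_{t^*}$ entered it via the test on \Cref{algln:dest-cluster-not-hurt}, so it applies verbatim even in the degenerate case $t^* = t$ (where the ``switch'' is vacuous). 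If instead $\calX(i) = t \neq t^*$, then the test on \Cref{algln:i-not-hurt} must have failed, i.e., $\Phi(i,t^*) \ge c \cdot r_t$; combined with \Cref{lem:loss-upper-bound-1}, which gives $\lossa_i(\calX) \le c \cdot r_t$, this yields $\lossa_i(\calX) \le \Phi(i,t^*)$. In either case $\lossa_i(\calX) \le \Phi(i,t^*) \le \Phi(i,t')$.

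There is no substantive obstacle here; the only point that needs a moment's care is the degenerate case $t^* = \hat{\calX}(i)$, where \Cref{lem:loss-upper-bound-1} alone is too weak (it bounds the loss by $c \cdot r_t$, which may exceed $\Phi(i,t)$) and one must fall back on the sharper argument inside the proof of \Cref{lem:loss-upper-bound-2}. Everything else is bookkeeping about which clusters are referenced---the $\hat{C}_{t'}$ from Phase~1, so that $V_i$ and $\Phi(i,\cdot)$ are independent of the order in which Phase~2 processes agents.
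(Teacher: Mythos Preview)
Your proof is correct and follows essentially the same two-case structure as the paper's proof (case on whether agent $i$ switches, invoking \Cref{lem:loss-upper-bound-2} or \Cref{lem:loss-upper-bound-1} respectively, then use $\Phi(i,t^*) \le \Phi(i,t')$). You are in fact more careful than the paper in two places: you explicitly verify $V_i \neq \emptyset$ by showing $\hat{\calX}(i) \in V_i$, and you correctly flag and resolve the degenerate sub-case $t^* = \hat{\calX}(i)$, which the paper's proof glosses over (its phrase ``this must be because $c \cdot r_t \le \Phi(i,t^*)$'' is not literally true when $t^* = t$ and the vacuous switch fires) by observing that the argument inside \Cref{lem:loss-upper-bound-2} does not actually use the hypothesis $t^* \neq \hat{\calX}(i)$.
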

\begin{proof}
    Let $t = \hat{\calX}(i)$ and $t^* \in \argmin_{t' \in V_i} \Phi(i,t')$. If agent $i$ switches to the $t^*$-th cluster during Phase 2 of \Cref{alg:multi-metric-algorithm}, then the claim follows from the fact that $\Phi(i,t^*) \le \Phi(i,t')$ for all $t' \in V_i$. If agent $i$ does not make a switch during Phase 2 and remains in the $t$-th cluster, this must be because $c \cdot r_t \le \Phi(i,t^*)$ due to \Cref{algln:i-not-hurt} of \Cref{alg:multi-metric-algorithm}. The claim again follows from $\Phi(i,t^*) \le \Phi(i,t')$ for all $t' \in V_i$.
\end{proof}

Similarly to the proof of \Cref{lem:loss-upper-bound-3}, we can see that if agent $i$ was in the $t$-th cluster at the end of Phase 1, then $c \cdot r_t$ also remains an upper bound on its eventual loss, regardless of whether the agent stays in the $t$-th cluster or switches to a $t^*$-th cluster during Phase 2. This is because the agent only switches to the $t^*$-th cluster if it provides a smaller upper bound on the loss.  
\begin{lemma}\label{lem:loss-upper-bound-4}
    For each $i \in \calN$, if $\hat{\calX}(i) = t$, then $\lossa_i(\calX) \leq c \cdot r_t$.
\end{lemma}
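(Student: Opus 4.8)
The statement is a short wrap-up of the machinery already assembled, so the plan is to do a two-case split on whether agent $i$ switches clusters during Phase~2, and in each case quote one of the preceding lemmas together with the explicit switching rule on \Cref{algln:i-not-hurt}.

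First I would fix $i \in \calN$ with $\hat{\calX}(i) = t$ and let $t^* \in \argmin_{t' \in V_i} \Phi(i,t')$ as computed for $i$ in Phase~2. (A preliminary sanity check: $t^*$ is well-defined because $V_i \neq \emptyset$; indeed $t \in V_i$, since for every $j \in \hat{C}_t$ we have $\distc(j,x_t) + \distm(j,i) \le \lossc_j(\hat{C}_t,x_t) + \lossm_j(\hat{C}_t,x_t) = \lossa_j(\hat{C}_t,x_t) \le r_t \le c\cdot r_t$, using $i \in \hat{C}_t$ and $c\ge 1$.) Now split:

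\emph{Case 1: agent $i$ does not switch}, i.e.\ $\calX(i) = \hat{\calX}(i) = t$. Then \Cref{lem:loss-upper-bound-1} immediately gives $\lossa_i(\calX) = \lossa_i(C_t,x_t) \le c\cdot r_t$, as desired.

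\emph{Case 2: agent $i$ switches to the $t^*$-th cluster}, so $\calX(i) = t^* \neq \hat{\calX}(i) = t$. By \Cref{lem:loss-upper-bound-2}, $\lossa_i(\calX) \le \Phi(i,t^*)$. On the other hand, the switch occurred only because the condition on \Cref{algln:i-not-hurt} was satisfied, which is precisely $\Phi(i,t^*) < c\cdot r_t$. Chaining these two bounds yields $\lossa_i(\calX) < c\cdot r_t$, which is stronger than the claim.

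Combining the two cases proves the lemma. I do not expect any real obstacle here: the content is entirely contained in \Cref{lem:loss-upper-bound-1,lem:loss-upper-bound-2} and the definition of the Phase~2 switching step; the only thing to be careful about is to invoke the correct lemma in each branch and to note $t \in V_i$ so that the argmin defining $t^*$ is over a nonempty set.
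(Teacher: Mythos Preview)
Your proposal is correct and follows essentially the same two-case argument as the paper: invoke \Cref{lem:loss-upper-bound-1} when $i$ stays in cluster $t$, and chain \Cref{lem:loss-upper-bound-2} with the switching condition on \Cref{algln:i-not-hurt} when $i$ switches. Your additional sanity check that $t \in V_i$ (so $t^*$ is well-defined) is a nice touch the paper omits but is not needed for the argument to go through.
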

\begin{proof}
    If $\calX(i) = t$ (i.e., the agent stays in the $t$-th cluster during Phase 2), then the claim reduces to \Cref{lem:loss-upper-bound-1}. If $\calX(i) = t^* \neq t$, then by \Cref{lem:loss-upper-bound-2}, we have $\lossa_i(\calX) \le \Phi(i,t^*)$. However, due to \Cref{algln:i-not-hurt}, agent $i$ only switches to the $t^*$-th cluster if $\Phi(i,t^*) < c \cdot r_t$, which implies $\lossa_i(\calX) \le c \cdot r_t$, as needed.
\end{proof}

Together, \Cref{lem:loss-upper-bound-3,lem:loss-upper-bound-4} give us two different ways to upper bound the loss of agent $i$, in terms of the $t$-th cluster that agent $i$ was part of at the end of Phase 1, and in terms of every $t'$-th cluster in $V_i$ that agent $i$ had the opportunity to switch to during Phase 2. With these bounds, we are ready to prove the central lemma. 

\begin{proof}{Proof of \Cref{lem:dual-metric-lemma}}
    
    Fix an $\alpha \geq 1$ and let $\beta = \frac{1}{2}(\alpha+\sqrt{\alpha(\alpha+8)}+2)$.

    For contradiction, assume that in some clustering $\calX = \set{(C_1,x_1),\ldots,(C_k,x_k)}$ returned by the algorithm, there is some deviating cluster $(S,y)$ such that for all $i \in S, \lossa_i(S,y) < \frac{1}{\beta}\lossa_i(\calX)$. Let $i = \argmin_{i \in S}\hat{\calX}(i)$ be the agent from this deviating cluster that was captured first during phase 1 of \Cref{alg:multi-metric-algorithm}. Let $\hat{\calX}(i) = t$. Let $j = \argmax_{j \in S}\lossa_j(S,y)$ be the agent from $S$ that has the highest loss among all agents in the deviating cluster.

    First note that it must be the case that $\lossa_j(S,y) \geq \frac{1}{\alpha}r_t$. If this were not the case, it would contradict the fact that $(\hat{C}_t,x_t)$ was a $\alpha$-approximation of the most cohesive cluster at that step of the algorithm.

    Next, we can show that $\hat{\calX}(j) \neq t$, and that $j$ did not have the opportunity to switch into the $t$-th cluster during the second phase of the algorithm.

    Showing that $\hat{\calX}(j) \neq t$ is easy. If $\hat{\calX}(j) = t$ were true, then by \Cref{lem:loss-upper-bound-4}, we must have that $\lossa_j(\calX) \leq c \cdot r_t$. Since $j$ wants to deviate to $(S,y)$, we also must have that $\lossa_j(S,y) < \frac{1}{\beta}\lossa_j(\calX) \leq \frac{c}{\beta}r_t$. By plugging in the correct values of $\beta$ and $c$, it can be seen that $\frac{c}{\beta} < \frac{1}{\alpha}$ whenever $\alpha \geq 1$, so this would contradict that $\lossa_j(S,y) \geq \frac{1}{\alpha}r_t$.

    With a bit more work, we can also show that $j$ did not have the opportunity to switch to the $t$-th cluster during the second phase of the algorithm. If $j$ did have this opportunity, then by \Cref{lem:loss-upper-bound-3} we must have that $\lossa_j(\calX) \leq \distc(j,x_t) + c \cdot r_t + \min_{i' \in \hat{C}_t}{\set{\distm(i',j) - \distc(i',x_t)}}$. Since $i \in \hat{C}_t$, this also gives us $\lossa_j(\calX) \leq \distc(j,x_t) + c \cdot r_t + \distm(i,j) - \distc(i,x_t)$.

    By the triangle inequality, we have that $\distc(j,x_t) \leq \distc(j,y) + \distc(i,y) + \distc(i,x_t)$. Plugging this into our bound on $j$'s loss gives us:

     \[\lossa_j(\calX) \leq c \cdot r_t + \distm(i,j) + \distc(j,y) + \distc(i,y)\]

     Notice that since $i$ and $j$ are both in $S$, $\distm(i,j) + \distc(j,y) \leq \lossa_j(S,y)$. Also, we must have that $\distc(i,y) \leq \lossa_i(S,y) < \frac{1}{\beta}\lossa_i(\calX) \leq \frac{c}{\beta}r_t$, where the last inequality follows from \Cref{lem:loss-upper-bound-4}. Plugging these in to our inequality gives us:

     \[\lossa_j(\calX) < \lossa_j(S,y) + \left(c + \frac{c}{\beta}\right) \cdot r_t.\]

     This implies that when $j$ deviates to $(S,y)$, it can only improve its loss by a factor of at most:

    \[\frac{\lossa_j(S,y) + \left(c + \frac{c}{\beta}\right) \cdot r_t}{\lossa_j(S,y)}.\]

    The fact that $\lossa_j(S,y) \geq \frac{1}{\alpha}r_t$ allows us to upper-bound this as

    \[\frac{\frac{1}{\alpha} \cdot r_t + \left(c + \frac{c}{\beta}\right) \cdot r_t}{\frac{1}{\alpha} \cdot r_t} = 1 + c \cdot \alpha \cdot \left(1+\frac{1}{\beta}\right).\]

    By substituting in the values of $c$ and $\beta$ in terms of $\alpha$, this comes out to be $\frac{1}{2}(\alpha+\sqrt{\alpha(\alpha+8)}+2)$, which gives a contradiction.

    Thus, it must be the case that $j$ does not have the opportunity to switch to the $t$-th cluster during the second phase of the algorithm. Therefore, there must be some $i' \in \hat{C}_t$ such that $\distc(i',x_t) + \distm(i',j) > c \cdot r_t$, which can be rearranged to get $\distm(i',j) > c \cdot r_t - \distc(i',x_t)$. As the final part of the proof, we can show that this also leads to a contradiction.

    By the triangle inequality, we have that $\distm(i',j) \leq \distm(i',i) + \distm(i,j)$. This can be rearranged to get $\distm(i,j) \geq \distm(i',j) - \distm(i',i)$. Plugging $\distm(i',j) > c \cdot r_t - \distc(i',x_t)$ into this inequality gives us $\distm(i,j) > c \cdot r_t - (\distc(i',x_t) +  \distm(i',i))$. Since $i$ and $i'$ are both in $\hat{C}_t$, we have that $\distc(i',x_t) + \distm(i',i) \leq \lossa_{i'}(\hat{C}_t,x_t) \leq r_t$, which gives us $\distm(i,j) > (c-1)r_t$. Since $i$ and $j$ are both in $S$, this tells us that $\lossa_i(S,y) \geq \distm(i,j) > (c-1)r_t$.

    Finally, since $i \in \hat{C}_t$, by \Cref{lem:loss-upper-bound-4} we know that $\lossa_i(\calX) \leq c \cdot r_t$. This means that when $i$ deviates to $(S,y)$, they are only improving their loss by a factor less than $\frac{c \cdot r_t}{(c-1)r_t}= \frac{1}{2}(\alpha+\sqrt{\alpha(\alpha+8)}+2)$, giving another contradiction and completing the proof.
\end{proof}

From \Cref{lem:dual-metric-lemma}, the proof of \Cref{thm:multi-metric} follows rather easily.

\begin{proof}[Proof of \Cref{thm:multi-metric}]
    Set $\alpha=1$ (i.e., let \Cref{alg:multi-metric-algorithm} iteratively compute a most cohesive cluster in Phase 1) in \Cref{lem:dual-metric-lemma}, which sets the growth factor to $c = \nicefrac{3}{2}$ and yields a $3$-approximation to the core.
\end{proof}

While \Cref{thm:multi-metric} ensures the existence of a clustering in the $3$-core, its computation relies on iteratively finding the most cohesive cluster, which is already NP-hard in the non-centroid paradigm~\cite{CMS24}. Luckily, \Cref{lem:dual-metric-lemma} yields a fairness guarantee even if we can find only an approximate most cohesive cluster during Phase 1 of \Cref{alg:multi-metric-algorithm}. 

Next, we show how to compute a $4$-approximate most cohesive cluster in polynomial time. The algorithm, \Cref{alg:approx-mcc}, works as follows. For each feasible cluster center $y \in \calM$, it runs the non-centroid greedy capture with a crafted distance metric $d_y$, given by $d_y(i,j) = \distm(i,j)+\distc(i,y)+\distc(j,y)$ for all $i,j \in \calN$, until it produces the first cluster $C_y$. Then, it returns $(C_{y^*},y^*)$ with the lowest maximum loss, i.e., with $y^* \in \argmin_{y \in \calM} \max_{i \in C_y} \ell_i(C_y,y)$. 

\begin{algorithm}[htbp]
\caption{$4$-Approximate Most Cohesive Clustering Algorithm}\label{alg:approx-mcc}
\KwIn{Set of agents $\calN$, set of feasible centers $\calM$, non-centroid metric $\distm$, centroid metric $\distc$, number of clusters $k$}
\KwOut{Cluster $(C,x)$}
\For{$y \in \calM$}{
    $\forall i,j \in \calN: d_y(i,j) \gets \distm(i,j) + \distc(i,y) + \distc(j,y)$\tcp*{Craft the metric}
    $C_y \gets$ First cluster produced by \Cref{alg:non-centroid-greedy-capture} with input $(\calN,d_y,k)$\tcp*{Run greedy capture}
    $r_y \gets \max_{i \in C_y} \ell_i(C_y,y)$\tcp*{$\lossa_i=$ dual metric loss}
}
$y^* \gets \argmin_{y \in \calM} r_y$\tcp*{Best center}
\Return{$(C_{y^*},y^*)$}
\end{algorithm}

\begin{restatable}{lemma}{MCCApprox}\label{lem:mcc-approx-lemma}
    For the dual metric loss, \Cref{alg:approx-mcc} computes a $4$-approximate most cohesive cluster in polynomial time.
\end{restatable}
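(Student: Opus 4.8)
The plan is to fix an arbitrary competing cluster $(S,z)$ with $|S| \ge \min(|\calN|,\nicefrac{n}{k})$, write $\mathrm{OPT} = \max_{i \in S} \lossa_i(S,z)$ for its cost, and show that the single iteration $y = z$ of \Cref{alg:approx-mcc} already produces $r_z \le 4\cdot\mathrm{OPT}$; since $z \in \calM$ and the algorithm returns the center minimizing $r_y$, this gives $\max_{i\in C_{y^*}}\lossa_i(C_{y^*},y^*) \le 4\cdot\mathrm{OPT}$, which is exactly the $4$-MCC property. Before that I would record the easy structural facts: $d_y(i,j) = \distm(i,j) + \distc(i,y) + \distc(j,y)$ is symmetric and obeys the triangle inequality (with slack), so \Cref{alg:non-centroid-greedy-capture} is well defined on it — the one oddity, that $d_y(i,i) = 2\distc(i,y)$ need not be $0$, plays no role below — and the first cluster produced by \Cref{alg:non-centroid-greedy-capture} always has at least $\min(|\calN|,\nicefrac{n}{k}) = \nicefrac{n}{k}$ members (since $\calN' = \calN$ at that moment), so the returned object is a legal cluster for the MCC definition. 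Polynomial time is then immediate: there are polynomially many centers $y$, and for each, the first ball of greedy capture under $d_y$ can be found by computing, for each agent $i$, its $\lceil \nicefrac{n}{k}\rceil$-th smallest $d_y$-distance and taking the minimizer over $i$.

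The first substantive step is to bound the radius $\delta_z$ at which greedy capture on $d_z$ opens its first cluster $C_z$, showing $\delta_z \le 2\cdot\mathrm{OPT}$. The argument is that once the growth parameter reaches $2\cdot\mathrm{OPT}$, the set $S$ centered at any one of its own agents already satisfies the ball-opening condition of \Cref{alg:non-centroid-greedy-capture}: for $i,i' \in S$ we have $d_z(i,i') = \distm(i,i') + \distc(i,z) + \distc(i',z) \le \lossa_i(S,z) + \distc(i',z) \le 2\cdot\mathrm{OPT}$, using $\distm(i,i') + \distc(i,z) \le \lossa_i(S,z)$ and $\distc(i',z) \le \lossa_{i'}(S,z)$, and $|S| \ge \nicefrac{n}{k}$; so some first cluster must have opened no later than radius $2\cdot\mathrm{OPT}$.

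The second substantive step converts this small radius into a small dual metric loss. Let $i^*$ be the agent at the center of the ball that formed $C_z$, so that $d_z(i^*,j) \le \delta_z$, i.e.\ $\distm(i^*,j) \le \delta_z - \distc(i^*,z) - \distc(j,z)$, for every $j \in C_z$. For any $j,j' \in C_z$ the triangle inequality for $\distm$ gives $\distm(j,j') \le \distm(j,i^*) + \distm(i^*,j') \le 2\delta_z - 2\distc(i^*,z) - \distc(j,z) - \distc(j',z)$, and hence $\lossa_j(C_z,z) = \distc(j,z) + \max_{j' \in C_z}\distm(j,j') \le 2\delta_z - 2\distc(i^*,z) - \min_{j'\in C_z}\distc(j',z) \le 2\delta_z \le 4\cdot\mathrm{OPT}$. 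Taking the maximum over $j \in C_z$ yields $r_z \le 4\cdot\mathrm{OPT}$, and since $y^* = \argmin_y r_y$ and $z \in \calM$, the returned cluster satisfies $r_{y^*} \le r_z \le 4\cdot\mathrm{OPT}$; as $(S,z)$ was an arbitrary cluster of size at least $\nicefrac{n}{k}$, this is the $4$-approximation.

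The one idea that makes everything work — and the place where care is needed — is the choice of the crafted metric in the second step: folding $\distc(i,y) + \distc(j,y)$ into the pairwise distance means a single greedy-capture radius $\delta$ simultaneously controls the mutual $\distm$-distances inside the captured cluster \emph{and} the $\distc$-distances of its members to $y$, so that two applications of the triangle inequality bound the full loss $\distc(j,z) + \max_{j'}\distm(j,j')$ by $2\delta$ rather than something larger. The subtlety to watch is exactly this constant: the $-2\distc(i^*,z)$ and $-\distc(j',z)$ terms that appear after the triangle-inequality chain are precisely what allow the leading $\distc(j,z)$ term to cancel, leaving $2\delta$. Everything else (the radius bound in the first step, the membership count, and polynomial time) is routine once the metric is in place.
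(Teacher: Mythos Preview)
Your proof is correct and follows essentially the same approach as the paper: both rely on the crafted metric $d_y$, run non-centroid greedy capture on it, show the resulting cluster is small in $d_y$-scale when $y$ is the optimal center, and convert that into a dual-metric loss bound of $4\cdot\mathrm{OPT}$. The paper's version argues by contradiction and invokes the $2$-MCC guarantee of \citet{CMS24} for \Cref{alg:non-centroid-greedy-capture} as a black box (diameter $\le 4r$, then $\ell_i(C_y,y)\le\max_j d_y(i,j)$), whereas you directly bound the opening radius by $2\cdot\mathrm{OPT}$ and track the $\distc$-cancellations in the triangle-inequality chain to get $\ell_j\le 2\delta_z$; this makes your argument self-contained but yields the same constant.
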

\begin{proof}
    For any instance, fix some $y \in \calM$ and let $d_y$ be a metric distance function over $\calN$, where for any distinct $i,j \in \calN, d_y(i,j) = \distm(i,j) + \distc(i,y) + \distc(j,y)$ (if $i=j$ then we assign $d_y(i,j)=0$).

    We can first show that $d_y$ indeed induces a metric space over $\calN$. To show this, we just need to show that the triangle inequality holds, as clearly $d_y$ will meet the other two metric axioms.
    
    For any $i,i',j \in \calN$, we should have that $d_y(i,j) \leq d_y(i,i') + d_y(i',j)$. By the definition of $d_y$, this is equivalent to saying $\distm(i,j) + \distc(i,y) + \distc(j,y) \leq \distm(i,i') + \distc(i,y) + \distc(i',y) + \distm(i',j) + \distc(i',y) + \distc(j,y)$. By the fact that $\distm$ is a metric, we have that $\distm(i,j) \leq \distm(i,i') + \distm(i',j)$. Therefore, we just need to show that $\distc(i,y) + \distc(j,y) \leq \distc(i,y) + \distc(i',y) + \distc(i',y) + \distc(j,y)$. Clearly, this holds as well, since $\distc(i,y) + \distc(j,y)$ appears on the right-hand side of the inequality.
    
    \Cref{alg:approx-mcc} runs the non-centroid greedy capture algorithm on the agents, using the distance metric $d_y$, for every $y \in \calM$, then considers that cluster in the original dual metric setting with $y$ as its center, and selects the most cohesive cluster found this way.
    
    For contradiction, assume that \Cref{alg:approx-mcc} does not find a $4$-approximation of the most cohesive cluster. Let $(S,y)$ be the most cohesive cluster for an instance inputted to \Cref{alg:approx-mcc}. Define $r = \max_{i \in S}\lossa_i(S,y)$ as the maximum loss felt by any agent in $S$. Assume that \Cref{alg:approx-mcc} returns the cluster $(C^*,x^*)$, where $r^* = \max_{i \in C^*}\lossa_i(C^*,x^*) > 4r$.
    
    Take the distance metric $d_y$, and let $C_y$ be the non-centroid cluster that \Cref{alg:approx-mcc} finds while using $d_y$ as its distance function. First, note that $\max_{i \in C_y}\lossa_i(C_y,y) \geq r^*$. Otherwise, \Cref{alg:approx-mcc} would have returned $(C_y,y)$ rather than $(C^*,x)$.

    Next, consider the maximum non-centroid loss with respect to $d_y$ for any agent in the member set $S$. We will have $\forall i \in S, \max_{j \in S}d_y(i,j) = \max_{j \in S}{\set{\distm(i,j) + \distc(i,y) + \distc(j,y)}} \leq \max_{j \in S}\distm(i,j) + \distc(i,y) + \max_{j \in S}\distc(j,y) \leq r + \max_{j \in S}\distc(j,y)$. For all $j \in S$, we must have that $\distc(j,y) \leq r$, which gives us $\max_{j \in S}d_y(i,j) \leq 2r$.

    \citet{CMS24} prove that in any non-centroid clustering instance, \Cref{alg:non-centroid-greedy-capture} achieves a $2$-approximation of the most cohesive cluster with respect to non-centroid loss for that instance. Since $C_y$ was found by running \Cref{alg:non-centroid-greedy-capture} on $\calN$ with distance function $d_y$, $C_y$ must be a $2$-approximation to the most cohesive non-centroid cluster in that instance. Particularly, this implies that the maximum non-centroid loss with respect to $d_y$ loss felt by any agent in $C_y$ must be no more than two times the maximum such loss felt by an agent in any member set $C' \subseteq \mathcal{N}$ with $|C'| \geq \nicefrac{n}{k}$. With respect to $S$, this implies that $\max_{i,j \in C_y} d_y(i,j) \leq 2\max_{i,j \in S} d_y(i,j) \leq 2(2r) = 4r$.

    Finally, note that for any member set of agents $C$, the maximum non-centroid loss for that member set felt by any agent $i \in C$ with respect to the metric $d_y$ will be an upper bound on the dual-metric loss agent $i$ will feel under the cluster $(C,y)$. To see this, note that $i$'s non-centroid loss under $C$ with respect to $d_y$ will be equal to $\max_{j \in C}d_y(i,j) = \max_{j \in C}(\distm(i,j) + \distc(i,y) + \distc(j,y)) \geq \max_{j \in C}\distm(i,j) + \distc(i,y) = \lossa_i(C,y)$. In the case of $C_y$, this tells us that $r^* \leq \max_{i \in C_y}\lossa_i(C_y,y) \leq \max_{i \in C_y}\max_{j \in C_y}d_y(i,j) \leq 4r$, which gives a contradiction, concluding the proof.
\end{proof}

Plugging this into \Cref{lem:dual-metric-lemma} gives us a constant approximation to the core in polynomial time.

\MultiMetricPoly*
\begin{proof}[Proof of \Cref{thm:multi-metric-poly}]
    From \Cref{lem:mcc-approx-lemma}, \Cref{alg:approx-mcc} computes a $4$-MCC cluster. Let \Cref{alg:multi-metric-algorithm} iteratively use it in Phase 1 and set $\alpha=4$ in \Cref{lem:dual-metric-lemma}. Then, the growth factor is set to $c=3+\sqrt3$ and we obtain a core approximation of $\frac{1}{2}(4+\sqrt{48}+2) = 3+2\sqrt3$. Each call to \Cref{alg:approx-mcc} runs in polynomial time, and it is clear that the rest of \Cref{alg:multi-metric-algorithm} also runs in polynomial time.
\end{proof}

\paragraph{Lower bounds:} One may wonder how good these approximations are. As stated previously, semi-centroid clustering is more challenging than both centroid and non-centroid clustering. Specifically, one can trivially import a lower bound from either paradigm by setting $\distm=0$ or $\distc=0$ (where a distance metric being zero means all pairwise distances are zero), which makes the dual metric loss coincide with the centroid or non-centroid loss, respectively. For the non-centroid loss, \citet{CMS24} do not offer any lower bounds; that is, the existence of an exact core clustering with respect to the (maximum distance) non-centroid loss remains an open question. However, \citet{CFLM19} establish a lower bound of $2$-core with respect to the centroid loss, which we can import. Somewhat surprisingly, we are unable to derive a lower bound better than $2$ despite reasonable effort. 

\subsection{Weighted Single Metric Loss}\label{sec:single-metric}

In this section, we focus on improving the core approximation guarantees for the restricted case of the weighted single metric loss. Recall that this is parametrized by a single metric $d$ and a parameter $\lambda \in [0,1]$ such that, for any $i \in \calN$ and cluster $(C,x)$, we have $\lossa_i(C,x) = \lambda \cdot \max_{j \in C} d(i,j) + (1-\lambda)\cdot d(i,x)$. We devise two algorithms that offer improvements.

The first algorithm takes non-centroid greedy capture (\Cref{alg:non-centroid-greedy-capture}), which grows balls around agents, and adapts it to the semi-centroid world by assigning each cluster to the cluster center nearest to the agent whose ball captured the cluster. The resulting efficient algorithm is presented as \Cref{alg:greedy-capture-greedy-centroid}, and achieves the following guarantee.

\begin{algorithm}[htbp]
\caption{Non-Centroid Greedy Capture With Greedy Centroid Selection}\label{alg:greedy-capture-greedy-centroid}
\KwIn{Set of agents $\calN$, set of feasible centers $\calM$, metric $d$, number of clusters $k$}
\KwOut{Clustering $\calX = \{(C_1,x_1),\ldots,(C_k,x_k)\}$}
Initialize: $\calN'\leftarrow\calN$, $\calX\leftarrow\emptyset$,$\delta\leftarrow 0$\;
\tcp{Uncaptured agents remain}
\While{$\mathcal{N}'\neq\emptyset$}{
Increase $\delta$ smoothly\;
\tcp{Non-centroid ball found}
\If{$\exists i\in \calN', C \subseteq \calN'$ s.t. $|C|\geq \min(|\calN'|, \nicefrac{n}{k})$ and $d(i, i')\leq\delta$ for all $i'\in C$}{
$x \gets \argmin_{x \in \calM}{d(i,x)}$\tcp*{Centroid closest to the ball's center}
$\calX \gets \calX \cup\{(C,x)\}$\;
$\mathcal{N}'\gets \mathcal{N}'\setminus C$\;
}
}
\Return{$\calX$}
\end{algorithm}

\begin{restatable}{lemma}{GreedyCaptureCore}\label{lem:greedy-capture-core}
    For the weighted single metric loss with parameter $\lambda \in (0,1]$, \Cref{alg:greedy-capture-greedy-centroid} finds a clustering in the $\frac{2}{\lambda}$-core in polynomial time.
\end{restatable}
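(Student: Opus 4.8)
The plan is to adapt the contradiction argument behind the $2$-core guarantee for non-centroid greedy capture (\Cref{alg:non-centroid-greedy-capture}), extended so that the extra centroid term in the weighted single metric loss is absorbed by the greedy centroid assignment that \Cref{alg:greedy-capture-greedy-centroid} performs. Suppose for contradiction that the returned clustering $\calX$ is not in the $\tfrac{2}{\lambda}$-core: there is a coalition $(S,y)$ with $|S|\ge n/k$ and $\tfrac{2}{\lambda}\lossa_i(S,y) < \lossa_i(\calX)$ for all $i\in S$. Let $i^*\in S$ be the first member of $S$ to be captured during the run, say at growth radius $\delta^*$ by the ball centered at an agent $a$ that captures a set $C$; then $i^*$'s cluster in $\calX$ is $(C,x)$ with $x\in\argmin_{x'\in\calM}d(a,x')$, and since $y\in\calM$ we have the key inequality $d(a,x)\le d(a,y)$.

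First I would upper bound $\lossa_{i^*}(\calX)=\lossa_{i^*}(C,x)$. Every $j\in C$ lies within $\delta^*$ of $a$, so $d(i^*,j)\le d(i^*,a)+d(a,j)\le 2\delta^*$, which bounds the non-centroid part; and for the centroid part $d(i^*,x)\le d(i^*,a)+d(a,x)\le d(i^*,a)+d(a,y)\le 2\delta^*+d(i^*,y)$. Summing the two weighted parts telescopes to $\lossa_{i^*}(\calX)\le 2\delta^*+(1-\lambda)\,d(i^*,y)$.

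Next I would bound the capture radius by $\delta^*\le \max_{j\in S}d(i^*,j)$, using the standard greedy-capture timing observation: because $i^*$ is the first agent of $S$ to be removed, all of $S$ (which has at least $n/k$ agents) is still uncaptured at every radius below $\delta^*$; hence, were $\delta^*$ larger than $\max_{j\in S}d(i^*,j)$, the ball centered at $i^*$ would already witness a valid capture of a subset of $S$ at radius $\max_{j\in S}d(i^*,j)<\delta^*$, so some member of $S$ would have been captured before $\delta^*$, contradicting the choice of $i^*$. Combining the two bounds gives $\lossa_{i^*}(\calX)\le 2\max_{j\in S}d(i^*,j)+(1-\lambda)d(i^*,y)$, whereas $\tfrac{2}{\lambda}\lossa_{i^*}(S,y)=2\max_{j\in S}d(i^*,j)+\tfrac{2(1-\lambda)}{\lambda}d(i^*,y)$; since $\tfrac{2(1-\lambda)}{\lambda}\ge 1-\lambda$ whenever $\lambda\in(0,1]$, this yields $\lossa_{i^*}(\calX)\le\tfrac{2}{\lambda}\lossa_{i^*}(S,y)$, contradicting the deviation inequality for $i^*$. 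Polynomial running time is immediate: the growth process has only $O(n^2)$ distance breakpoints and each step does one $\argmin$ over $\calM$.

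I expect the main obstacle to be exactly the control of the centroid term $(1-\lambda)d(i^*,y)$ that appears because the algorithm snaps each captured ball to its nearest feasible centroid instead of using $y$: the inequality $d(a,x)\le d(a,y)$ makes the detour through the ball center cost only one extra $d(i^*,y)$, and the elementary fact $\tfrac{2(1-\lambda)}{\lambda}\ge 1-\lambda$ (equivalently $\lambda\le 2$, which is also why $\lambda>0$ is needed for a finite bound) is precisely what makes the $\tfrac{2}{\lambda}$ slack big enough to cover it. The only other delicate point is phrasing the greedy-capture timing argument correctly when several captures may occur at the same radius, which is handled as in \citet{CMS24}.
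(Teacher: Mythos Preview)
Your argument is correct and follows essentially the same route as the paper's proof: pick the first-captured agent $i^*\in S$, bound $\lossa_{i^*}(\calX)\le 2\delta^* + (1-\lambda)d(i^*,y)$ via the triangle inequality and the greedy centroid choice $d(a,x)\le d(a,y)$, then use the greedy-capture timing observation $\delta^*\le \max_{j\in S}d(i^*,j)$ to compare with $\tfrac{2}{\lambda}\lossa_{i^*}(S,y)$. The only cosmetic difference is that the paper phrases the final step as bounding the ratio $\frac{2r_t+(1-\lambda)d(i,y)}{(1-\lambda)d(i,y)+\lambda r_t}\le \tfrac{2}{\lambda}$, whereas you compare the two sides term by term; both are equivalent.
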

\begin{proof}
    For contradiction, assume this is false, and there exists some instance such that the greedy clustering solution $\calX = \{(C_1,x_1),\ldots,(C_k,x_k)\}$ is not in the $\frac{2}{\lambda}$-core. This means there exists some deviating cluster $(S,y)$ such that for all $i \in S$, $\frac{2}{\lambda} \cdot \lossa_i(S,y) < \lossa_i(\calX)$.

    Let $i = \argmin_{i' \in S}\calX(i')$ be the first agent from $S$ captured by \Cref{alg:greedy-capture-greedy-centroid}. Let $\calX(i) = t$. Let $j \in C_t$ be the agent that ``captured'' the cluster $C_t$ (meaning that at the time when $C_t$ was captured by the algorithm, $d(i',j) \leq \delta$ for all $i' \in C_t$ was true), and let $r_t = \max_{i \in C_t}d(i,j)$ be the value of $\delta$ at the point when $C_t$ was captured.

    Due to the triangle inequality, we have that for any $i' \in C_t, d(i,i') \leq d(i,j) + d(j,i') \leq 2r_t$. With this, we have that $\lossa_i(\calX) \leq (1-\lambda)d(i,x_t) + \lambda 2r_t$. 

    Note that there must be some $i' \in S$ with $d(i,i') \geq r_t$. If this were not the case, then \Cref{alg:greedy-capture-greedy-centroid} would have captured $S$ prior to capturing $C_t$, with $i$ being the ``capturing'' agent of $S$. Thus, we also have that $\lossa_i(S,y) \geq (1-\lambda)d(i,y) + \lambda r_t$.

    Due to the way that \Cref{alg:greedy-capture-greedy-centroid} greedily selects a center for each cluster, it must be the case that $d(j,y) \geq d(j,x_t)$, since otherwise, $y$ would have been selected as the center for $C_t$ rather than $x_t$. From the triangle inequality, we have that $d(i,x_t) \leq d(i,j) + d(j,x_t) \leq r + d(j,y) \leq r + d(i,j) + d(i,y) \leq 2r + d(i,y)$. This allows us to conclude that $d(i,x_t) \leq 2r + d(i,y)$

    The above inequality allows us to bound $\lossa_i(\calX) \leq (1-\lambda)d(i,x_t) + \lambda 2r \leq (1-\lambda)(2r + d(i,y)) + \lambda 2r = 2r + (1-\lambda)d(i,y)$. This means that $i$ improves their loss by deviating by at most a factor of:

    \[\frac{2r_t+(1-\lambda)d(i,y)}{(1-\lambda)d(i,y) + \lambda r_t} = \frac{(2-\lambda)r_t}{(1-\lambda)d(i,y) + \lambda r_t}+1 \leq\frac{(2-\lambda)r_t}{\lambda r_t}+1 = \frac{2}{\lambda}\]

    This gives the desired contradiction.
\end{proof}

The second algorithm, presented formally as \Cref{alg:weighted-algorithm}, is similar to \Cref{alg:multi-metric-algorithm}: in Phase 1, agents are partitioned into an initial clustering, and in Phase 2, agents are allowed to switch to a different cluster that reduces their loss and does not significantly increase the loss of the agents in that cluster. However, there are two key differences. First, \Cref{alg:weighted-algorithm} uses the centroid greedy capture subroutine instead of the (approximate) most cohesive cluster subroutine used by \Cref{alg:multi-metric-algorithm} to iteratively find clusters in Phase 1. Second, due to having a single metric, the switching condition is much simpler and leads to better bounds for some values of $\lambda$. A careful analysis yields the following guarantee.

\begin{algorithm}[htbp]
\caption{Semi-Ball-Growing Algorithm}
\label{alg:weighted-algorithm}

\KwIn{Set of agents $\calN$, set of feasible centers $\calM$, distance metric $d$, parameter $\lambda \in [0,1]$, number of clusters $k$, growth factor $c \geq 1$}
\KwOut{Clustering $\calX = \{(C_1,x_1),\dots,(C_k,x_k)\}$}
Initialize: $\mathcal{N}'\leftarrow\mathcal{N}$, $\hat{\calX} \gets \emptyset$, $\delta\leftarrow 0$, $t \gets 0$\;
\tcp{Phase 1: Build a tentative clustering via centroid greedy capture}
\While(\tcp*[f]{Uncaptured agents remain}){$\mathcal{N}'\neq\emptyset$}{
Increase $\delta$ smoothly\;
\If{$\exists x_t \in \calM, \hat{C}_t \subseteq \calN'$ s.t. $|\hat{C}_t|\geq \min(|\calN'|, \nicefrac{n}{k})$ and $d(i, x_t)\leq\delta$ for all $i\in \hat{C}_t$}{
$\hat{\calX} \leftarrow \hat{\calX}\cup\{(\hat{C}_t,x_t)\}$\;
$r_t \gets \delta$\;
$\calN'\leftarrow \calN' \setminus \hat{C}_t$\;
$t \gets t + 1$
}
}
\tcp{Phase 2: Selectively allow agents to switch clusters}
$\forall t \in [k], C_t \gets \hat{C}_t$\;
\For{$i \in \calN$}{
    \tcp{Allowed cluster with the best upper bound on loss}
    $t' \gets \argmin_{t \in [k] : d(i,x_t) \leq c \cdot r_t}\{(1-\lambda) \cdot d(i,x_t) + c \cdot \lambda \cdot 2r_t\}$\;\label{line:upper-bound-judgement}
    $t^* \gets \hat{\calX}(i)$\;
    \tcp{Switch if it improves the upper bound on loss}
    \If{$(1-\lambda) \cdot d(i,x_{t'}) + c \cdot \lambda \cdot 2r_{t'} < d(i,x_{t^*}) + c \cdot \lambda \cdot r_{t^*}$}{
        $C_{t^*} \gets C_{t^*} \setminus \{i\}$\;
        $C_{t'} \gets C_{t'} \cup \set{i}$\;
    }
}
\Return{$\calX = \set{(C_1,x_1),\dots,(C_k,x_k)}$}
\end{algorithm}

\begin{restatable}{lemma}{SemiBall}\label{lem:semi-ball}
    For the weighted single metric loss with parameter $\lambda \in [0,1)$, \Cref{alg:weighted-algorithm} finds a clustering in the $\frac{\sqrt{2\lambda-11\lambda^2+13}+3-\lambda}{2-2\lambda}$-core in polynomial time.
\end{restatable}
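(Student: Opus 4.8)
The plan is to mirror the proof of \Cref{lem:dual-metric-lemma}, specialized to a single metric and to the fact that \Cref{alg:weighted-algorithm} builds its Phase-1 clustering by centroid greedy capture rather than by a most cohesive cluster subroutine. First I would record the Phase-2 loss bounds analogous to \Cref{lem:loss-upper-bound-1,lem:loss-upper-bound-2,lem:loss-upper-bound-3,lem:loss-upper-bound-4}. When cluster $\hat{C}_t$ is captured at radius $r_t$, each of its original members is within $d$-distance $r_t$ of $x_t$, and \Cref{line:upper-bound-judgement} only lets an agent switch into the $t$-th cluster if it is within $c\cdot r_t$ of $x_t$; hence after Phase 2 every agent of $C_t$ is within $c\cdot r_t$ of $x_t$. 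Consequently, if agent $i$ has home cluster $t$ and does not switch, then $\lossa_i(\calX)\le d(i,x_t)+c\lambda r_t\le (1+c\lambda)r_t$ (its non-centroid part is at most $\lambda(d(i,x_t)+c\cdot r_t)$), while if $i$ switches to $t'$ (so $d(i,x_{t'})\le c\cdot r_{t'}$) then $\lossa_i(\calX)\le (1-\lambda)d(i,x_{t'})+2c\lambda r_{t'}$. Both bounds are independent of the other agents' Phase-2 choices, so the switching rule guarantees that for every agent $i$ with home cluster $t$ we have $\lossa_i(\calX)\le (1+c\lambda)r_t$, and for every cluster $t'$ with $d(i,x_{t'})\le c\cdot r_{t'}$ we have $\lossa_i(\calX)\le (1-\lambda)d(i,x_{t'})+2c\lambda r_{t'}$.

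Next I would assume, for contradiction, that a coalition $(S,y)$ deviates from the returned clustering $\calX$ against the target factor $\beta=\frac{\sqrt{2\lambda-11\lambda^2+13}+3-\lambda}{2-2\lambda}$, and let $i\in\argmin_{i^*\in S}\hat{\calX}(i^*)$ be the agent of $S$ captured first in Phase 1, with home cluster $t$ and capture radius $r_t$. Since all of $S$ is uncaptured at every radius below $r_t$ and $|S|\ge n/k$, the standard greedy-capture timing argument (as in the proof of \Cref{lem:greedy-capture-core}) yields an agent $i'\in S$ with $d(i',y)\ge r_t$; otherwise the ball around $y$ would have opened and captured an agent of $S$ before radius $r_t$. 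Writing $j\in\argmax_{i^*\in S}\lossa_{i^*}(S,y)$ for the highest-loss member of the deviating coalition, this gives $\lossa_j(S,y)\ge \lossc_{i'}(S,y)=(1-\lambda)d(i',y)\ge (1-\lambda)r_t$, which plays the role of the $\alpha$-MCC guarantee $\lossa_j(S,y)\ge r_t/\alpha$ in \Cref{lem:dual-metric-lemma}, with $1-\lambda$ in place of $1/\alpha$.

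I would then split into two cases according to $j$, paralleling the case analysis in \Cref{lem:dual-metric-lemma} (the case $\hat{\calX}(j)=t$ being absorbed into the first case since then $d(j,x_t)\le r_t\le c\cdot r_t$). If $d(j,x_t)\le c\cdot r_t$, then $\lossa_j(\calX)\le (1-\lambda)d(j,x_t)+2c\lambda r_t$; using $d(j,x_t)\le d(j,y)+d(i,y)+d(i,x_t)\le d(j,y)+d(i,y)+r_t$, then $(1-\lambda)d(j,y)=\lossc_j(S,y)\le \lossa_j(S,y)$ and $(1-\lambda)d(i,y)=\lossc_i(S,y)\le \lossa_i(S,y)<\lossa_i(\calX)/\beta\le (1+c\lambda)r_t/\beta$, gives
\[
\lossa_j(\calX) < \lossa_j(S,y) + \left(\tfrac{1+c\lambda}{\beta}+1-\lambda+2c\lambda\right)r_t,
\]
so $j$'s improvement factor is at most $1+\bigl(\tfrac{1+c\lambda}{\beta}+1-\lambda+2c\lambda\bigr)/(1-\lambda)$. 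If instead $d(j,x_t)>c\cdot r_t$, then $d(i,j)\ge d(j,x_t)-d(i,x_t)>(c-1)r_t$, so $\lossa_i(S,y)\ge \lossm_i(S,y)\ge \lambda d(i,j)>\lambda(c-1)r_t$ while $\lossa_i(\calX)\le (1+c\lambda)r_t$, giving $i$'s improvement factor at most $\tfrac{1+c\lambda}{\lambda(c-1)}$. Equating each factor to $\beta$ produces two equations in $\beta,c,\lambda$; solving the second for $c=\tfrac{1+\beta\lambda}{\lambda(\beta-1)}$, substituting into the first, and clearing denominators leaves a quadratic in $\beta$ whose relevant root is exactly $\beta=\frac{\sqrt{2\lambda-11\lambda^2+13}+3-\lambda}{2-2\lambda}$; with $c$ chosen accordingly, both cases strictly contradict the deviation. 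Polynomial running time is immediate since Phase 1 is centroid greedy capture and Phase 2 inspects each agent against each cluster once.

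The main obstacle will be the closing algebra: one must pick $c$ as the function of $\lambda$ that makes the two case bounds coincide and then verify that their common value is precisely the stated expression---in particular that the discriminant collapses to $2\lambda-11\lambda^2+13$---while keeping the deviation inequalities strict so each case is a genuine contradiction. A secondary point needing care is the $\lambda\to 0$ boundary (where $c\to\infty$ and the non-centroid bound degenerates), together with the greedy-capture timing argument for $d(i',y)\ge r_t$ under \Cref{alg:weighted-algorithm}'s rule that opened balls do not keep growing; both are handled exactly as in the proof of \Cref{lem:greedy-capture-core}.
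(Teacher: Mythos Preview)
Your proposal is correct and follows the same two-phase loss-bound plus two-case contradiction architecture as the paper. The only substantive difference is in Case~1. The paper picks $j=\arg\max_{j'\in S}d(j',y)$ (so that $d(j,y)\ge r_t$ directly), bounds \emph{both} $i$'s and $j$'s improvement factors as functions of the free ratio $z=r_t/d(i,y)$, and then takes $\max_z\min\{\cdot,\cdot\}$ to obtain the Case~1 bound in closed form before equating it with the Case~2 bound $(q+1)/(q-\lambda)$. You instead pick $j$ as the highest-loss member of $S$ (so that $\lossa_j(S,y)\ge(1-\lambda)r_t$ via your auxiliary agent $i'$), and you eliminate $d(i,y)$ immediately by feeding $i$'s deviation inequality $(1-\lambda)d(i,y)<(1+c\lambda)r_t/\beta$ into $j$'s bound---exactly the device used in \Cref{lem:dual-metric-lemma}. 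Both routes yield the same quadratic $3q^2+(1-5\lambda)q+(3\lambda^2-\lambda-1)=0$ (with $q=c\lambda$) and hence the same $\beta$; your route avoids the min--max optimization at the cost of having $\beta$ appear on both sides of the Case~1 equation, which is harmless since you solve the coupled system anyway.
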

\begin{proof}
    When running \Cref{alg:weighted-algorithm}, we set the growth factor $c = \frac{q}{\lambda}$, where $q = \frac{\sqrt{2\lambda -11\lambda^2 + 13} + 5\lambda -1}{6}$.

    Let $\calX = \set{(C_1,x_1),\ldots,(C_k,x_k)}$ be the clustering returned by \Cref{alg:weighted-algorithm}. Further, let $\hat{\calX} = \set{(\hat{C}_1,x_1), \ldots, (\hat{C}_k,x_k)}$ be the clustering found by \Cref{alg:weighted-algorithm} after it has completed phase 1.

    First, we will show that for any agent $i \in N$, if $\hat{\calX}(i) = t$, then $\lossa_i(\calX)$ will be no more than $(1-\lambda)d(i,x_{t}) + \lambda (d(i,x_{t}) + \frac{q}{\lambda}r_t) = d(i,x_{t}) + q \cdot r_t$. This can be seen by analyzing how the second phase of the algorithm works.

    In phase 2 of \Cref{alg:weighted-algorithm}, each agent $i \in \calN$ can choose to switch from the cluster they were originally assigned to by the algorithm, to a new cluster. However, they can only switch to a cluster $(C_{t'},x_{t'})$ if $d(i,x_{t'}) \leq cr_{t'}$. In the algorithm, $c$ is called the ``growth factor'', as the second phase can be thought of as the the radius of each of the cluster balls increasing by a factor of $c$, then each agent choosing to change to a new cluster that they are within the radius of. Also note that each agent judges whether some new cluster $(C_{t'},x_{t'})$ will provide it less loss than its current cluster based on the \emph{worst-case} loss $i$ could possibly feel after switching to $(C_{t'},x_{t'})$. This is reflected in the term $c \cdot \lambda \cdot 2r_{t}$ on \Cref{line:upper-bound-judgement} of \Cref{alg:weighted-algorithm}. Due to the fact that the non-centroid part of each agent's loss function will change based on which clusters other agents choose to switch into during the second phase of the algorithm, each agent will only switch to a new cluster if, regardless of how other agents switch, their loss under that new cluster is still guaranteed to be less than their current loss.
    
    For any $i \in \calN$ with $\hat{\calX}(i) = t$, if $i$ were to not switch to a new cluster in the second phase of the algorithm, then we can show that its final loss will be less than $d(i,x_{t}) + q \cdot r_t$. To see this, assume $i$ does not switch clusters in phase 2, and thus $\calX(i) = t$. Because we know that $i$ did not switch, it must be the case that the distance from $i$ to the centroid of its final cluster is $d(i,x_t) \leq r_t$. Further, from the way that agents were allowed to switch clusters in the second phase, we know that all for any $j \in C_t$, we must have $d(j,x_t) \leq c \cdot r_t$. By the triangle inequality, this means that for $j = \max_{j\in C_t}d(j,i)$, the agent from $C_t$ that is farthest away from $i$, we must have $d(i,j) \leq d(i,x_t) + d(j,x_t) \leq d(i,x_{t}) + c \cdot r_t$.

    Putting these together and plugging in that $c = \frac{q}{\lambda}$ we have that:

    \[\lossa_i(\calX) = (1-\lambda) \cdot d(i,x_t) + \lambda \cdot d(i,j) \leq (1-\lambda)\cdot d(i,x_{t}) + \lambda\cdot (d(i,x_{t}) + c \cdot r_t) = d(i,x_{t}) + q \cdot r_t\]

    Similarly, we can bound $i$'s loss if they do choose to switch clusters in phase 2. Assume now that $i$ does switch such that after the second phase of the algorithm $\calX(i) = t'$.
    
    By similar logic as above, we can upper bound $i$'s loss as:

    \[\ell_i(\calX) = (1-\lambda)d(i,x_{t'}) + \lambda \lossm_i(\calX) \leq (1-\lambda)d(i,x_{t'}) + \lambda 2cr_{t'} = (1-\lambda)d(i,x_{t'}) + 2qr_{t'}\]

    Here, the second transition follows from the fact that every agent in $C_{t'}$ is at most a distance of $cr_{t'}$ away from $x_{t'}$. This along with a similar triangle inequality argument as above allows us to bound $\lossm_i(C_{t'},x_{t'}) \leq 2cr_{t'}$. Notice that this is worse than the bound we were able to establish on $i$'s non-centroid loss in the case when it did not switch. This is because when $i$ did not switch, we knew that $d(i,x_t) \leq r_t$, where in the case where $i$ did switch, all we know is the weaker bound of $d(i,x_{t'}) \leq c \cdot r_t$.

    From line 13 of \Cref{alg:weighted-algorithm}, we can see that an agent $i$ will only switch from their first phase cluster to a new cluster if the upper-bounded loss from this new cluster is less than the upper-bounded loss of their original cluster. Thus, no matter what $i$ does during the second phase, we can guarantee that their loss will not be more than $d(i,x_{t}) + q \cdot r_t$.

    With this bound on each agent's loss established, we can prove the required core guarantee. Assume for contradiction that this is false, and that \Cref{alg:weighted-algorithm} produces a clustering $\calX$ such that there is some deviating cluster $(S,y)$ with $S \subseteq \calN, |S| \geq \ceil{\frac{n}{k}}$, and $y \in \calM$ with $\frac{\sqrt{2\lambda-11\lambda^2+13}+3-\lambda}{2-2\lambda}\lossa_i(S,y) < \lossa_i(\calX)$ for all $i \in S$.

    Let $i = \min_{i \in S}\hat{\cal{X}}(i)$ be the agent from $S$ that was captured first by phase 1 of the algorithm, and let $\hat{\calX}(i) = t$. We know that $d(i,x_{t}) \leq r_t$. Let $j = \argmax_{j \in S}d(j,y)$ be the agent in $S$ that is farthest away from $y$. We know that $d(j,y) \geq r_t$ must be true, otherwise, the first phase of the algorithm would have captured $(S,y)$ prior to capturing $(\hat{C}_{t},x_{t})$.

    First, we will consider the case where $d(j,x_{t}) \leq c \cdot r_t$. This means that in the phase 2 of the algorithm, $j$ will have the opportunity to switch to the $t$-th cluster. The maximum loss that $j$ could feel under the cluster centered at $x_t$ is $(1-\lambda)d(j,x_{t}) + 2qr_{t}$, and since $j$ will switch to the cluster with the lowest upper-bounded loss, this means that $j$ deviating to $(S,y)$ can improve its loss by at most a factor of $\frac{(1-\lambda)d(j,x_{t}) + 2qr_{t}}{(1-\lambda)d(j,y)}$. Note that this can be upper bounded as
    \begin{align}
        \frac{(1-\lambda)d(j,x_{t}) + 2qr_{t}}{(1-\lambda)d(j,y)}
        &\le \frac{(1-\lambda)(d(j,y) + d(i,y) + d(i,x_{t})) + 2qr_{t}}{(1-\lambda)d(j,y)}\nonumber\\
        &\le \frac{(1-\lambda)d(j,y) + (1-\lambda)(d(i,y) + r_t) + 2qr_{t}}{(1-\lambda)d(j,y)}\nonumber\\
        &\le \frac{(1-\lambda)r_t + (1-\lambda)(d(i,y) + r_t) + 2qr_{t}}{(1-\lambda)r_t}\nonumber\\
        &= \frac{(2(1-\lambda) + 2q)r_t + (1-\lambda)d(i,y)}{(1-\lambda)r_t}\nonumber\\
        &\le \frac{2(1-\lambda) + 2q}{(1-\lambda)} + \frac{d(i,y)}{r_t}.\label{eqn:semiball-bound1}
    \end{align}

    For $i$, we must have that $\lossm_i(S,y) \geq (1-\lambda) d(i,y)$, and $\lossm_i(\calX)$ cannot be more than $i$'s upper-bounded loss under $(\hat{C}_{t},x_{t})$, meaning that $i$ deviating cannot improve its loss by more than a factor of 
    \begin{align}
        \frac{d(i,x_{t}) + q \cdot r_t}{(1-\lambda) d(i,y)} \leq \frac{(1+q)r_t}{(1-\lambda) d(i,y)}.\label{eqn:semiball-bound2}
    \end{align}

    Combining \Cref{eqn:semiball-bound1,eqn:semiball-bound2}, we get the following bound:

    \begin{align*}
    \min\left\{ \frac{1+q}{1-\lambda} \cdot \frac{r_t}{d(i,y)}, \frac{2(1-\lambda) + 2q}{(1-\lambda)} + \frac{d(i,y)}{r_t}\right\} &\le \max_{z \geq 0} \min \left\{\frac{1+q}{1-\lambda}z, \frac{2(1-\lambda) + 2q}{(1-\lambda)} + \frac{1}{z}\right\}\\
    &= \frac{\sqrt{\lambda^2 - 3\lambda(q+1) + q^2 + 3q + 2} - \lambda + q + 1}{1 - \lambda},
    \end{align*}
    where the last transition can be derived by equating the two expressions in $z$ and solving the resulting quadratic equation in $z$. Substituting the chosen value of $q$ and simplifying gives us the following bound:
    \[\frac{\sqrt{2\lambda-11\lambda^2+13}+3-\lambda}{2-2\lambda}.\]

    This yields a contradiction. This shows that it must be true that $d(j,x_{t}) > c \cdot r_t$, meaning that $j$ cannot switch to the $t$-th cluster during the second phase of \Cref{alg:weighted-algorithm}.

    Using this along with the triangle inequality, we can say that $\frac{q}{\lambda}r_t < d(j,x_{t}) \leq d(j,i) + d(i,x_{t}) \leq d(j,i) + r_t$. Rearranging we get that $d(i,j) > (\frac{q}{\lambda} - 1)r_t = (\frac{q-\lambda}{\lambda})r_t$.

    With this, we can say that $\ell_i(S,y) \geq \lambda d(i,j) > \lambda \frac{q-\lambda}{\lambda}r_t = (q-\lambda)r_t$.

    Combining this with the fact that $\ell_i(\calX) \leq d(i,x_{t}) + q \cdot r_t \leq (q+1)r_t$, we get that in this case, $i$ deviating to $(S,y)$ cannot improve its loss by more than a factor of $\frac{q+1}{q-\lambda}$.

    Substituting in our chosen value of $q$ and simplifying once again gives us:

     \[\frac{\sqrt{2\lambda-11\lambda^2+13}+3-\lambda}{2-2\lambda}\]

     which is a contradiction, and concludes the proof.
\end{proof}

Combining the bounds from \Cref{lem:greedy-capture-core,lem:semi-ball}, together with the existential $3$-core (\Cref{thm:multi-metric}) and the polynomial-time $(3+2\sqrt3)$-core (\Cref{thm:multi-metric-poly}) guarantees that we can import from dual metric loss, yields the following existential and efficiently attainable bounds. 

\begin{restatable}{theorem}{WeightedLoss}\label{thm:weighted-loss}
     For the weighted single metric loss with parameter $\lambda \in [0,1]$, a clustering in the $\min\set{\frac{2}{\lambda},3}$-core always exists, and a clustering in the $\min\{\frac{2}{\lambda}, \frac{\sqrt{2\lambda-11\lambda^2+13}+3-\lambda}{2-2\lambda}\}$-core can be found in polynomial time.
\end{restatable}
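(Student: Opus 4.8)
The plan is to obtain \Cref{thm:weighted-loss} directly by combining the results already established, selecting which algorithm to run as a function of $\lambda$ (which, unlike for the bare greedy-capture algorithms, \emph{is} available as input in this setting). Throughout, recall that the weighted single metric loss is exactly the dual metric loss with $\distc = (1-\lambda)\cdot d$ and $\distm = \lambda\cdot d$, so every existential and polynomial-time guarantee proved for the dual metric loss applies here verbatim; in particular, \Cref{thm:multi-metric} gives an existential $3$-core and \Cref{thm:multi-metric-poly} gives a polynomial-time $(3+2\sqrt3)$-core.

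\emph{Existential bound.} If $\lambda \ge \nicefrac{2}{3}$, then $\nicefrac{2}{\lambda} \le 3$ and \Cref{lem:greedy-capture-core} (which requires only $\lambda \in (0,1]$) shows \Cref{alg:greedy-capture-greedy-centroid} outputs a clustering in the $\nicefrac{2}{\lambda}$-core. If $\lambda < \nicefrac{2}{3}$ (adopting the convention $\nicefrac{2}{0} = +\infty$, so this covers $\lambda = 0$), then $\min\set{\nicefrac{2}{\lambda},3} = 3$ and \Cref{thm:multi-metric} already yields a clustering in the $3$-core. Taking the stronger guarantee in each regime establishes the existential claim.

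\emph{Polynomial-time bound.} \Cref{lem:greedy-capture-core} gives a $\nicefrac{2}{\lambda}$-core clustering in polynomial time for every $\lambda \in (0,1]$, and \Cref{lem:semi-ball} gives an $f_\lambda$-core clustering in polynomial time for every $\lambda \in [0,1)$, where $f_\lambda = \frac{\sqrt{2\lambda-11\lambda^2+13}+3-\lambda}{2-2\lambda}$. For every $\lambda \in [0,1]$ at least one of these applies --- $\lambda = 0$ via \Cref{lem:semi-ball}, where $f_0 = \nicefrac{(\sqrt{13}+3)}{2}$ is finite, and $\lambda = 1$ via \Cref{lem:greedy-capture-core}, where $\nicefrac{2}{\lambda} = 2$ --- so for the given $\lambda$ we simply run whichever of \Cref{alg:greedy-capture-greedy-centroid,alg:weighted-algorithm} is applicable and has the smaller closed-form factor; this produces a clustering in the $\min\set{\nicefrac{2}{\lambda},f_\lambda}$-core in polynomial time. (One could also include the $(3+2\sqrt3)$ of \Cref{thm:multi-metric-poly} in the minimum, but a short check shows $\min\set{\nicefrac{2}{\lambda},f_\lambda} \le 3+2\sqrt3$ throughout $[0,1]$, so it never helps.)

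\emph{Where the work is.} There is no hard step here: all of the content lives in \Cref{lem:greedy-capture-core,lem:semi-ball,thm:multi-metric}. The only point requiring care is the boundary bookkeeping at $\lambda \in \set{0,1}$, where exactly one of the two lemmas is valid, together with the trivial verification that the surviving bound matches the stated minimum under the convention $\nicefrac{2}{0} = +\infty$.
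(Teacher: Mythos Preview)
Your proposal is correct and matches the paper's approach exactly: the paper states \Cref{thm:weighted-loss} without a separate proof, simply noting that it follows by combining \Cref{lem:greedy-capture-core,lem:semi-ball} with the dual-metric existential bound of \Cref{thm:multi-metric} (and the polynomial-time bound of \Cref{thm:multi-metric-poly}, which, as you observe, is dominated). Your handling of the $\lambda\in\{0,1\}$ boundary cases is more explicit than the paper's but entirely in line with it.
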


\paragraph{Lower bounds:} Similarly, we also derive two lower bounds on the core approximation under the weighted single metric loss. The first bound interpolates between $(\sqrt{5}+1)/2$ at $\lambda = 0$ (centroid) and $1$ at $\lambda = 1$ (non-centroid). Since \citet{CFLM19} prove a lower bound of $2$ for $\lambda = 0$ (centroid), which is better than $(\sqrt{5}+1)/2$, adapting their counterexample to semi-centroid clustering yields a better lower bound when $\lambda$ is close to $0$.

\begin{restatable}{lemma}{WeightedLower}\label{lem:weighted-lower-bound}
    For the weighted single metric loss with parameter $\lambda \in [0,1]$, no clustering algorithm can achieve a core approximation better than $\frac{\sqrt{\lambda^2 - 2\lambda + 5} - \lambda + 1}{2}$.
\end{restatable}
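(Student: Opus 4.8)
The plan is to exhibit, for every $\lambda \in [0,1]$ and every $\varepsilon > 0$, a weighted single metric instance in which no clustering lies in the $(\beta(\lambda)-\varepsilon)$-core, where $\beta(\lambda) = \frac{\sqrt{\lambda^2-2\lambda+5}-\lambda+1}{2}$; since a lower bound is a statement about a \emph{single} instance, it suffices to build a family of instances carrying a free scalar parameter such that the guaranteed blocking factor tends to $\beta(\lambda)$ as that parameter is sent to its extreme. Concretely, I would start from the centroid lower-bound instance of \citet{CFLM19} (which rules out the $\alpha$-core for $\alpha < 2$ in the pure centroid case $\lambda=0$), reinterpret it in the weighted single metric model by attaching the non-centroid term $\lambda\cdot\max_{j}d(i,j)$ to each agent's loss, and introduce a free distance ratio $p$ to be optimized. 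The instance should be chosen to admit a symmetry group so that clusterings collapse into a handful of equivalence classes.

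First I would use the definition of the $\alpha$-core with threshold $n/k$ to enumerate clusterings up to symmetry: a small number of ``pairings'' into the natural tight groups, plus the degenerate partitions (one oversized cluster; one cluster a near-cost-zero singleton). For each class I would write the losses $\lossa_i(\calX) = \lambda \max_{j \in C_{\calX(i)}} d(i,j) + (1-\lambda)\, d(i,x_{\calX(i)})$ achievable under the adversary's best center choice, exhibit a deviating coalition $(S,y)$ with $|S|\ge n/k$, and lower-bound $\min_{i\in S} \lossa_i(\calX)/\lossa_i(S,y)$. The new wrinkle relative to \citet{CFLM19} is that a deviating coalition now also pays $\lambda\cdot\mathrm{diam}(S)$, so coalitions must be kept cohesive; this is precisely what erodes the bound from the golden ratio at $\lambda=0$ down toward $1$ at $\lambda=1$ (and explains why the adaptation need not be tight at $\lambda=0$, where $2 > \beta(0)$). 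I expect the two binding cases to yield improvement bounds of the form $f(p)$ (increasing in $p$) and $g(p)$ (decreasing in $p$), with every other case dominated; equalizing them gives $\max_p \min\{f(p),g(p)\}$, and solving the resulting quadratic---the same kind of $\max_{z\ge 0}\min\{\cdot\}$ computation already carried out in the proof of \Cref{lem:semi-ball}---produces $\beta(\lambda)$. As a consistency check, $\beta(\lambda)$ is the positive root of $\beta^2 - (1-\lambda)\beta - 1 = 0$, i.e.\ $\beta = (1-\lambda) + 1/\beta$, a ``generalized golden ratio'' reducing to $(1+\sqrt5)/2$ at $\lambda=0$ and to $1$ at $\lambda=1$, which pins down the form $\sqrt{\lambda^2-2\lambda+5}=\sqrt{(1-\lambda)^2+4}$.

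The hard part will be designing the instance and making the case analysis airtight: the construction must be engineered so that \emph{no} clustering escapes all blocking coalitions---in particular, partitions that shelter agents in a near-cost-zero cluster, or that fuse several natural groups, must still be blockable---and so that the two cases balance to exactly the stated expression rather than something weaker. This typically requires a metric in which no feasible center serves two natural groups cheaply (pushing $\calM$ to be a continuum, or a carefully chosen finite superset of $\calN$) together with a check that the adversary's freedom in choosing cluster centers cannot accidentally neutralize the blocking coalition we propose. Verifying these structural properties, alongside the endpoint sanity checks at $\lambda=0$ and $\lambda=1$, is the bulk of the work.
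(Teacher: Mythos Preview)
Your high-level plan---parametrize an instance by a free ratio $p$, enumerate clusterings up to symmetry, exhibit a blocking coalition for each, and equalize the two binding improvement factors $f(p)$ and $g(p)$---is exactly what the paper does, and your observation that $\beta(\lambda)$ is the positive root of $\beta^2-(1-\lambda)\beta-1=0$ is correct and pins down the algebra cleanly.

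The one substantive divergence is the choice of instance. The paper does \emph{not} adapt the \citet{CFLM19} construction; that instance (with $\calM\neq\calN$ and no natural free ratio) is the one used separately in \Cref{lem:weighted-lower-bound-centroid}, where it yields the bound $\frac{2(1-\lambda)}{2\lambda+1}$ rather than $\beta(\lambda)$. For the present lemma the paper instead reuses the six-agent ``bowtie'' of \Cref{fig:bobw-impossibility} (with $\calN=\calM$, $k=3$, $d(a,b)=d(e,f)=1$, $d(a,c)=d(b,c)=d(d,e)=d(d,f)=p$, and the two triangles infinitely far apart), setting $p=\frac{\sqrt{\lambda^2-2\lambda+5}+\lambda-1}{2\lambda}$. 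The two binding cases are: (i) if $\{a,b,c\}$ is a cluster, $(\{a,b\},b)$ blocks with factor $(1-\lambda)+\lambda p$; (ii) if $\{a,b\}$ is a cluster, $(\{a,c\},a)$ blocks with factor $\frac{1}{\lambda p}$ for $a$ and infinite for $c$. Equating these gives $\lambda p\bigl[(1-\lambda)+\lambda p\bigr]=1$, which solves to the stated $p$ and value $\beta(\lambda)$. So your plan is right in shape, but the vehicle you name would land you at the wrong lemma; the intended instance already carries the parameter $p$ built in and makes the case analysis short.
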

\begin{proof}
    We will analyze the instance in \Cref{fig:bobw-impossibility}, with $p=\frac{\sqrt{\lambda^2-2\lambda+5}+\lambda-1}{2\lambda}$. We will limit our analysis to the section of the instance containing $\{a,b,c\}$, where, since $k=3$, we can assume without loss of generality that one cluster center is placed.

    If the cluster $(\{a,b,c\},b)$ is initially selected, then $a$ will have a loss of $(1-\lambda) d(a,b) + \lambda d(a,c) = (1-\lambda) + \lambda p$, and $b$ will have a loss of $(1-\lambda) 0 + \lambda p = \lambda p$. If $a$ and $b$ deviate to the cluster $(\{a,b\},b)$, then they will improve their loss by a factor of $(1-\lambda) + \lambda p$ and $p$ respectively. This deviation will will lead to a violation of core with a factor of $(1-\lambda) + \lambda p = \frac{\sqrt{\lambda^2 + 2\lambda + 5} - \lambda + 1}{2}$. The same argument can show that $(\{a,b,c\},a)$ would also not work as a cluster, and clearly, $(\{a,b,c\},c)$ would also not work, as $a$ and $b$'s losses would only get worse then before.

    If the cluster $(\{a,b\},b)$ is initially selected, then note that $a$ will have a loss of $(1-\lambda) d(a,b) + \lambda d(a,b) = 1$. In contrast, under the clustering $(\{a,c\},a)$, $a$ will have a loss of $(1-\lambda) 0 + \lambda d(a,c) = \lambda p$. This means that by deviating, $a$ will decrease its loss by a factor of $\frac{1}{\lambda p} = \frac{2}{\sqrt{\lambda^2-2\lambda+5}+\lambda-1} = \frac{\sqrt{\lambda^2 + 2\lambda + 5} - \lambda + 1}{2}$. Since we assume that only one point from $\{a,b,c\}$ is selected as a center, this deviation must also improve $c$'s loss by an infinite factor. Note that the same analysis holds if $a$ was selected as the centroid in the initial cluster rather than $b$. Again, clearly the center $c$ would not work with this cluster either, as it would just make both agents no better off.
    
    Next, note that the cluster $(\{a,c\},a)$ cannot be selected. $a$ would have a loss of $(1-\lambda) 0 + \lambda p = \lambda p$. But under the clustering $(\{a,b\},a)$ $a$'s loss would be $(1-\lambda) 0 + \lambda = \lambda$, so deviating would improve $a$'s loss by a factor of $p \geq \frac{\sqrt{\lambda^2 + 2\lambda + 5} - \lambda + 1}{2}$. $b$ would also improve their loss by an infinite factor. Clearly trying this cluster with any other center would not help, as it would only make $a$'s initial loss worse. This argument can also be used to show that the cluster with member set $\{b,c\}$ will not work.

    This eliminates the possibility of any cluster of the $3$ agents of size greater than $1$. Finally, we cannot have more than $2$ of these agents involved in a singleton cluster, as since $k=3$, all the agents on the other side would have to all belong to the same cluster, and we could apply a similar argument as the second paragraph of this proof to find a deviating set there.
\end{proof}

\begin{restatable}{lemma}{WeightedLowerCentroid}\label{lem:weighted-lower-bound-centroid}
    For the weighted single metric loss with parameter $\lambda \in [0,1]$, no clustering algorithm can achieve a core approximation better than $\frac{2(1-\lambda)}{2\lambda+1}$.
\end{restatable}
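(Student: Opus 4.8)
The plan is to adapt the instance that \citet{CFLM19} use to establish the $2$-core lower bound for pure centroid clustering and to re-score it under the weighted single metric loss $\lossa_i(C,x) = \lambda \max_{j \in C} d(i,j) + (1-\lambda) d(i,x)$. First, note that for $\lambda \ge 1/4$ the quantity $\frac{2(1-\lambda)}{2\lambda+1}$ is at most $1$, so the statement is vacuous (any core approximation is $\ge 1$ by definition); hence assume $\lambda < 1/4$. The centroid construction of \citet{CFLM19} has the feature that, for an appropriate number of clusters $k$, every clustering $\calX$ is forced to contain a ``bad'' cluster: there is a coalition $S$ with $|S| \ge n/k$ and a feasible center $y$ so that each $i \in S$ would strictly cut its distance-to-center, by a factor tending to $2$, by moving to $(S,y)$. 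We keep the same metric and the same $k$, and the work is to bound $\lossa_i(\calX)/\lossa_i(S,y)$ from below for the worst member $i$ of that coalition.

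Concretely I would: (i) write out the explicit distances of the \citet{CFLM19} gadget (a small metric together with a restricted set $\calM$ of feasible centers), introducing a free scaling factor $t = t(\lambda)$ on its ``short'' internal distances, in the same spirit as the parameter $p$ is used in the proof of \Cref{lem:weighted-lower-bound}; (ii) enumerate the clusterings up to the instance's symmetry---essentially, how the gadget is partitioned into clusters and where each center is placed---and in each case identify the agent of the forced bad cluster whose distance to its center \emph{and} whose distance to its farthest cluster-mate are both large, so that both the $(1-\lambda)$- and the $\lambda$-terms of $\lossa_i(\calX)$ are large; (iii) exhibit, for that agent, a sub-coalition of size $\ge n/k$ and a center for which both terms of $\lossa_i(S,y)$ are small; (iv) convert (ii) and (iii), via the triangle inequality, into a lower bound on the improvement ratio that is an explicit function of $\lambda$ and $t$, and then choose $t$ to equalize the worst two cases. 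Carrying out (iv) should simplify to $\frac{2(1-\lambda)}{2\lambda+1}$; as sanity checks, at $\lambda = 0$ the bound degenerates to the \citet{CFLM19} factor $2$, and it crosses below $1$ exactly at $\lambda = 1/4$.

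The main obstacle I anticipate is the non-centroid (max-distance) term of the \emph{deviating} coalition. Unlike in pure centroid clustering, enlarging $S$ to reach the feasibility threshold $|S| \ge n/k$ generally spreads it out and inflates $\lambda \max_{j \in S} d(i,j)$, which can eat the gain on the centroid side; dually, the lower bound needs the agents of $\calX$'s forced bad cluster to possess a genuinely large non-centroid radius, not merely a large distance to their center, which constrains the shape of the gadget and how many points it must carry. Making a single scaling factor $t$ push every case ratio up to the common target $\frac{2(1-\lambda)}{2\lambda+1}$, and checking that the enumeration of clusterings is exhaustive, is where the care goes; the remaining steps are routine triangle-inequality estimates.
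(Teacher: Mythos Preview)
Your plan starts from the right place---the paper also reuses the \citet{CFLM19} instance---but you are overcomplicating it in two ways, and the ``main obstacle'' you anticipate is not actually an obstacle.

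First, no scaling parameter $t$ is needed. The paper takes the \citet{CFLM19} instance verbatim: six agents in two infinitely-separated equilateral triangles of side $3$, six centers at distances $\{1,2,4\}$ from the vertices of one triangle each, and $k=3$. Since only three centers are chosen, one triangle---say $\{a_1,a_2,a_3\}$---has at most one local center selected; by the instance's symmetry one may assume it is $y_1$. The deviating coalition is $(\{a_1,a_2\},y_3)$, with no further case analysis.

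Second, you do not need the agents of the bad cluster in $\calX$ to have a large \emph{non-centroid} radius. The paper simply drops that term and lower-bounds $\lossa_{a_1}(\calX) \ge (1-\lambda)\,d(a_1,y_1) = 4(1-\lambda)$ and $\lossa_{a_2}(\calX) \ge (1-\lambda)\,d(a_2,y_1) = 2(1-\lambda)$ using the centroid component alone. On the deviation side it computes the full loss exactly: $\lossa_{a_1}(\{a_1,a_2\},y_3)=3\lambda+2(1-\lambda)$ and $\lossa_{a_2}(\{a_1,a_2\},y_3)=3\lambda+(1-\lambda)$. The binding agent is $a_2$, with improvement ratio $\tfrac{2(1-\lambda)}{2\lambda+1}$, and that is the bound. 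The non-centroid penalty $3\lambda$ in the deviation is precisely what produces the $2\lambda+1$ denominator; nothing is optimized.

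So your step (i), the second half of (ii), and the equalization in (iv) are all unnecessary: the bound falls out of a single direct calculation, with no $t$, no case balancing, and no attempt to control the non-centroid radius of $\calX$. Your route could presumably be pushed through, but it manufactures difficulties the problem does not have.
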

\begin{proof}
    Take the instance from Claim 1 of \citet{CFLM19}. $\calN = \set{a_1,\dots,a_6}, \calM = \set{y_1,\dots,y_6}, k=3$. The distances between agents are given as $d(a_1,a_2) = d(a_1,a_3) = d(a_2,a_3) = 3$ and $d(a_4,a_5) = d(a_4,a_6) = d(a_5,a_6) = 3$, with all other pairs having distance $\infty$. The distances between agents and centers are given by the below table (which can also be seen in claim 1 of \citet{CFLM19}).

    \begin{table}[h!]
    \centering
    \renewcommand{\arraystretch}{1.2}
        \begin{tabular}{|c||c|c|c|c|c|c|}
        \hline
         & $y_1$ & $y_2$ & $y_3$ & $y_4$ & $y_5$ & $y_6$ \\
        \hline\hline
        $a_1$ & 4 & 1 & 2 & $\infty$ & $\infty$ & $\infty$ \\
        $a_2$ & 2 & 4 & 1 & $\infty$ & $\infty$ & $\infty$ \\
        $a_3$ & 1 & 2 & 4 & $\infty$ & $\infty$ & $\infty$ \\
        $a_4$ & $\infty$ & $\infty$ & $\infty$ & 4 & 1 & 2 \\
        $a_5$ & $\infty$ & $\infty$ & $\infty$ & 2 & 4 & 1 \\
        $a_6$ & $\infty$ & $\infty$ & $\infty$ & 1 & 2 & 4 \\
        \hline
        \end{tabular}
    \end{table}

    \citet{CFLM19} show that in this instance, it is impossible to achieve a centroid core approximation of more than $2$.

    Fix some clustering $\calX = \set{(C_1,x_1),(C_2,x_2),(C_3,x_3)}$. From \citet{CFLM19}, there must be some pair of agents that would want to deviate in the centroid world for a core improvement of at least $2$. More specifically, since there are $3$ clusters, and the instance consists of two ``groups'' of agents/centers, which are separated by a distance of $\infty$, one of the groups must have at most $1$ center from that group selected as one of the centers in $\calX$.

    Without loss of generality, assume that the group $\set{a_1,a_2,a_3}$ only has one center from their group selected by $\calX$, and assume $y_1$ is that center (because of the symmetry of the instance, we could pick either group, and any center of that group, and an identical argument would hold). In this scenario, the other two centers from $\calX$ are distance $\infty$ away from $a_1$ and $a_2$. $a_1$ and $a_2$ are either in the cluster with $y_1$, or in a cluster with a center farther away from them than $y_1$. Therefore, we must have that $\lossa_{a_1}(\calX) \geq (1-\lambda)d(a_1,y_1) = 4(1-\lambda)$ and $\lossa_{a_2}(\calX) \geq (1-\lambda)d(a_2,y_1) = 2(1-\lambda)$.

    Now consider the deviating cluster $(S=\set{a_1,a_2},y_3)$. $a_1$ and $a_2$ will have the following losses under that cluster:

    \[\lossa_{a_1}(S,y_3) = \lambda d(a_1,a_2) + (1-\lambda)d(a_1,y_3) = 3\lambda + 2(1-\lambda)\]
    \[\lossa_{a_2}(S,y_3) = \lambda d(a_1,a_2) + (1-\lambda)d(a_2,y_3) = 3\lambda + 1(1-\lambda)\]

    These, along with our lower bounds on the agents' loss under $\calX$ allow us to conclude that $a_1$ would be able to make an improvement of at least $\frac{4(1-\lambda)}{3\lambda + 2(1-\lambda)}$ by deviating, and $a_2$ would see an improvement of at least $\frac{2(1-\lambda)}{3\lambda + 1(1-\lambda)}$.

    It can easily be verified that $\frac{4(1-\lambda)}{3\lambda + 2(1-\lambda)} \geq \frac{2(1-\lambda)}{3\lambda + 1(1-\lambda)}$ for all $\lambda \in [0,1]$, meaning that this would lead to a core violation of $\frac{2(1-\lambda)}{3\lambda + 1(1-\lambda)} = \frac{2(1-\lambda)}{2\lambda + 1}$, completing the proof.
\end{proof}

With the above two lower-bounds, we can make the following general statement.

\begin{restatable}{theorem}{WeightedLowerCombined}\label{lem:weighted-lower-bound-combined}
    For the weighted single metric loss with parameter $\lambda \in [0,1]$, no clustering algorithm can achieve a core approximation better than $\max\set{\frac{\sqrt{\lambda^2 - 2\lambda + 5} - \lambda + 1}{2},\frac{2(1-\lambda)}{2\lambda+1}}$.
\end{restatable}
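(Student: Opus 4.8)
The plan is to simply combine the two lower bounds already established in \Cref{lem:weighted-lower-bound} and \Cref{lem:weighted-lower-bound-centroid}. Both lemmas are statements about an arbitrary clustering algorithm: the former exhibits, for each $\lambda \in [0,1]$, an instance (a suitable scaling of \Cref{fig:bobw-impossibility}) on which no algorithm's output lies in the $\alpha$-core for any $\alpha < \frac{\sqrt{\lambda^2-2\lambda+5}-\lambda+1}{2}$, and the latter exhibits a different instance (the one from Claim~1 of \citet{CFLM19}) on which no algorithm's output lies in the $\alpha$-core for any $\alpha < \frac{2(1-\lambda)}{2\lambda+1}$.

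Concretely, I would fix an arbitrary clustering algorithm $\mathcal{A}$ and let $g_\lambda = \frac{\sqrt{\lambda^2-2\lambda+5}-\lambda+1}{2}$ and $h_\lambda = \frac{2(1-\lambda)}{2\lambda+1}$. \Cref{lem:weighted-lower-bound} applied to $\mathcal{A}$ yields an instance $I_1$ on which $\mathcal{A}$'s clustering admits a deviating coalition improving every member's loss by a factor at least $g_\lambda$; hence $\mathcal{A}$ does not achieve a core approximation better than $g_\lambda$. \Cref{lem:weighted-lower-bound-centroid} applied to $\mathcal{A}$ yields an instance $I_2$ on which the analogous statement holds with factor $h_\lambda$. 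Since $\mathcal{A}$ must handle both instances, its worst-case core approximation is at least $\max\{g_\lambda, h_\lambda\}$, which is exactly the claimed bound. As $\mathcal{A}$ was arbitrary, this proves the theorem.

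There is essentially no obstacle here: the content is entirely in the two component lemmas, and the present statement is just their conjunction phrased as a single bound. The only thing to be careful about in the write-up is that the two lower bounds come from genuinely different instances, so one should phrase the argument in terms of ``the worst case over instances'' rather than implying a single instance simultaneously forces both factors.

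\begin{proof}[Proof of \Cref{lem:weighted-lower-bound-combined}]
    Fix $\lambda \in [0,1]$ and any clustering algorithm $\mathcal{A}$. By \Cref{lem:weighted-lower-bound}, there is a weighted single metric loss instance with parameter $\lambda$ on which the clustering output by $\mathcal{A}$ is not in the $\alpha$-core for any $\alpha < \frac{\sqrt{\lambda^2-2\lambda+5}-\lambda+1}{2}$. By \Cref{lem:weighted-lower-bound-centroid}, there is a (possibly different) weighted single metric loss instance with parameter $\lambda$ on which the clustering output by $\mathcal{A}$ is not in the $\alpha$-core for any $\alpha < \frac{2(1-\lambda)}{2\lambda+1}$. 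Therefore the worst-case core approximation factor achieved by $\mathcal{A}$ is at least $\max\set{\frac{\sqrt{\lambda^2 - 2\lambda + 5} - \lambda + 1}{2},\frac{2(1-\lambda)}{2\lambda+1}}$. Since $\mathcal{A}$ was arbitrary, no clustering algorithm can achieve a core approximation better than this quantity.
\end{proof}
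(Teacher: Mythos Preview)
Your proposal is correct and takes essentially the same approach as the paper, which simply states that the theorem follows immediately from \Cref{lem:weighted-lower-bound} and \Cref{lem:weighted-lower-bound-centroid}. Your write-up is just a more explicit unpacking of that one-line argument.
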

\begin{proof}
    This follows immediately from \Cref{lem:weighted-lower-bound} and \Cref{lem:weighted-lower-bound-centroid}.
\end{proof}

\section{Fully Justified Representation}\label{sec:fjr}

Let us turn our attention to the weaker criterion of fully justified representation (FJR), a relaxation of the core originally introduced by \citet{PPS21} in the context of approval-based committee selection. While we can already achieve a (small) constant approximation to the core even under the dual metric loss (\Cref{thm:multi-metric,thm:multi-metric-poly}), we show that this relaxation offers several advantages. 

\begin{algorithm}[htbp]
\caption{Iterative $\alpha$-MCC Clustering}\label{alg:iterative-mcc}
\KwIn{Set of agents $\calN$, set of feasible centers $\calM$, agents' loss functions $(\lossa_i)_{i \in \calN}$, number of centers $k$, approximation factor $\alpha \ge 1$;}
\KwOut{Clustering $\calX = \{(C_1,x_1),\ldots,(C_k,x_k)\}$}
Initialize: $\calN'\leftarrow\calN$, $\calX\leftarrow\emptyset$\;
\tcp{Uncaptured agents remain}
\While{$\mathcal{N}'\neq\emptyset$}{
    $(C,x) \gets$ An $\alpha$-MCC cluster (with threshold $\nicefrac{n}{k}$) from $\calN', \calM$ \tcp*{Find an $\alpha$-MCC cluster}
    $\calX \gets \calX \cup \set{(C,x)}$\tcp*{Add it to the clustering}
    $\calN' \gets \calN' \setminus C$\tcp*{Mark its agents as captured} 
}
\Return{$\calX$}
\end{algorithm}
\citet{CMS24} study FJR for non-centroid clustering, and show that iteratively finding an $\alpha$-approximate most cohesive cluster produces an $\alpha$-FJR clustering, even under \emph{arbitrary loss functions}. The same result, with an almost identical proof, holds in our more general semi-centroid clustering setup. For completeness, we formally present the algorithm as \Cref{alg:iterative-mcc} and provide a complete proof. This gives us the existence of a $1$-FJR clustering under arbitrary loss functions, and polynomial-time computation of a $4$-FJR clustering under the dual metric loss when plugging in our polynomial-time algorithm for computing a $4$-MCC cluster (\Cref{lem:mcc-approx-lemma}).

\begin{restatable}{theorem}{FJRArbLoss}\label{thm:fjr-arb-loss}
    For arbitrary losses and $\alpha \ge 1$, \Cref{alg:iterative-mcc}, which iteratively finds an $\alpha$-approximate most cohesive cluster, returns an $\alpha$-FJR clustering. Hence, a $1$-FJR clustering exists for arbitrary losses and, due to \Cref{lem:mcc-approx-lemma}, a $4$-FJR clustering can be computed in polynomial time for the dual metric loss.
\end{restatable}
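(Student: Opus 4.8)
The plan is to mimic the FJR argument of \citet{CMS24} for non-centroid clustering, which carries over verbatim because it never uses any structure of the loss functions beyond the defining property of an $\alpha$-MCC cluster. First I would record that \Cref{alg:iterative-mcc} is well-defined and outputs at most $k$ clusters: each extracted cluster has at least $\min(|\calN'|,\nicefrac{n}{k})$ agents, so every cluster but possibly the last has size at least $\nicefrac{n}{k}$, and since there are $n$ agents in total the loop runs at most $k$ times. (If fewer than $k$ clusters are produced, pad with empty member sets and arbitrary centers; this changes no agent's loss.)

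Then I would argue by contradiction. Suppose the returned clustering $\calX = \set{(C_1,x_1),\ldots,(C_k,x_k)}$ is not $\alpha$-FJR, witnessed by a coalition $S$ with $|S| \ge \nicefrac{n}{k}$ and a center $y \in \calM$ such that $\alpha \cdot \lossa_i(S,y) < \min_{j \in S} \lossa_j(\calX)$ for every $i \in S$. Let $(C_t,x_t)$ be the first cluster extracted by the algorithm that contains a member of $S$, and fix $i^* \in S \cap C_t$. By the choice of $t$, at the iteration when $(C_t,x_t)$ was selected no agent of $S$ had yet been captured, so $S$ was entirely contained in the then-current uncaptured set $\calN'$; since $|S| \ge \nicefrac{n}{k} \ge \min(|\calN'|,\nicefrac{n}{k})$, the pair $(S,y)$ was a feasible candidate cluster at that step. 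Hence the $\alpha$-MCC guarantee applied to $(C_t,x_t)$ yields $\max_{j \in C_t} \lossa_j(C_t,x_t) \le \alpha \cdot \max_{i \in S} \lossa_i(S,y)$.

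The contradiction then falls out in two lines. Since $i^* \in C_t$, we have $\lossa_{i^*}(\calX) = \lossa_{i^*}(C_t,x_t) \le \max_{j \in C_t}\lossa_j(C_t,x_t) \le \alpha \cdot \max_{i \in S}\lossa_i(S,y)$. On the other hand, applying the FJR-violation inequality to a maximizer $i' \in \argmax_{i \in S}\lossa_i(S,y)$ gives $\alpha \cdot \max_{i \in S}\lossa_i(S,y) < \min_{j \in S}\lossa_j(\calX) \le \lossa_{i^*}(\calX)$, where the last step uses $i^* \in S$. Chaining these yields $\lossa_{i^*}(\calX) < \lossa_{i^*}(\calX)$, a contradiction. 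The two stated consequences are then immediate: taking $\alpha = 1$ with an (inefficient) exact-MCC subroutine — brute force over all subsets of size at least $\nicefrac{n}{k}$ and all centers in $\calM$ — gives a $1$-FJR clustering for arbitrary losses, and plugging the polynomial-time $4$-MCC subroutine of \Cref{lem:mcc-approx-lemma} into \Cref{alg:iterative-mcc} gives a polynomial-time $4$-FJR clustering for the dual metric loss.

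I would not expect a genuine obstacle here. The only points needing care are the bookkeeping that the algorithm produces at most $k$ clusters and the observation that $S$ survives intact up through the iteration that extracts $i^*$'s cluster, so that $(S,y)$ legitimately competes in the MCC comparison; everything else is a short inequality chain identical in form to the non-centroid case of \citet{CMS24}.
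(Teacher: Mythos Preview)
Your proposal is correct and follows essentially the same route as the paper's proof: both pick the first-captured agent $i^*\in S$, use that $S$ is still entirely uncaptured at that iteration so $(S,y)$ is a feasible MCC competitor, and derive the contradiction $\lossa_{i^*}(\calX)<\lossa_{i^*}(\calX)$ from the $\alpha$-MCC guarantee combined with the FJR-violation inequality. Your additional bookkeeping (at most $k$ clusters, padding with empty clusters) is a welcome clarification the paper leaves implicit.
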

\begin{proof}
    For contradiction, assume that \Cref{alg:iterative-mcc} does not always yield an $\alpha$-FJR clustering. Then, there exists an instance in which the clustering $\calX = \set{(C_1,x_1),\ldots,(C_k,x_k)}$ returned by \Cref{alg:iterative-mcc} admits a deviating cluster $(S,y)$ such that for all $i \in S$, 
    \begin{equation}\label{eqn:mcc-fjr}
        \textstyle\alpha \cdot \ell_i(S,y) < \min_{j \in S} \ell_j(\calX).
    \end{equation}

    Let $i^* \in \argmin_{i \in S} \calX(i)$ be the first agent in $S$ that was ``captured'' (i.e., removed from $\calN'$) by the algorithm, ties broken arbitrarily. Let $t = \calX(i^*)$. Then, from \Cref{eqn:mcc-fjr}, we get that for all $i \in S$,
    \[
        \textstyle\alpha \cdot \ell_i(S,y) < \min_{j \in S} \ell_j(\calX) \le \ell_{i^*}(\calX) \le \max_{i \in C_t} \ell_i(\calX) = \max_{i \in C_t} \ell_i(C_t,x_t).
    \]
    Hence, $\alpha \cdot \max_{i \in S} \ell_i(S,y) < \max_{i \in C_t} \ell_i(C_t,x_t)$. Since all agents in $S$ were uncaptured prior to $(C_t,x_t)$ being added to $\calX$, this contradicts $(C_t,x_t)$ being an $\alpha$-MCC cluster at that time. This concludes the proof that \Cref{alg:iterative-mcc} always yields an $\alpha$-FJR clustering.

    Since an MCC cluster exists by definition, plugging in $\alpha=1$ yields the existence of an FJR clustering. And for the dual metric loss, since a $4$-MCC cluster can be computed efficiently via \Cref{alg:approx-mcc} (\Cref{lem:mcc-approx-lemma}), we obtain that a $4$-FJR clustering can be computed in polynomial time. 
\end{proof}

In \Cref{sec:greedy-capture}, we noticed that for a clustering, being in the $\alpha$-core with respect to both centroid and non-centroid losses simultaneously is (trivially) a necessary condition for it to be in the $\alpha$-core with respect to the weighted single metric loss for all $\lambda \in [0,1]$.

For FJR, this is clearly still the case, and unfortunately, we can show that a simultaneous FJR approximation with respect to centroid and non-centroid losses induced by two different metrics is still infeasible. 

\begin{restatable}{theorem}{FJRBoBWImpossibility}\label{thm:fjr-bobw-impossibility}
    For any $\alpha \ge 1$ and $\beta \ge 1$, there exists an instance in which no clustering is both $\alpha$-FJR with respect to the centroid loss and $\beta$-FJR with respect to the non-centroid loss, when the two loss functions are allowed to use different metrics.
\end{restatable}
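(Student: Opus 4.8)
The plan is to exploit the one ingredient this theorem explicitly grants and the core-version \Cref{thm:bobw-impossibility} does not: two \emph{different} metrics. This is essential, because FJR-impossibility is strictly harder than core-impossibility (now \emph{every} clustering must fail at least one of two \emph{FJR} conditions, which are weaker than the corresponding core conditions), and because with a single metric one can always drop a cluster's center on top of an agent, zeroing that agent's centroid loss --- exactly the escape that kills the argument of \Cref{thm:bobw-impossibility} once the core is relaxed to FJR. So a fresh, more robust construction is needed, and it should use $\distm$ to pin down the partition and $\distc$ to defeat every resulting choice of centers.

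Concretely, I would fix $k \ge 3$, take $n = 2k$ agents split into $k$ ``partner pairs'' $G_t = \{a_t, b_t\}$ (so $n/k = 2$), and for $\distm$ put partners at distance $1$ and all other pairs of agents at distance $M$ for a large $M > \beta$. The first step is to show any clustering that is $\beta$-FJR with respect to the non-centroid loss must have partition exactly $\{G_1, \dots, G_k\}$. If not, some cluster mixes two groups; a short counting argument --- tallying intact-group clusters, singleton clusters, and mixed clusters against the budget of $k$ clusters --- forces some partner pair to have \emph{both} agents in mixed clusters, and that pair then deviates with non-centroid loss dropping from $M$ to $1$, a $\beta$-FJR violation. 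Conversely, the group partition itself admits no non-centroid FJR violation (every agent already has non-centroid loss $1$, and any coalition of size $\ge 2$ is either a whole $G_t$, gaining nothing, or spans groups, making things worse).

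For $\distc$ I would arrange the agents \emph{transversally} to the groups: put all $a_t$ into one ``blob'' $A$ and all $b_t$ into another blob $B$, introduce centers $y_A, y_B$ with $\distc(a_t, y_A) = \distc(b_t, y_B) = \epsilon$ (tiny), $\distc(a_t, y_B) = \distc(b_t, y_A) = 2-\epsilon$, $\distc(y_A, y_B) = 2$ (this extends to a genuine pseudometric --- it is just the shortest-path metric of the obvious weighted graph), and crucially set $\calM = \{y_A, y_B\}$. Now take any clustering with partition $\{G_1,\dots,G_k\}$; each $G_t$ gets $y_A$ or $y_B$, leaving exactly one of its agents ``happy'' (centroid loss $\epsilon$) and the other ``unhappy'' (loss $2-\epsilon$). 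Since $k \ge 3$, pigeonhole gives two groups picking the same center, say $y_B$; their two $a_t$'s lie in blob $A$ and are both unhappy, so that pair (size $2 = n/k$) deviates to $y_A$, cutting its loss from $2-\epsilon$ to $\epsilon$. Taking $\epsilon < 2/(\alpha+1)$ makes this an $\alpha$-FJR violation for the centroid loss. Combining: any $\beta$-FJR (non-centroid) clustering uses the group partition, and every such clustering has an $\alpha$-FJR (centroid) violation, so none is both.

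The main obstacle --- and the reason for this precise shape --- is the triangle-inequality ``collapse'': a center within $\epsilon$ of two agents makes them within $2\epsilon$ of each other, so the agents sharing a good deviating center must be mutually close, which forces any two agents of the \emph{same} group to be far apart in $\distc$. This confines the ``good'' deviating coalitions to blobs with at most one agent per group, and makes it impossible to equip many transversal coalitions with perfect centers; indeed any Latin-square-style family of $k$ transversal coalitions can be neutralized by the center-chooser, who controls one ``happy slot'' per group. The fix is to use only \emph{two} large blobs, each meeting every group once, so that one happy slot per group can never cover both blobs when $k \ge 3$. Checking that the two partial metrics extend to pseudometrics, and that the non-centroid counting argument genuinely rules out pathological configurations with many singleton clusters, are the remaining routine verifications.
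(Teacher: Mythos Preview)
Your proposal is correct and takes essentially the same approach as the paper: $k\ge 3$ partner pairs whose partition is forced by $\distm$, a two-element center set $\calM=\{y_A,y_B\}$ with the $a_t$'s near $y_A$ and the $b_t$'s near $y_B$ under $\distc$, and a pigeonhole on the center assignment to find two ``unhappy'' same-side agents who deviate. The paper's version is marginally simpler in that it fixes $k=3$ and uses $0/\infty$ distances in both metrics (so the non-centroid argument rules out any non-pair partition via infinite-to-zero deviations, and the centroid deviation is also infinite-to-zero), whereas your finite parameters $1,M,\epsilon,2-\epsilon$ require the explicit choices $M>\beta$ and $\epsilon<2/(\alpha+1)$ and a slightly more careful counting; conceptually there is no difference.
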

\begin{proof}
    Consider an instance with a set of $6$ agents $\calN = \set{a,b,c,d,e,f}$, a set of two feasible cluster centers $\calM = \set{x_1,x_2}$, and $k=3$.

    \emph{Non-centroid metric:} Let $\distm(a,b) = \distm(c,d) = \distm(e,f) = 0$, and let the non-centroid distance between every other pair of agents be $\infty$.\footnote{Here, we use $\infty$ for the ease of exposition, but it can be replaced by a suitably large number.}

    \emph{Centroid metric:} Let $\distc(a,x_1) = \distc(c,x_1) = \distc(e,x_1) = 0$, let $\distc(b,x_2) = \distc(d,x_2) = \distc(f,x_2) = 0$, and let the centroid distance between every other agent-center pair be $\infty$.

    \emph{Finite non-centroid FJR approximation:} First, note that the only clustering that leads to a finite FJR approximation with respect to the non-centroid loss is the one that partitions the set of agents as $\calX = \set{\set{a,b},\set{c,d},\set{e,f}}$. 
    
    To see this, consider any $C \in \calX$. If $C$ is ``broken'' (i.e., not a cluster by itself), then at least one member of $C$ must be a singleton cluster; the only other possibility would be that $C$ is a strict subset of a cluster, in which case $C$ can deviate, improving both its members from infinite loss to zero loss, yielding infinite FJR approximation.
    
    Hence, every unbroken pair in $\calX$ forms a cluster of size two, and (at least) one member from every broken pair in $\calX$ forms a singleton cluster. Since there are three pairs in $\calX$, this already uses up $k=3$ clusters, leaving the other members from the broken pairs in $\calX$ not included in the partition, which is impossible. Hence, there must not be any broken pairs, which shows that the partition must be $\calX = \set{\set{a,b},\set{c,d},\set{e,f}}$. 
    
    \emph{Finite centroid FJR approximation:} Now, let us try to assign a center to each cluster in $\calX$ in a way that yields a finite approximation with respect to the centroid loss. Since there are only two feasible centers $x_1$ and $x_2$, at least one of them must be assigned to at most one of the clusters in $\calX$. Without loss of generality, assume this center is $x_1$. There are three agents, $a$, $c$, and $e$, which are at distance $0$ away from $x_1$ under the centroid metric. However, they are in different clusters. Hence, at least two of them must be assigned to $x_2$, and can deviate together with $x_1$ and improve both their centroid losses from non-zero to zero, resulting in an infinite FJR approximation with respect to the centroid loss. If we had assumed $x_2$, we could have made the same argument with agents $b$, $d$, and $f$.

    This shows that there is no clustering that achieves finite FJR approximations with respect to both the centroid and non-centroid losses, concluding the proof. 
\end{proof}

When both losses are induced by a common metric $d$,  a surprisingly simple algorithm turns out to achieve a constant approximation to FJR with respect to both centroid and non-centroid losses. This is in contrast to the core, for which no simultaneous finite approximation is feasible, even for a single metric (\Cref{thm:bobw-impossibility}).

The algorithm is \Cref{alg:greedy-capture-greedy-centroid}, which runs non-centroid greedy capture with metric $d$ followed by a greedy centroid selection step. We presented it in \Cref{sec:single-metric} and proved that it yields $\nicefrac{2}{\lambda}$-core for all $\lambda \in [0,1]$. Note that this achieves $2$-core (and hence, $2$-FJR) with respect to the non-centroid loss ($\lambda=1$). However, its approximation to the core becomes unbounded as $\lambda$ approaches $0$ (the centroid loss), which is inevitable given \Cref{thm:bobw-impossibility}. However, we show that its approximation to the weaker criterion of FJR still remains bounded by $5$. 

\begin{lemma}\label{lem:true-bobw-fjr-lemma}
    \Cref{alg:greedy-capture-greedy-centroid}, given metric $d$, yields a clustering that is $5$-FJR with respect to the centroid loss induced by $d$.
\end{lemma}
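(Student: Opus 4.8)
The plan is to mimic the skeleton of the proof of \Cref{lem:greedy-capture-core}, but since we only care about the centroid loss $\lossc_i(\calX) = d(i,x_{\calX(i)})$ and the FJR bar compares against $\min_{j\in S}\lossc_j(\calX)$ rather than each agent's own loss, the slack is large enough to make the non-centroid machinery unnecessary. Suppose for contradiction that the clustering $\calX = \set{(C_1,x_1),\ldots,(C_k,x_k)}$ returned by \Cref{alg:greedy-capture-greedy-centroid} is not $5$-FJR with respect to the centroid loss: there is a set $S$ with $|S| \ge \nicefrac{n}{k}$ and a center $y \in \calM$ with $5\cdot d(i,y) < m$ for all $i \in S$, where $m \triangleq \min_{j \in S} d(j,x_{\calX(j)})$. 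In particular $d(i,y) < m/5$ for every $i \in S$.

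\textbf{Setup, borrowed from \Cref{lem:greedy-capture-core}.} Let $i \in \argmin_{i' \in S}\calX(i')$ be the first member of $S$ captured by \Cref{alg:greedy-capture-greedy-centroid}, let $t = \calX(i)$, let $j$ be the agent whose growing ball captured $C_t$, and let $r_t$ be the value of $\delta$ at the moment $C_t$ was captured, so $d(j,i') \le r_t$ for all $i' \in C_t$ (in particular $d(i,j) \le r_t$). The exact same reasoning used in the proof of \Cref{lem:greedy-capture-core} shows there is some $i^* \in S$ with $d(i,i^*) \ge r_t$: otherwise the whole of $S$, which is uncaptured when $C_t$ forms and has size at least $\nicefrac{n}{k}$, would have been captured by $i$'s ball strictly before $C_t$. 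Finally, the greedy centroid step sets $x_t = \argmin_{x \in \calM} d(j,x)$, hence $d(j,x_t) \le d(j,y)$.

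\textbf{The chain of inequalities.} First bound $r_t$ using the structural fact and the triangle inequality through $y$: $r_t \le d(i,i^*) \le d(i,y) + d(i^*,y) < m/5 + m/5 = 2m/5$. Next bound $d(j,x_t) \le d(j,y) \le d(j,i) + d(i,y) \le r_t + d(i,y)$. Combining, $\lossc_i(\calX) = d(i,x_t) \le d(i,j) + d(j,x_t) \le r_t + \bigl(r_t + d(i,y)\bigr) = 2r_t + d(i,y) < 2\cdot(2m/5) + m/5 = m$. But $i \in S$, so $\lossc_i(\calX) \ge \min_{j \in S}\lossc_j(\calX) = m$, a contradiction. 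Hence $\calX$ is $5$-FJR with respect to the centroid loss induced by $d$.

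\textbf{Main obstacle.} There is essentially no computational obstacle here; the one step that needs care is the structural claim that $S$ could not have been captured before $C_t$ (yielding $i^*$ with $d(i,i^*) \ge r_t$), and this is exactly the argument already carried out in \Cref{lem:greedy-capture-core}, relying on $|S| \ge \nicefrac{n}{k}$ and $i$ being uncaptured when $C_t$ forms. The constant $5$ is precisely what the triangle-inequality chain produces and is tight for this argument, since $r_t$ can be as large as $2m/\alpha$, which then forces the bound $d(i,x_t) < 5m/\alpha$.
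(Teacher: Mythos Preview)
Your proof is correct and uses the same two ingredients as the paper: the greedy-centroid bound $d(i,x_t)\le 2r_t+d(i,y)$ and the greedy-capture fact relating $r_t$ to distances inside $S$. The only difference is the direction in which you close the loop: you first bound $r_t<2m/5$ via the witness $i^*$ with $d(i,i^*)\ge r_t$ and then contradict $d(i,x_t)\ge m$, whereas the paper first plugs $d(i,x_t)\le 2r_t+d(i,y)$ into the FJR inequality to get $d(q,y)<r/2$ for all $q\in S$ and then contradicts the capture order---a rearrangement of the same inequalities.
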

\begin{proof}
    For contradiction, assume this is false. Then, there exists an instance in which \Cref{alg:greedy-capture-greedy-centroid} returns a clustering $\calX = \{(C_1,x_1),\ldots,(C_k,x_k)\}$ that admits a deviating cluster $(S,y)$ such that 
    \begin{align}\label{eqn:fjr-centroid-violation}
        \forall q \in S: 5 \cdot d(q,y) < \min_{p \in S} \lossc_p(\mathcal{X}).
    \end{align}

    Let $i \in S$ be the agent from $S$ that was captured first by \Cref{alg:greedy-capture-greedy-centroid}. Denote the cluster containing agent $i$ as $(C,x) \triangleq (C_{\calX(i)},x_{\calX(i)})$ for brevity. Because \Cref{alg:greedy-capture-greedy-centroid} runs non-centroid greedy capture to create the partition of the set of agents, $C$ must have been captured by a ball centered at some agent $j \in C$. Note that the center $x$ selected by \Cref{alg:greedy-capture-greedy-centroid} for $C$ must be the closest feasible center to agent $j$. Let $r = \max_{i' \in C} d(j,i')$ be the radius of the ball centered at $j$ when it captured $C$.

    First, we show that $d(i,x) \leq 2r + d(i,y)$. To see this, note that 
    \[d(i,x) \leq d(i,j) + d(j,x) \leq d(i,j) + d(j,y) \leq d(i,j) + d(i,j) + d(i,y) \leq 2r + d(i,y),\]
    where the first and third inequalities use the triangle inequality, the second inequality uses the fact that $x$ is the closest center to agent $j$ (hence, $d(j,x) \le d(j,y)$), and the last inequality follows from the definition of $r$. 

    Now, from \Cref{eqn:fjr-centroid-violation}, we have that 
    \[
    \forall q \in S: 5 \cdot d(q,y) < \min_{p \in S} \lossc_p(\calX) \le \lossc_i(\calX) = d(i,x) \le 2r + d(i,y). 
    \]
    In particular, substituting $q=i$, we get that $5 \cdot d(i,y) < 2r+d(i,y)$, which yields $d(i,y) < \nicefrac{r}{2}$. Then, for every other agent $q \in S$, we have that $5 \cdot d(q,y) < 2r + d(i,y) < \nicefrac{5r}{2}$, which yields $d(q,y) < \nicefrac{r}{2}$.

    By the triangle inequality, we have that, for all $p,q \in S$, $d(p,q) \leq d(p,y) + d(q,y) < \nicefrac{r}{2} + \nicefrac{r}{2} = r$. But this is a contradiction because this would imply that in the execution of \Cref{alg:greedy-capture-greedy-centroid}, the ball centered at any agent in $S$ would have captured $S$ before the ball centered at agent $j$ captured $C$. 
\end{proof}

With \Cref{lem:true-bobw-fjr-lemma}, we can state the following:

\begin{restatable}{theorem}{FJRBoBWSingleMetric}\label{thm:fjr-bobw-single-metric}
    \Cref{alg:greedy-capture-greedy-centroid}, given metric $d$, finds, in polynomial time, a clustering that is $5$-FJR with respect to the centroid loss and $2$-FJR with respect to the non-centroid loss, both defined using $d$. 
\end{restatable}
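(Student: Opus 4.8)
The plan is to obtain the theorem by combining two facts that are each already established about the output of the single algorithm \Cref{alg:greedy-capture-greedy-centroid}. The centroid half, namely that the returned clustering is $5$-FJR with respect to the centroid loss induced by $d$, is exactly \Cref{lem:true-bobw-fjr-lemma}, so nothing new is needed there. For the non-centroid half, I would first observe that \Cref{alg:greedy-capture-greedy-centroid} differs from plain non-centroid greedy capture (\Cref{alg:non-centroid-greedy-capture}) only in that, whenever a ball grown around an agent $i$ captures a set $C$, it additionally records the center $\argmin_{x \in \calM} d(i,x)$ for that cluster; this extra step does not affect which agents are captured together, so the partition $(C_1,\ldots,C_k)$ produced by \Cref{alg:greedy-capture-greedy-centroid} is identical to the one produced by \Cref{alg:non-centroid-greedy-capture}. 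Since the non-centroid loss $\lossm_i(\calX) = \max_{j \in C_{\calX(i)}} d(i,j)$ depends only on the partition and not on the chosen centers, any non-centroid core/FJR guarantee of \Cref{alg:non-centroid-greedy-capture} transfers verbatim to the output of \Cref{alg:greedy-capture-greedy-centroid}.

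Next I would invoke that \citet{CMS24} show non-centroid greedy capture yields a clustering in the $2$-core with respect to the maximum-distance non-centroid loss; equivalently, this is the $\lambda = 1$ specialization of \Cref{lem:greedy-capture-core}, where the weighted single metric loss coincides with $\lossm$ and the bound $\nicefrac{2}{\lambda}$ becomes $2$. By the proposition that every clustering in the $\alpha$-core is $\alpha$-FJR, that clustering is also $2$-FJR with respect to the non-centroid loss. Because the $5$-FJR (centroid) conclusion and the $2$-FJR (non-centroid) conclusion are statements about one and the same object returned by \Cref{alg:greedy-capture-greedy-centroid}, they hold simultaneously, which is precisely what the theorem asserts. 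Polynomial-time computability was already noted in \Cref{lem:greedy-capture-core}: the ball-growing phase has only polynomially many capture events as $\delta$ increases, and each centroid-selection step is a single minimization over $\calM$.

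I do not anticipate a genuine technical obstacle, as the theorem is essentially a bookkeeping assembly of earlier lemmas. The one point that merits an explicit sentence is the observation that the greedy centroid-selection step cannot alter the partition produced by the ball-growing phase, so that the non-centroid core guarantee of \citet{CMS24} applies unchanged to the output of \Cref{alg:greedy-capture-greedy-centroid} and, via the core-to-FJR implication, yields the $2$-FJR bound alongside the $5$-FJR bound of \Cref{lem:true-bobw-fjr-lemma}.
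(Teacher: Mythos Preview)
Your proposal is correct and matches the paper's own proof almost exactly: invoke \Cref{lem:true-bobw-fjr-lemma} for the centroid $5$-FJR half, and for the non-centroid $2$-FJR half, note that \Cref{alg:greedy-capture-greedy-centroid} produces the same partition as \Cref{alg:non-centroid-greedy-capture} so the $2$-core (hence $2$-FJR) guarantee of \citet{CMS24} carries over. Your additional remark that the non-centroid half can alternatively be read off as the $\lambda=1$ case of \Cref{lem:greedy-capture-core} is a nice self-contained variant that the paper does not spell out.
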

\begin{proof}
    $5$-FJR with respect to the centroid loss $\lossc$ is established in \Cref{lem:true-bobw-fjr-lemma}.

    \citet{CMS24} show that non-centroid greedy capture achieves $2$-FJR with respect to the non-centroid loss $\lossm$. It is easy to see that because \Cref{alg:greedy-capture-greedy-centroid} runs non-centroid greedy capture and uses the partition $(C_1,\ldots,C_k)$ of the set of agents $\calN$ that it produces, it retains this $2$-FJR guarantee with respect to the non-centroid loss $\lossm$. 
\end{proof}

% new adult
\begin{figure}[h!]
  \centering
  
  % ==== Legend row ====
  \begin{minipage}[b]{0.95\textwidth}
    \centering
    \includegraphics[width=0.95\textwidth]{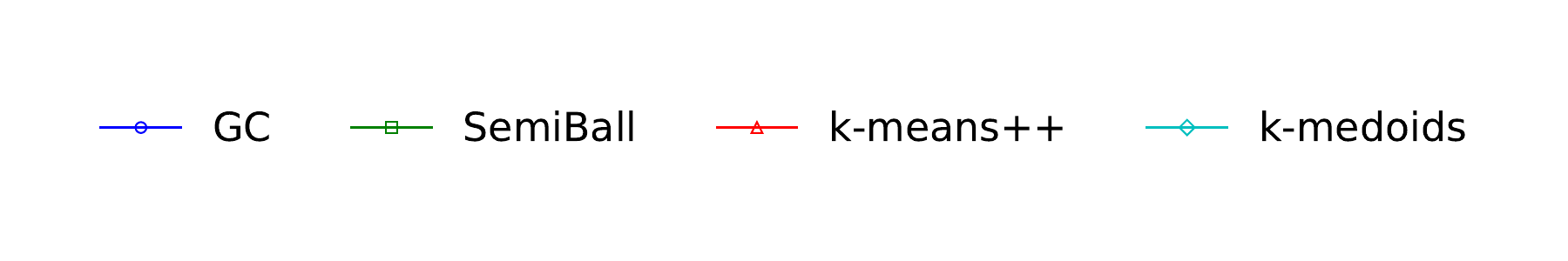} 
  \end{minipage}

  \vspace{0.5em}

  % ==== Block: Lambda sweep (k = 15) ====
  \begin{minipage}[b]{0.95\textwidth}
    \centering
    %\caption{$k=15$, Performance vs. $\lambda$}
    % Top row: FJR + Core
    \begin{minipage}[b]{0.47\textwidth}
      \includegraphics[width=\linewidth]{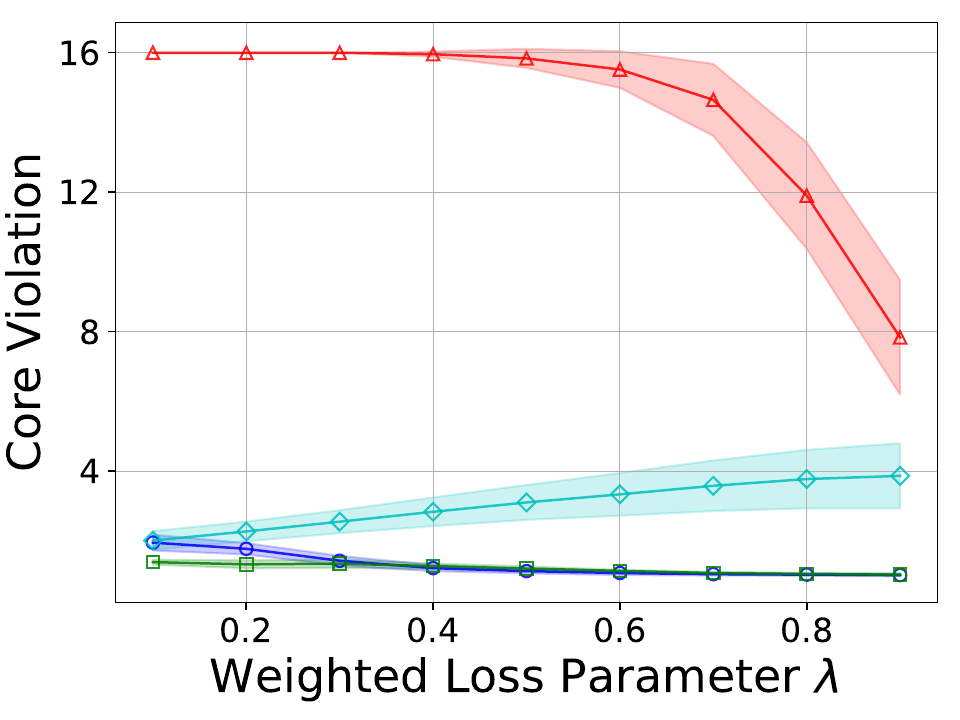}
    \end{minipage}
    \hspace{1em}
    \begin{minipage}[b]{0.47\textwidth}
      \includegraphics[width=\linewidth]{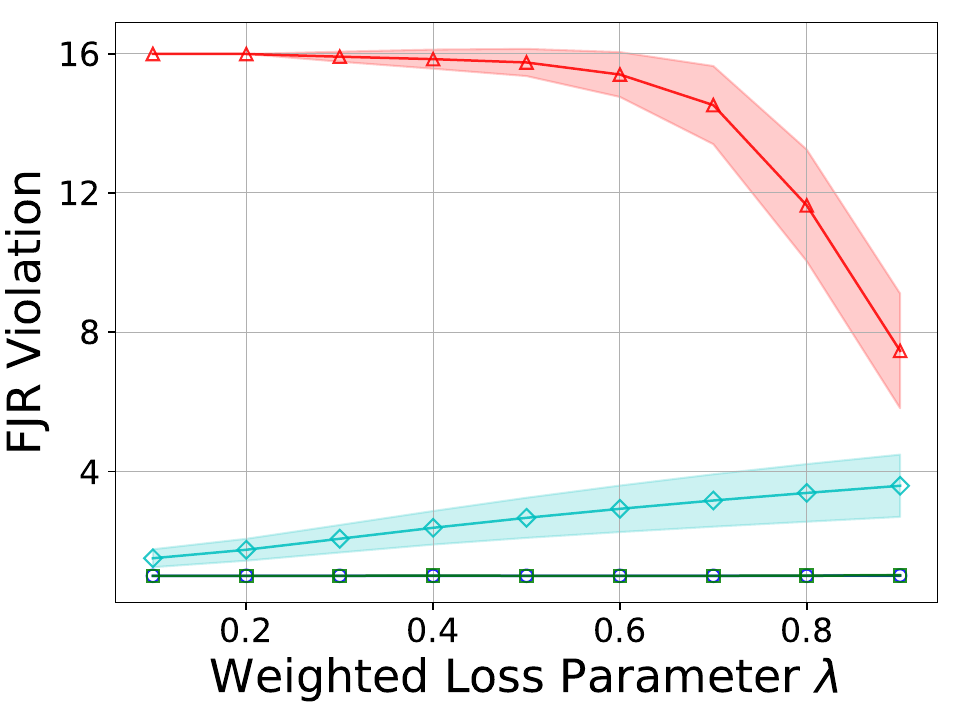}
    \end{minipage}

    \vspace{0.8em}

    % Bottom row: WithinCluster + k-means + k-medoids
    \begin{minipage}[b]{0.32\textwidth}
      \includegraphics[width=\linewidth]{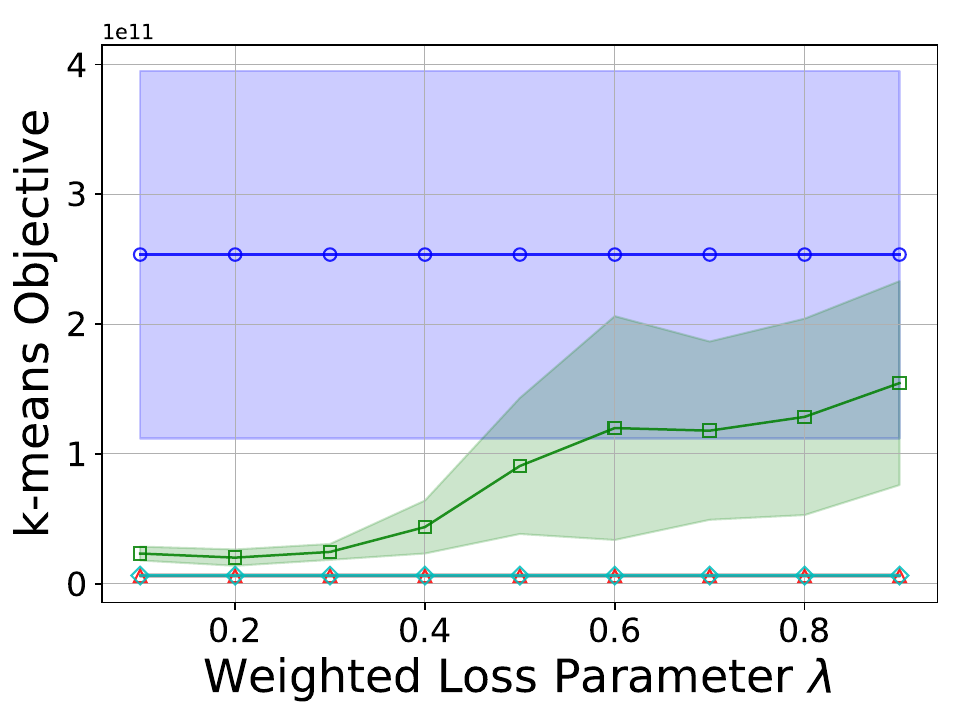}
    \end{minipage}%
    \hfill%
    \begin{minipage}[b]{0.32\textwidth}
      \includegraphics[width=\linewidth]{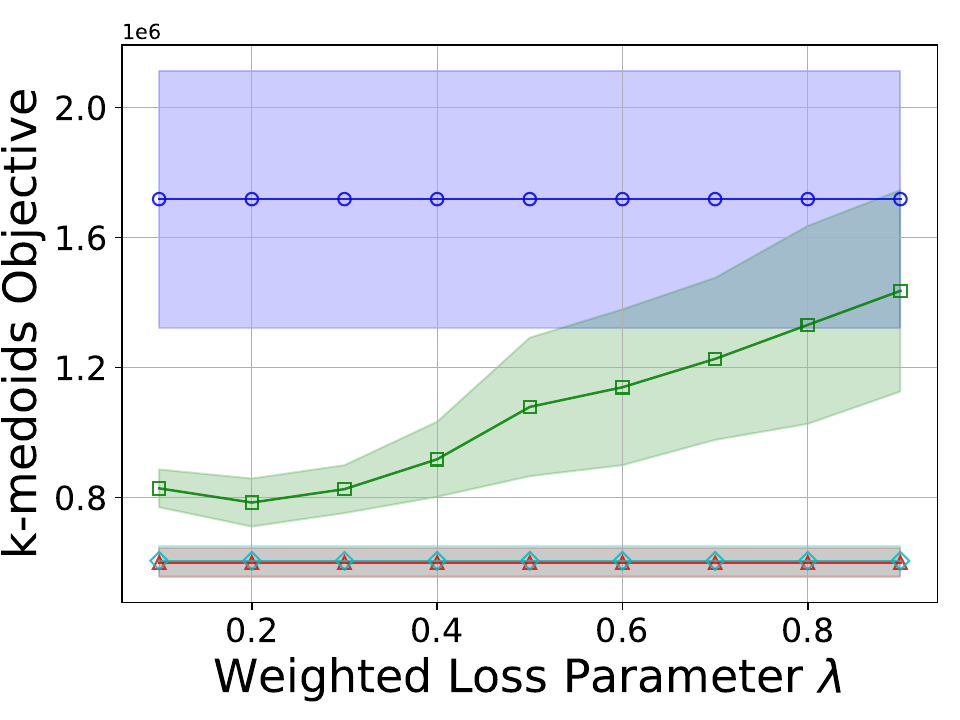}
    \end{minipage}%
    \hfill%
    \begin{minipage}[b]{0.32\textwidth}
      \includegraphics[width=\linewidth]{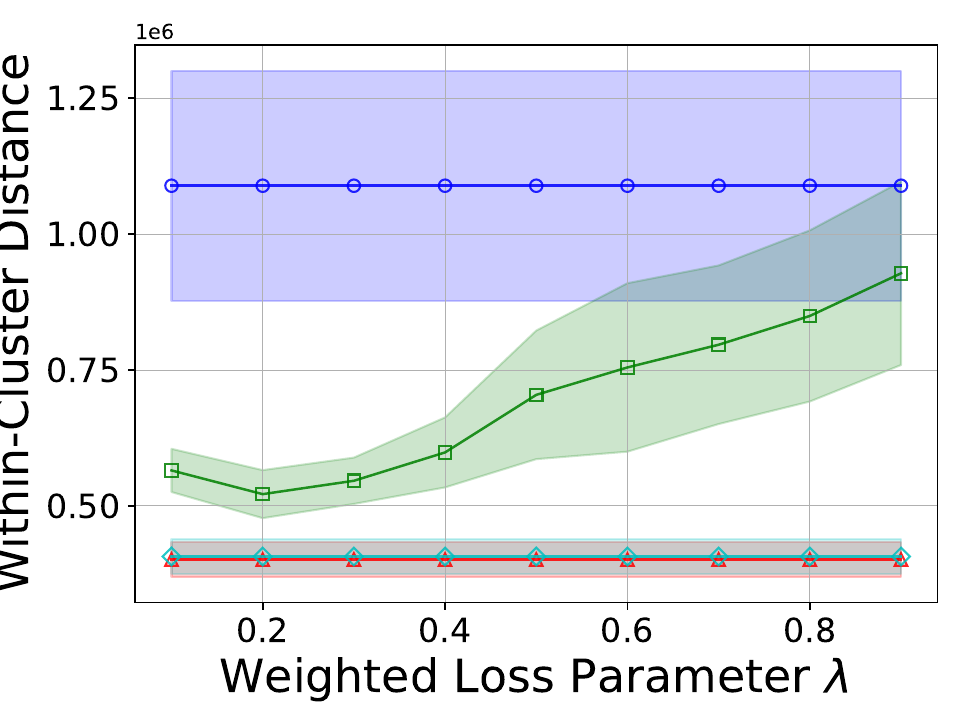}
    \end{minipage}
  \end{minipage}

  \vspace{2em}

  % ==== Block: k sweep (lambda = 0.5) ====
  \begin{minipage}[b]{0.95\textwidth}
    \centering

    % Top row: FJR + Core
    \begin{minipage}[b]{0.47\textwidth}
      \includegraphics[width=\linewidth]{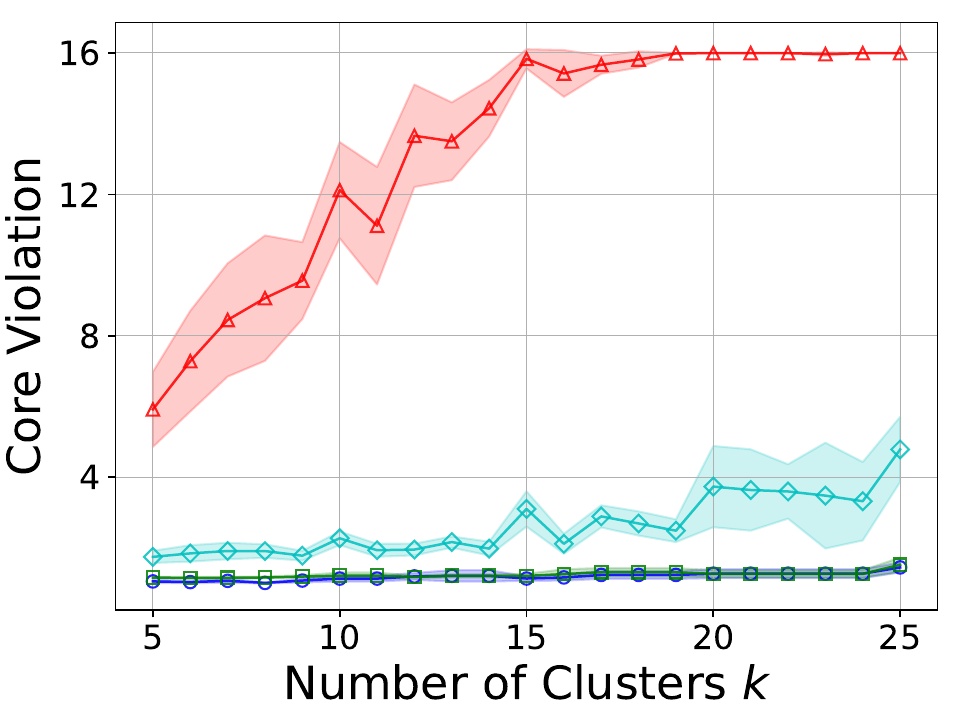}
    \end{minipage}
    \hspace{1em}
    \begin{minipage}[b]{0.47\textwidth}
      \includegraphics[width=\linewidth]{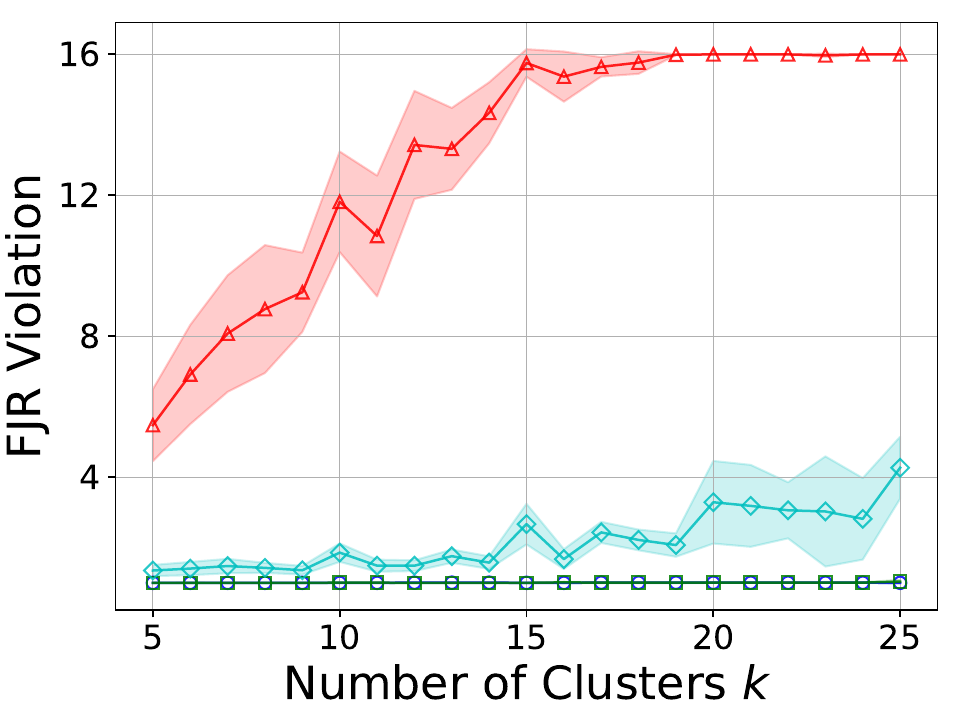}
    \end{minipage}

    \vspace{0.8em}

    % Bottom row: WithinCluster + k-means + k-medoids
    \begin{minipage}[b]{0.32\textwidth}
      \includegraphics[width=\linewidth]{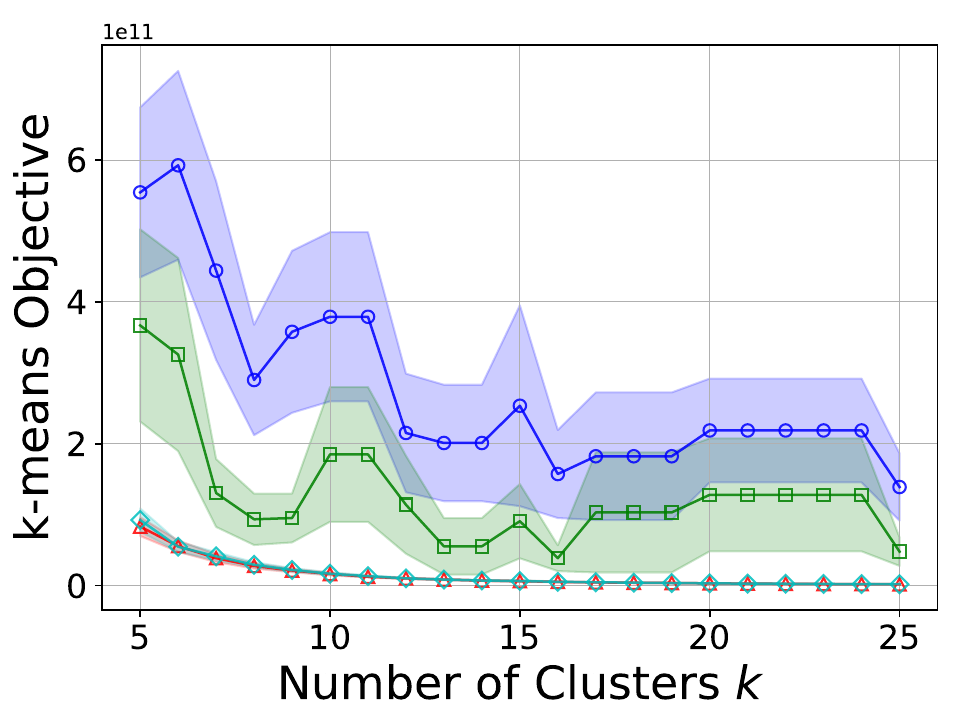}
    \end{minipage}%
    \hfill%
    \begin{minipage}[b]{0.32\textwidth}
      \includegraphics[width=\linewidth]{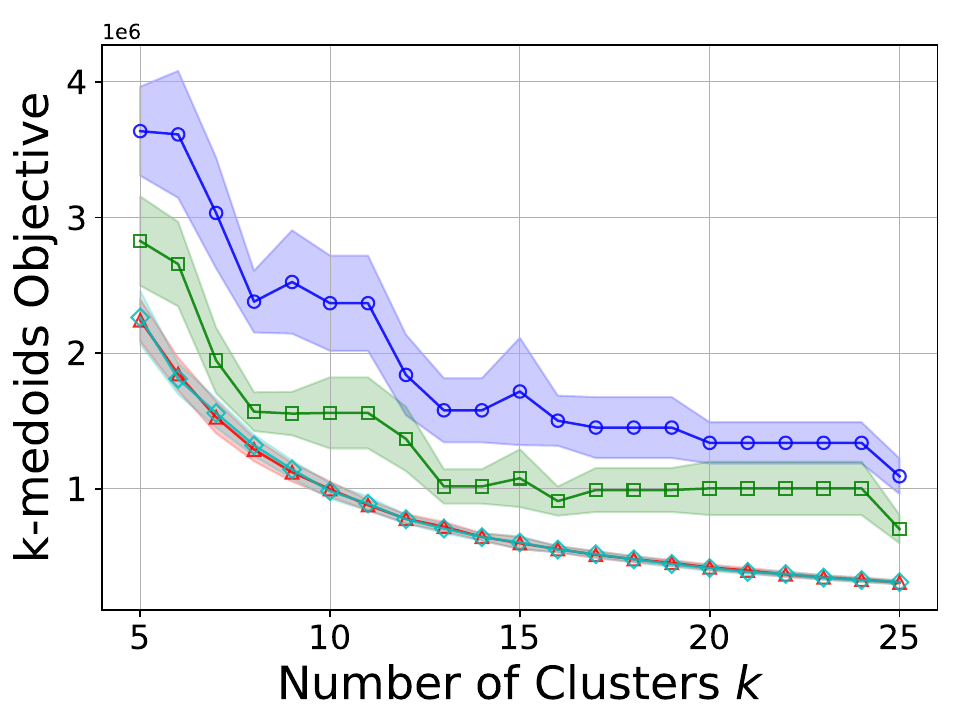}
    \end{minipage}%
    \hfill%
    \begin{minipage}[b]{0.32\textwidth}
      \includegraphics[width=\linewidth]{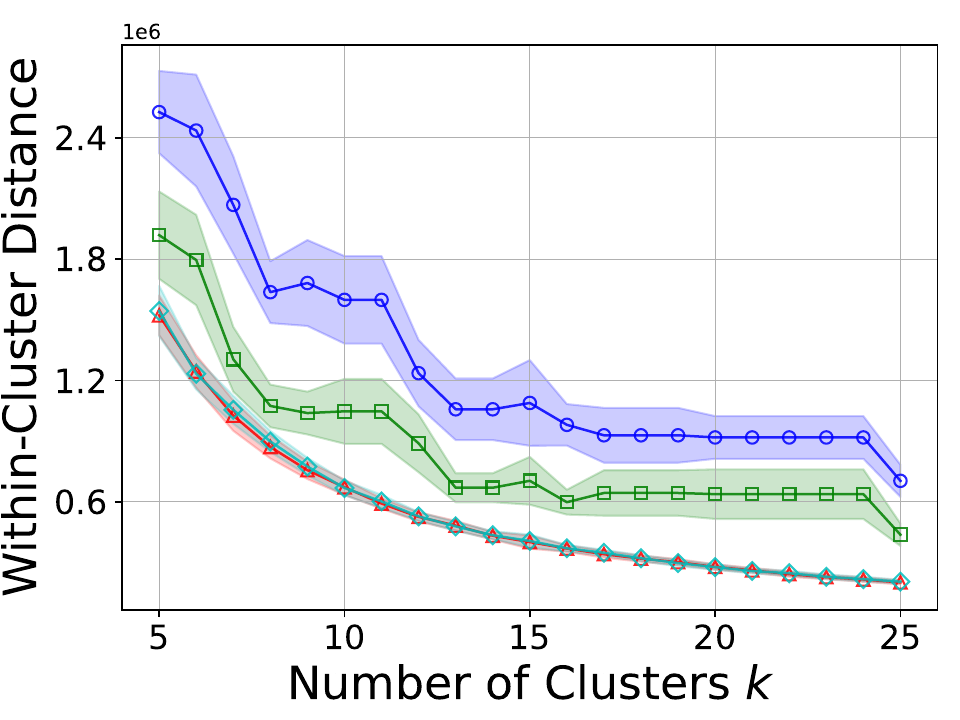}
    \end{minipage}
  \end{minipage}

  \caption{Results for the \emph{Adult} (Census Income) dataset. The common legend appears at the top. The top five plots compare the algorithms on our five metrics when varying $\lambda \in \set{0.1,0.2,\ldots,0.9}$ and fixing $k=15$, and the bottom five plots correspond to fixing $\lambda = 0.5$ and varying $k \in \set{5,6,\ldots,25}$.}
  \label{fig:Adults}
\end{figure}

% new diabetes
\begin{figure}[h!]
  \centering
  
  % ==== Legend row ====
  \begin{minipage}[b]{0.95\textwidth}
    \centering
    \includegraphics[width=0.95\textwidth]{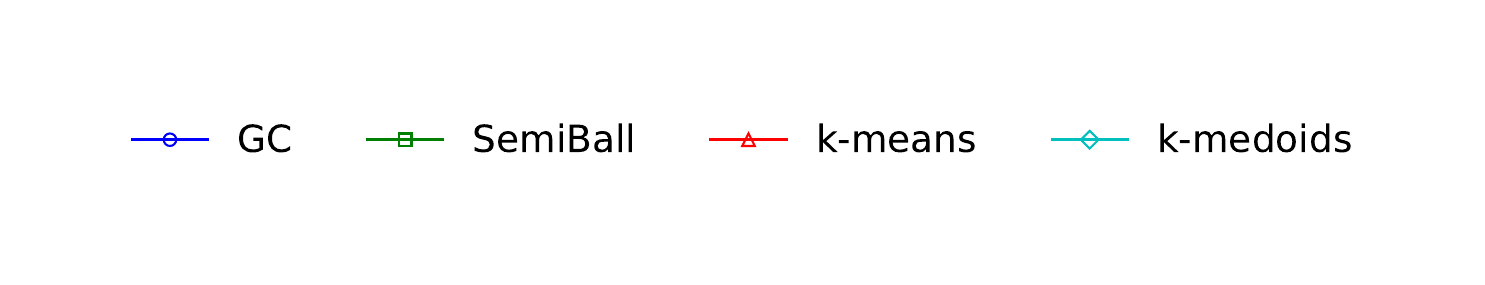}
  \end{minipage}

  \vspace{0.5em}

  % ==== Block: Lambda sweep (k = 15) ====
  \begin{minipage}[b]{0.95\textwidth}
    \centering
    %\caption{$k=15$, Performance vs. $\lambda$}
    % Top row: FJR + Core
    \begin{minipage}[b]{0.47\textwidth}
      \includegraphics[width=\linewidth]{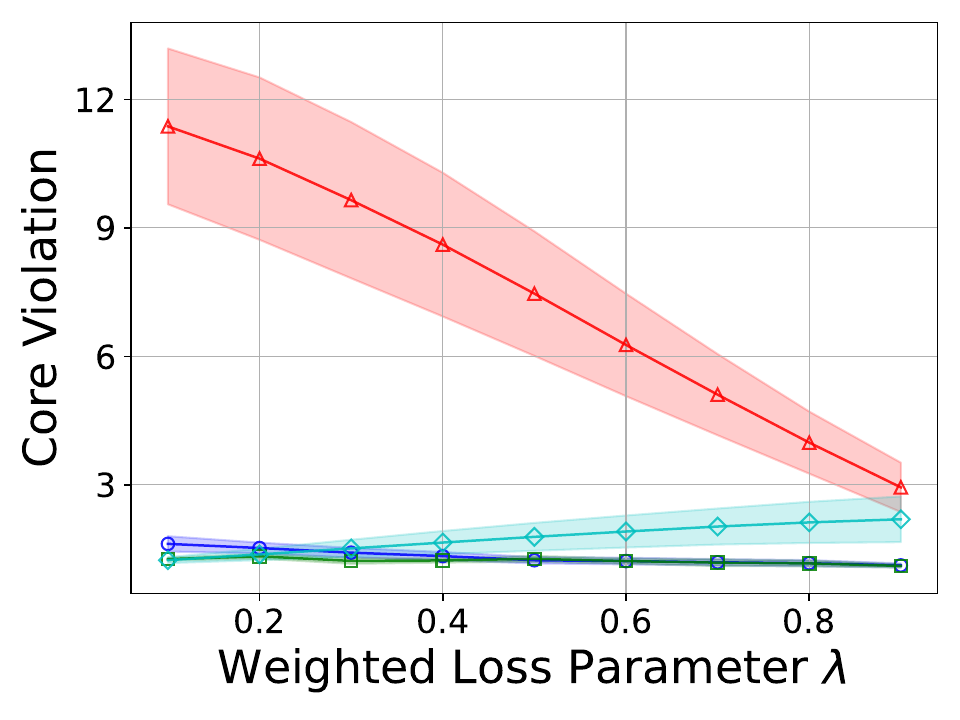}
    \end{minipage}
    \hspace{1em}
    \begin{minipage}[b]{0.47\textwidth}
      \includegraphics[width=\linewidth]{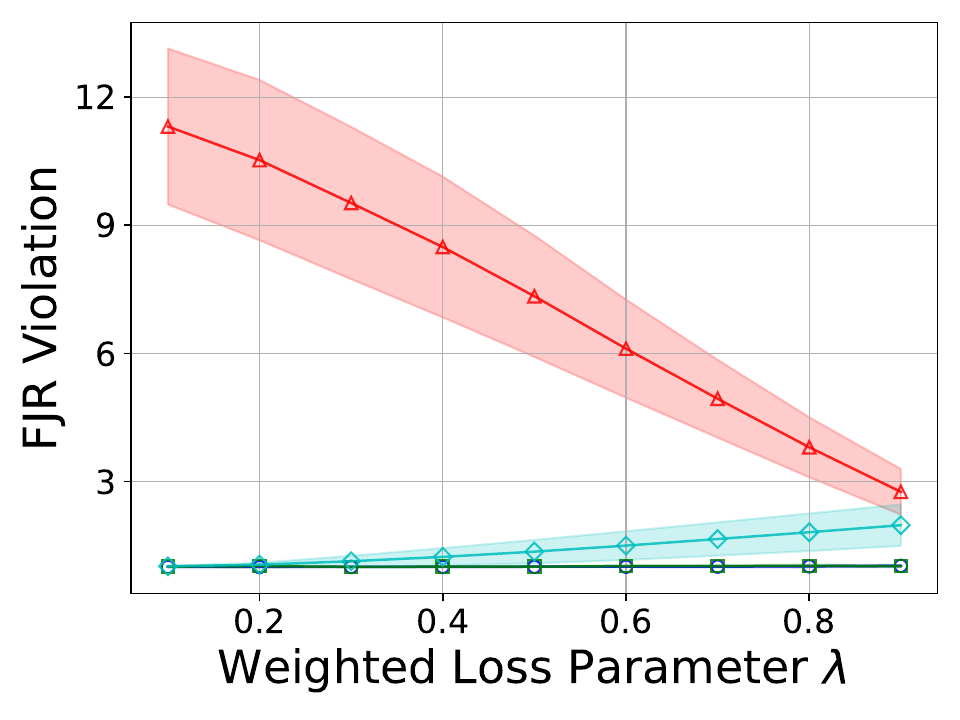}
    \end{minipage}

    \vspace{0.8em}

    % Bottom row: WithinCluster + k-means + k-medoids
    \begin{minipage}[b]{0.32\textwidth}
      \includegraphics[width=\linewidth]{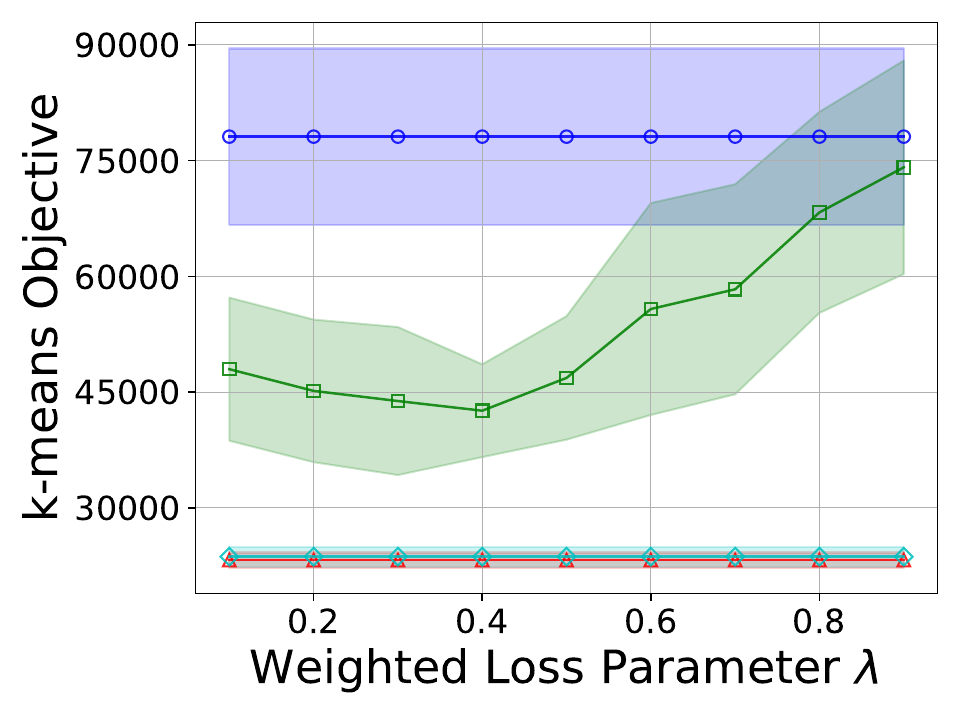}
    \end{minipage}%
    \hfill%
    \begin{minipage}[b]{0.32\textwidth}
      \includegraphics[width=\linewidth]{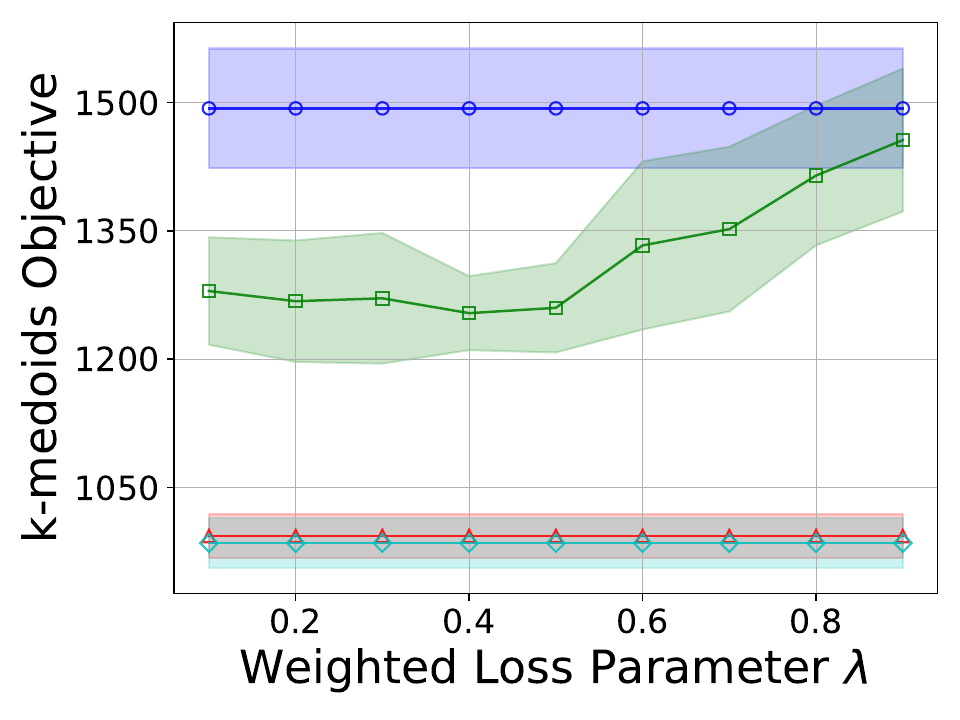}
    \end{minipage}%
    \hfill%
    \begin{minipage}[b]{0.32\textwidth}
      \includegraphics[width=\linewidth]{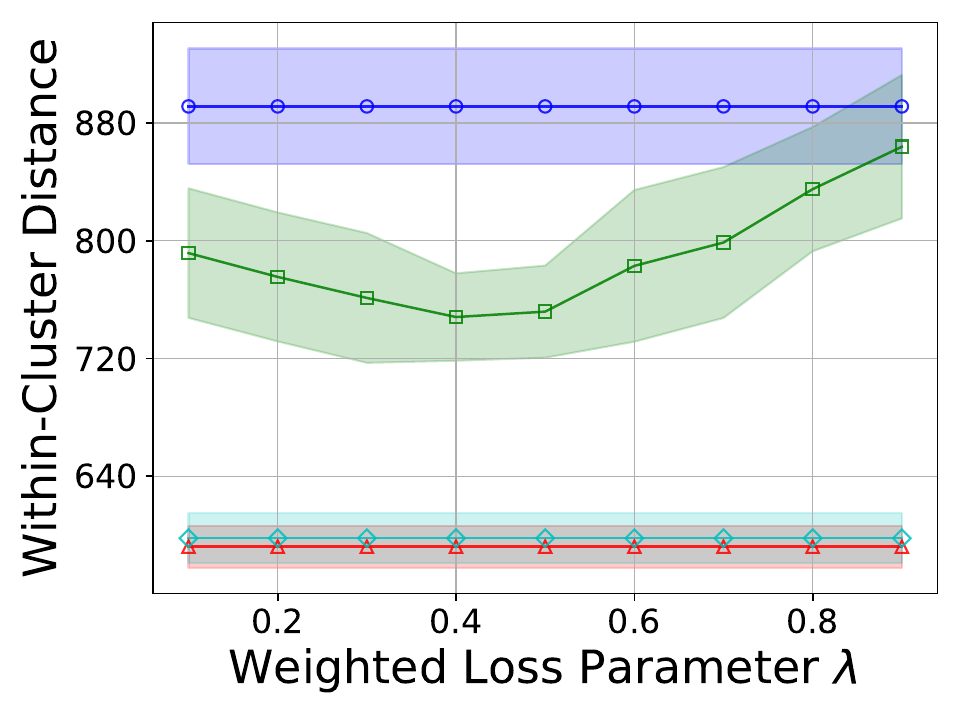}
    \end{minipage}
  \end{minipage}

  \vspace{2em}

  % ==== Block: k sweep (lambda = 0.5) ====
  \begin{minipage}[b]{0.95\textwidth}
    \centering

    % Top row: FJR + Core
    \begin{minipage}[b]{0.47\textwidth}
      \includegraphics[width=\linewidth]{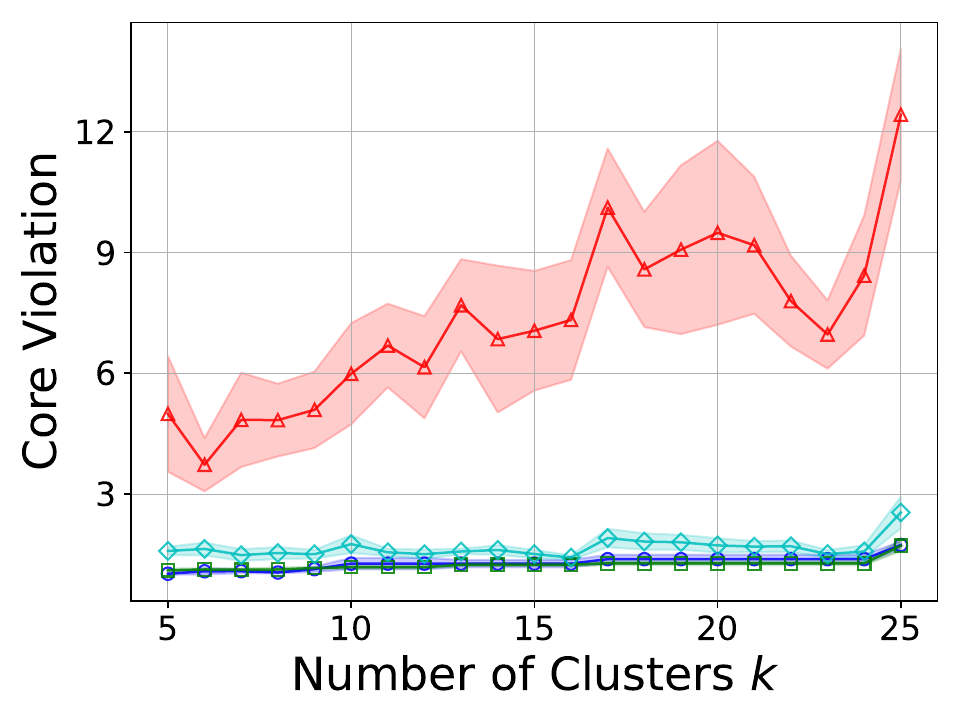}
    \end{minipage}
    \hspace{1em}
    \begin{minipage}[b]{0.47\textwidth}
      \includegraphics[width=\linewidth]{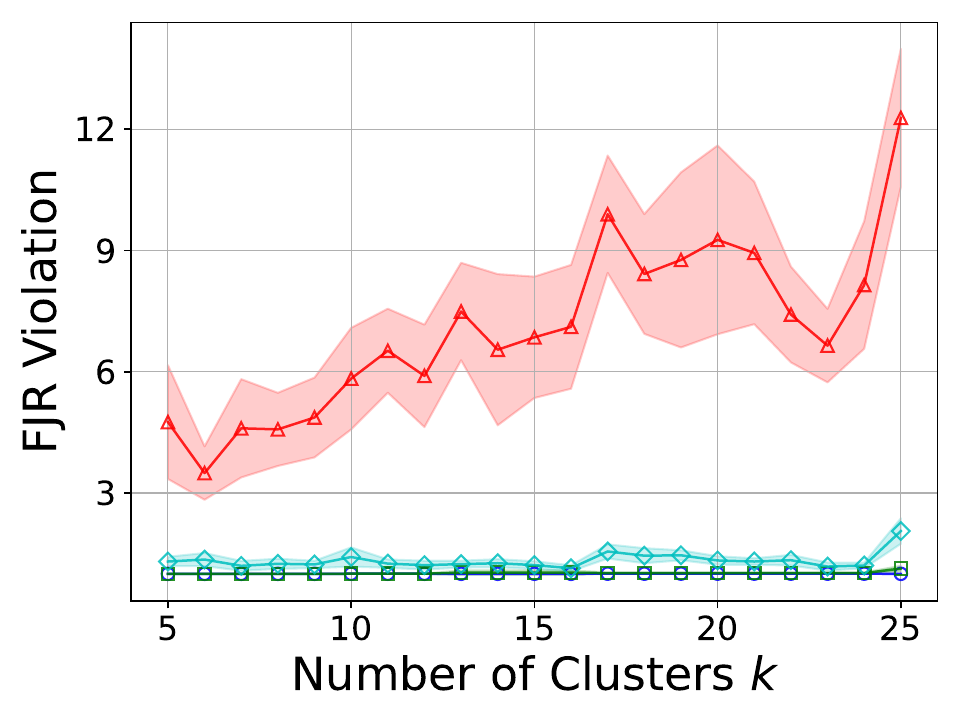}
    \end{minipage}

    \vspace{0.8em}

    % Bottom row: WithinCluster + k-means + k-medoids
    \begin{minipage}[b]{0.32\textwidth}
      \includegraphics[width=\linewidth]{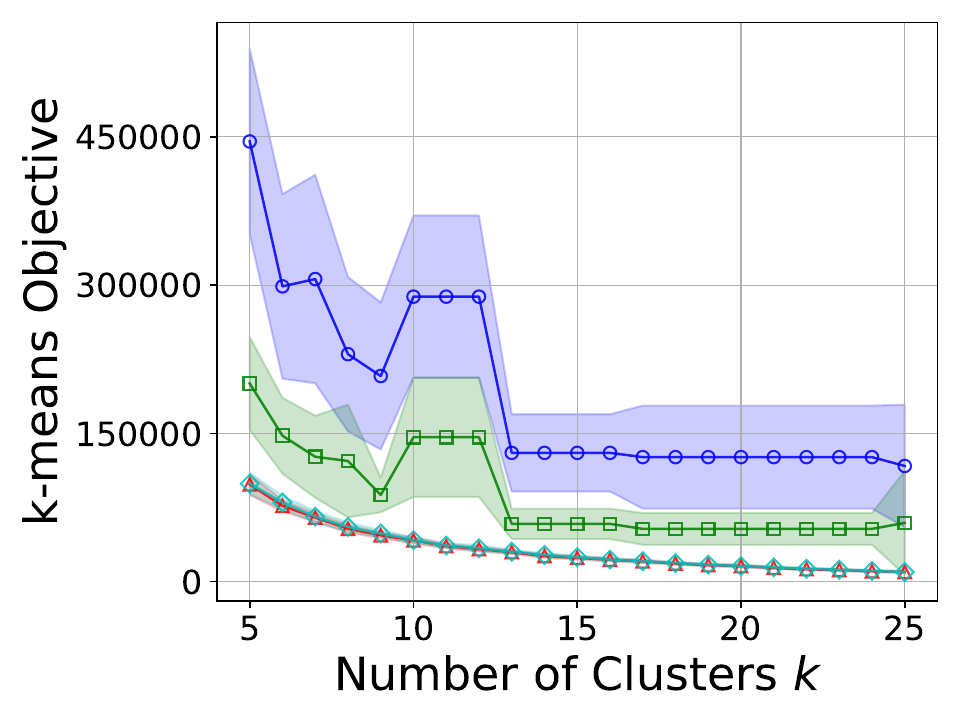}
    \end{minipage}%
    \hfill%
    \begin{minipage}[b]{0.32\textwidth}
      \includegraphics[width=\linewidth]{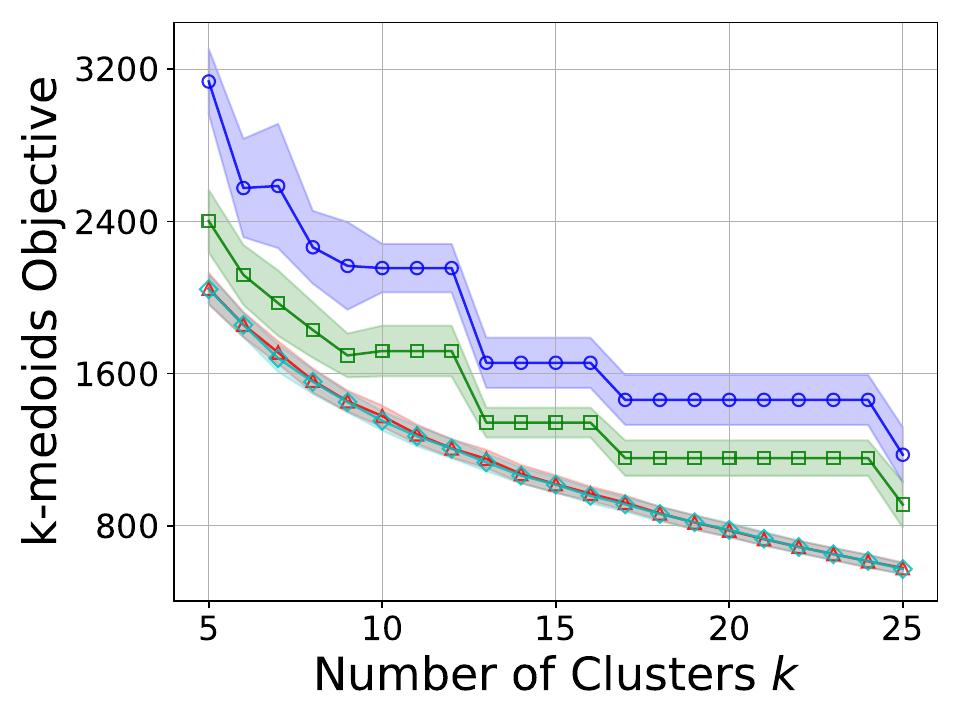}
    \end{minipage}%
    \hfill%
    \begin{minipage}[b]{0.32\textwidth}
      \includegraphics[width=\linewidth]{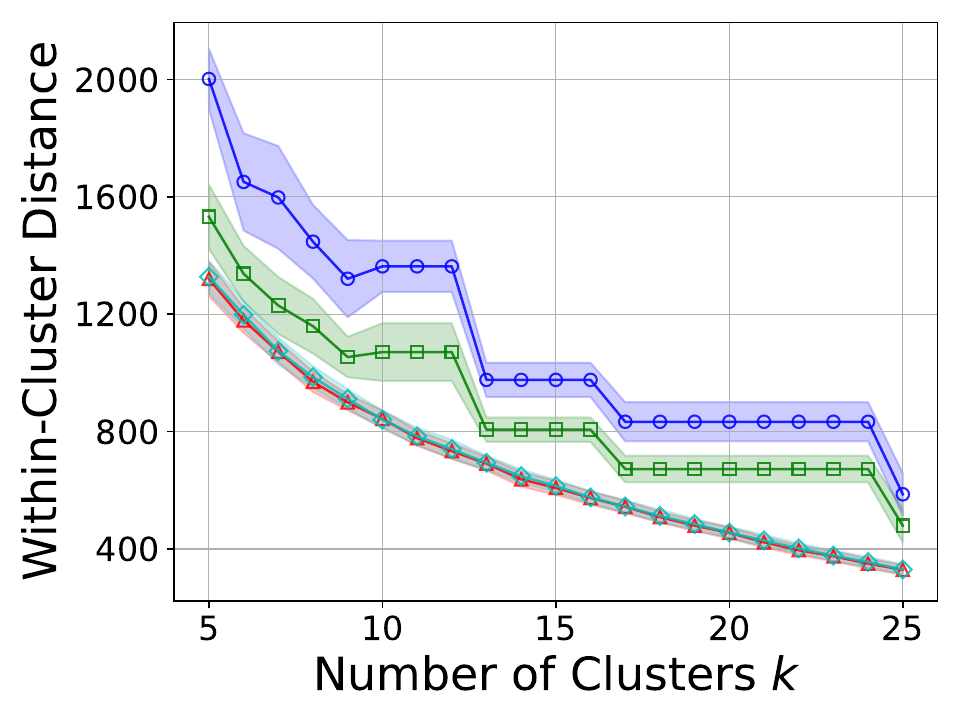}
    \end{minipage}
  \end{minipage}

  \caption{Results for the \emph{Diabetes} dataset. The common legend appears at the top. The top five plots compare the algorithms on our five metrics when varying $\lambda \in \set{0.1,0.2,\ldots,0.9}$ and fixing $k=15$, and the bottom five plots correspond to fixing $\lambda = 0.5$ and varying $k \in \set{5,6,\ldots,25}$.}
  \label{fig:Diabetes}
\end{figure}

% new iris
\begin{figure}[h!]
  \centering
  %\caption{Iris Dataset}

  % ==== Legend row ====
  \begin{minipage}[b]{0.95\textwidth}
    \centering
    \includegraphics[width=0.95\textwidth]{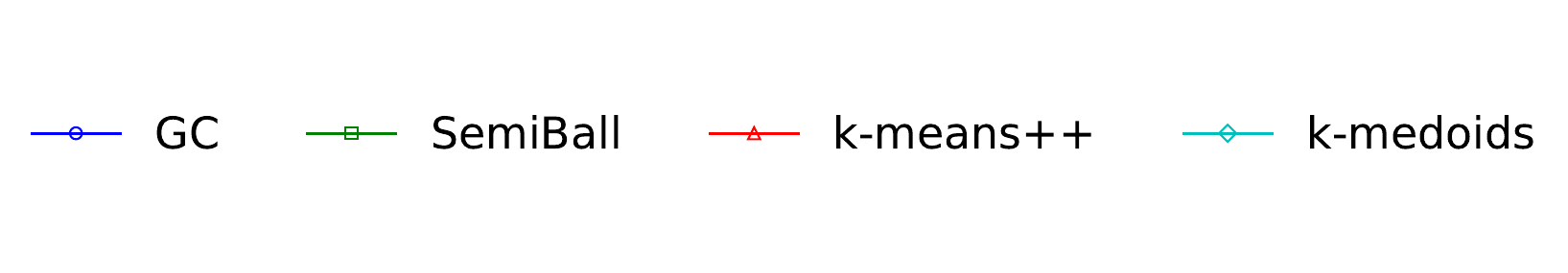}
  \end{minipage}

  \vspace{0.5em}

  % ==== Block: Lambda sweep (k = 15) ====
  \begin{minipage}[b]{0.95\textwidth}
    \centering
    %\caption{$k=15$, Performance vs. $\lambda$}
    % Top row: FJR + Core
    \begin{minipage}[b]{0.47\textwidth}
      \includegraphics[width=\linewidth]{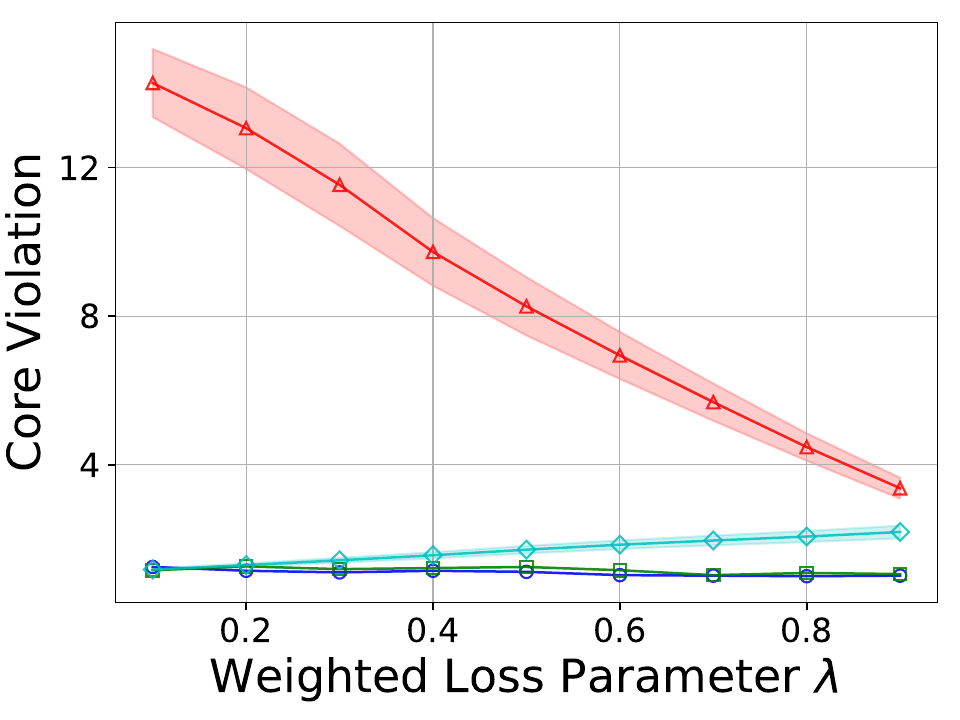}
    \end{minipage}
    \hspace{1em}
    \begin{minipage}[b]{0.47\textwidth}
      \includegraphics[width=\linewidth]{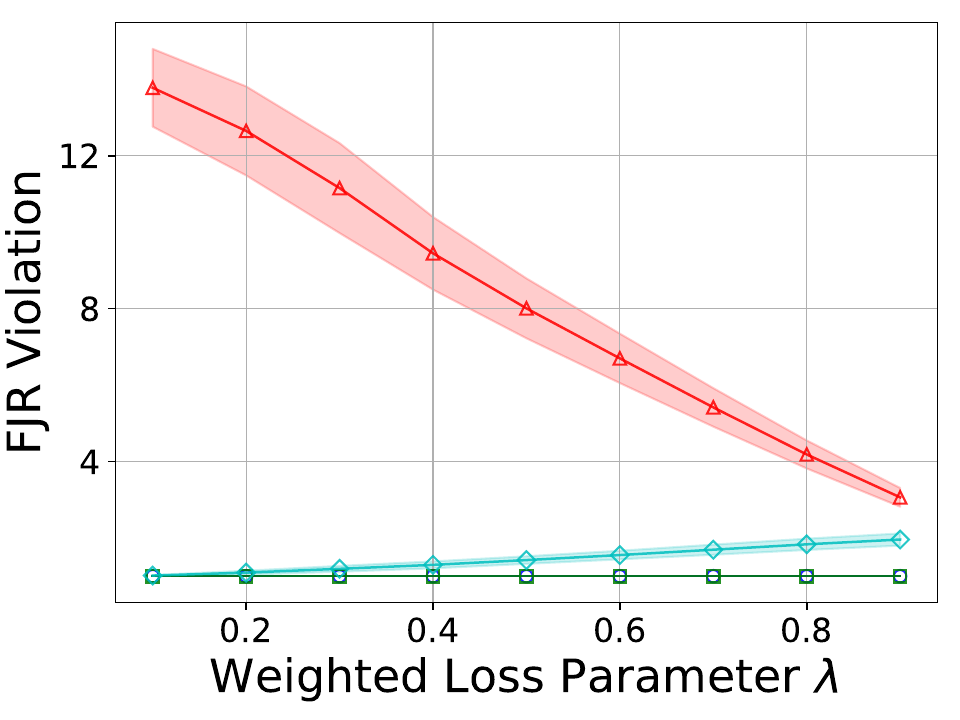}
    \end{minipage}

    \vspace{0.8em}

    % Bottom row: WithinCluster + k-means + k-medoids
    \begin{minipage}[b]{0.3\textwidth}
      \includegraphics[width=\linewidth]{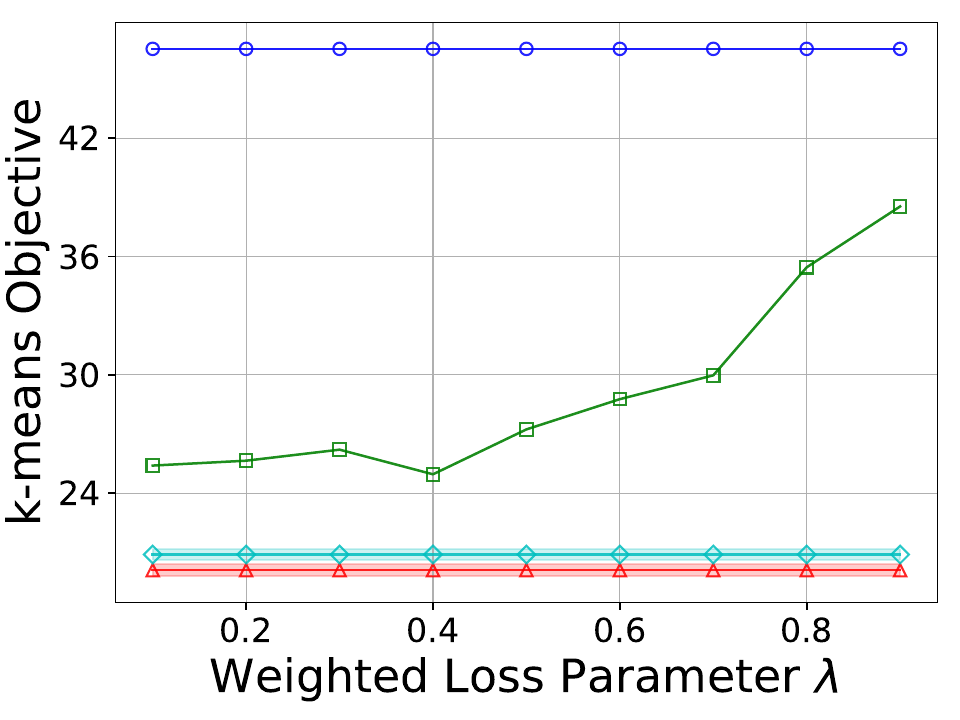}
    \end{minipage}
    \hspace{1em}
    \begin{minipage}[b]{0.3\textwidth}
      \includegraphics[width=\linewidth]{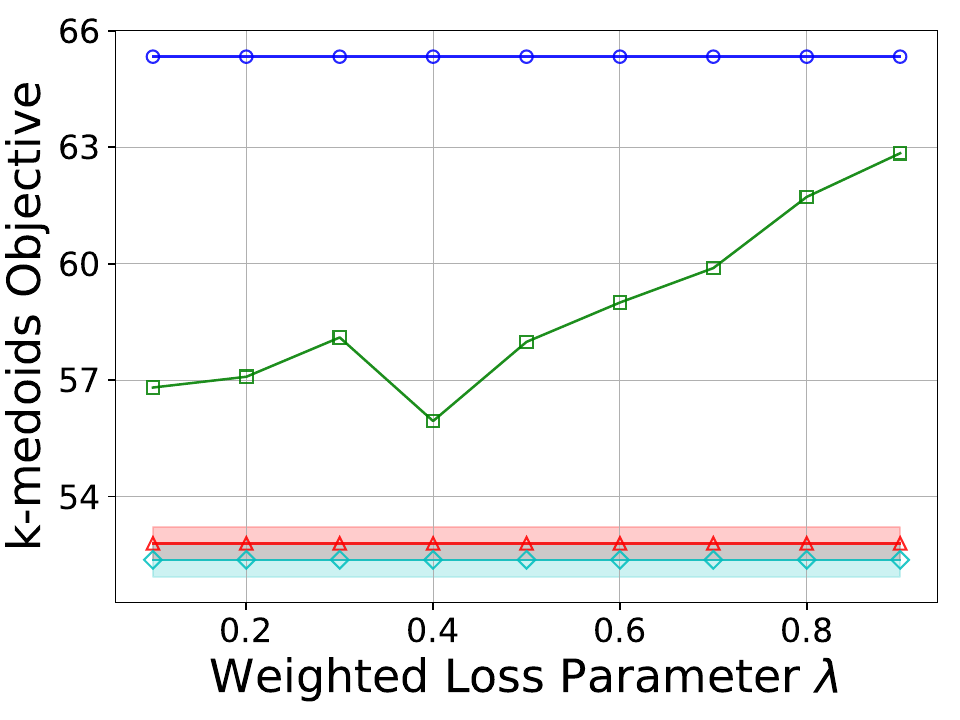}
    \end{minipage}
    \hspace{1em}
    \begin{minipage}[b]{0.3\textwidth}
      \includegraphics[width=\linewidth]{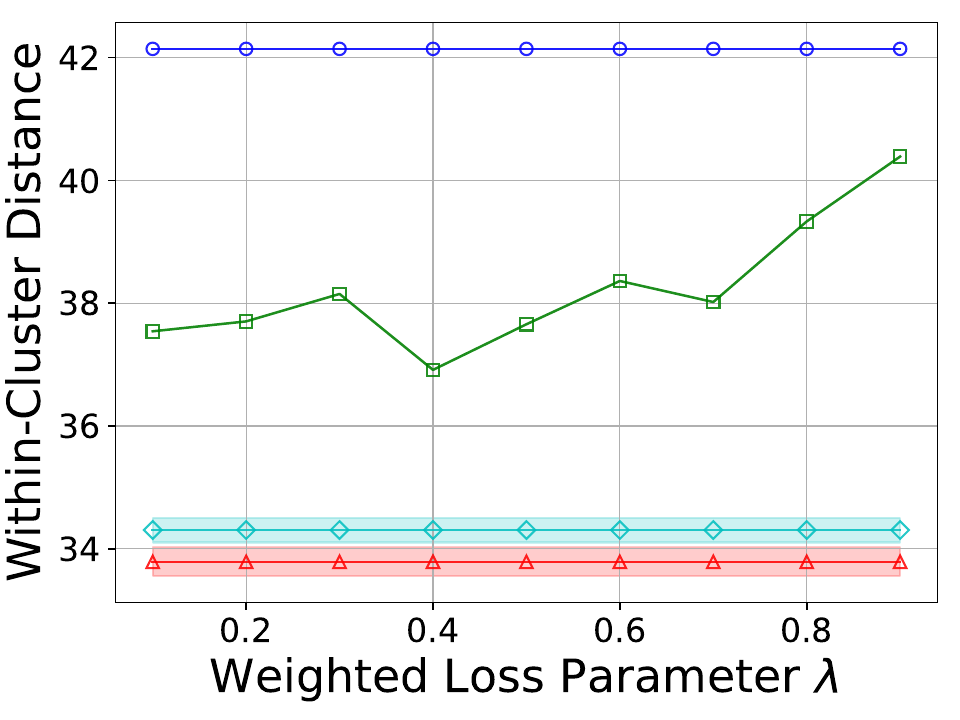}
    \end{minipage}
  \end{minipage}

  \vspace{2em}

  % ==== Block: k sweep (lambda = 0.5) ====
  \begin{minipage}[b]{0.95\textwidth}
    \centering

    % Top row: FJR + Core
    \begin{minipage}[b]{0.47\textwidth}
      \includegraphics[width=\linewidth]{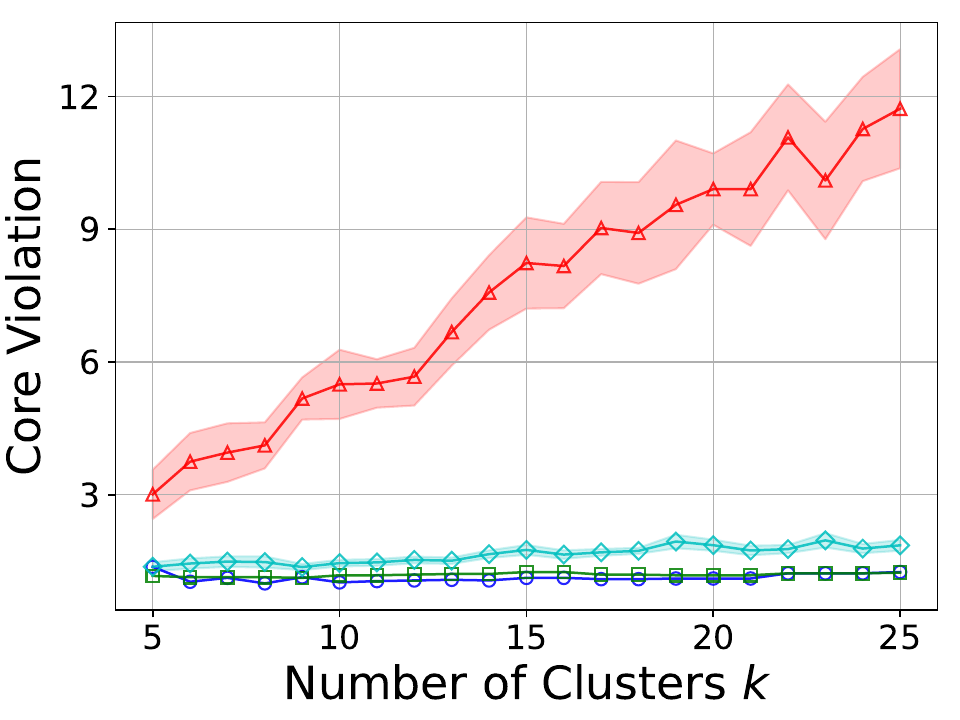}
    \end{minipage}
    \hspace{1em}
    \begin{minipage}[b]{0.47\textwidth}
      \includegraphics[width=\linewidth]{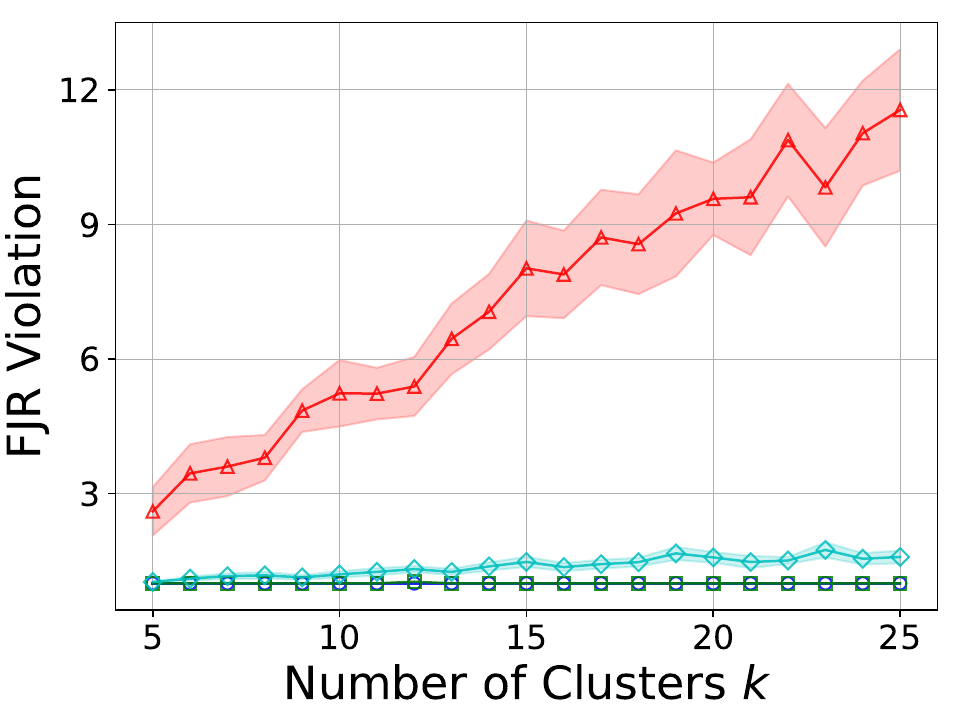}
    \end{minipage}

    \vspace{0.8em}

    % Bottom row: WithinCluster + k-means + k-medoids
    \begin{minipage}[b]{0.3\textwidth}
      \includegraphics[width=\linewidth]{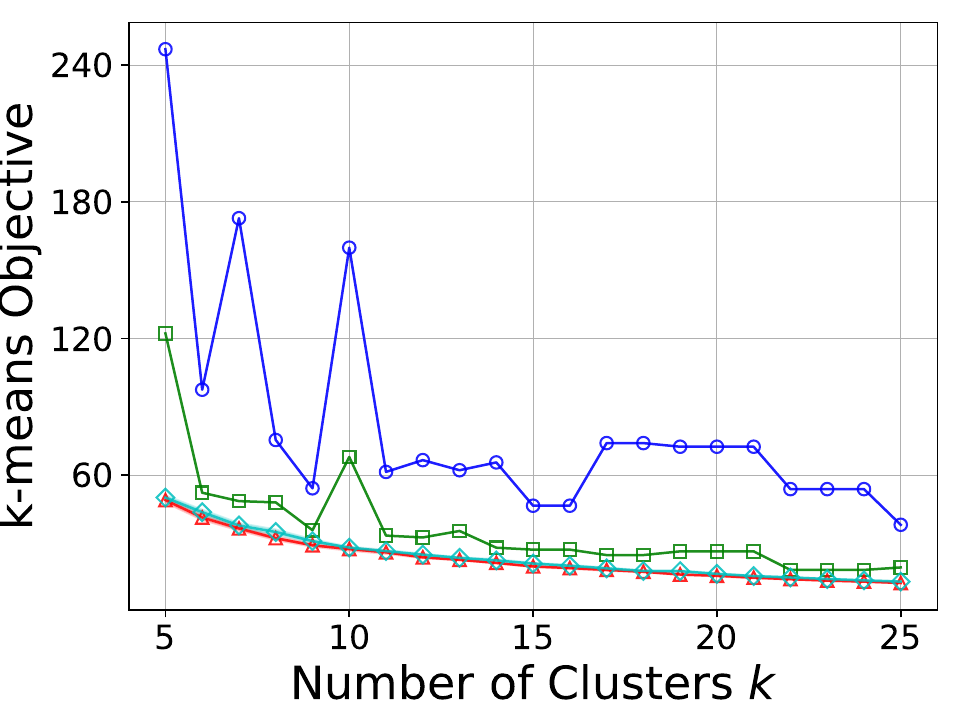}
    \end{minipage}
    \hspace{1em}
    \begin{minipage}[b]{0.3\textwidth}
      \includegraphics[width=\linewidth]{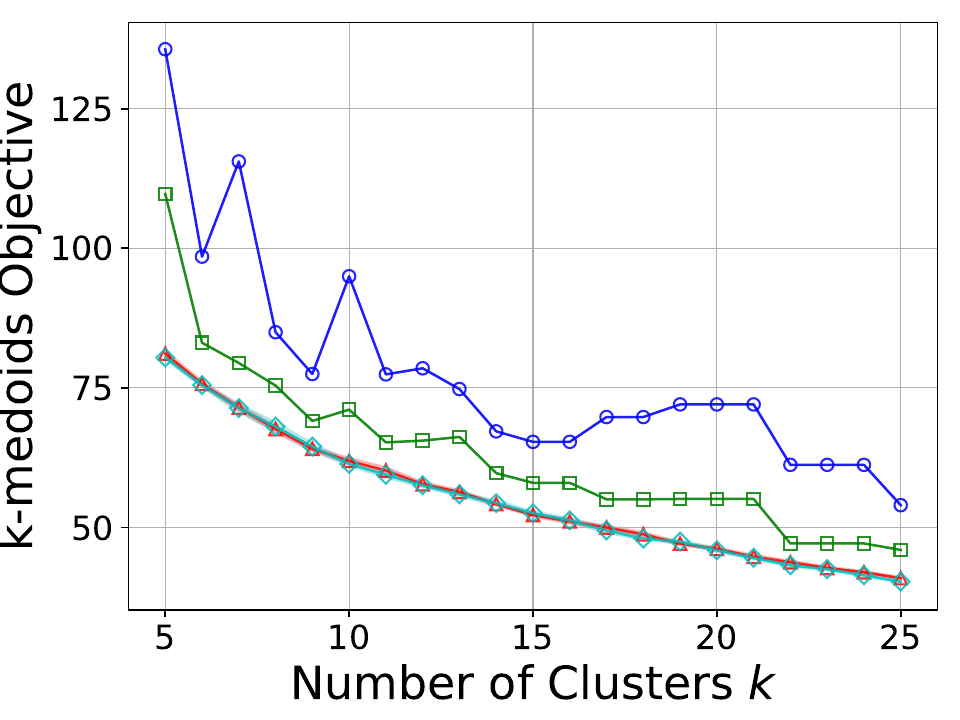}
    \end{minipage}
    \hspace{1em}
    \begin{minipage}[b]{0.3\textwidth}
      \includegraphics[width=\linewidth]{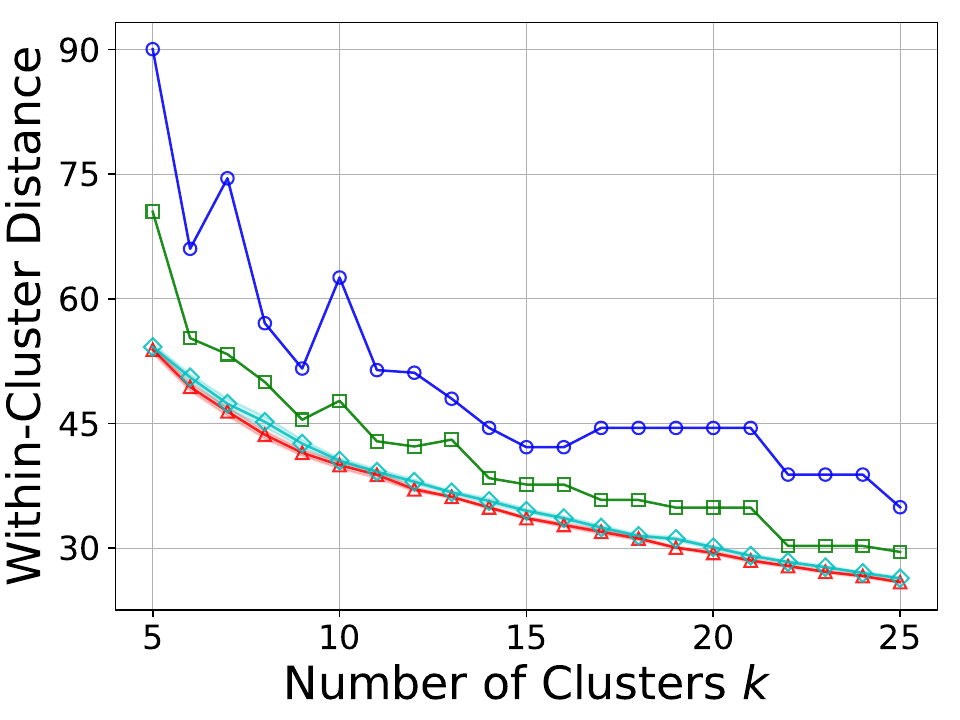}
    \end{minipage}
  \end{minipage}

  \caption{Results for the \emph{Iris} dataset. The common legend appears at the top. The top five plots compare the algorithms on our five metrics when varying $\lambda \in \set{0.1,0.2,\ldots,0.9}$ and fixing $k=15$, and the bottom five plots correspond to fixing $\lambda = 0.5$ and varying $k \in \set{5,6,\ldots,25}$.}
  \label{fig:Iris}
\end{figure}

\section{Experiments}\label{app:experiments}

In this section, our goal is to empirically compare proportionally fair algorithms designed in this work to classical clustering methods, in terms of both proportional fairness metrics and traditional metrics that the classical methods optimize. 

Because the classical methods take only a single distance metric $d$ as input, we limit our empirical analysis to weighted single metric loss, which depends on a single metric $d$ along with a parameter $\lambda \in [0,1]$. 

\paragraph{Algorithms.} We evaluate four algorithms.
\begin{itemize}
    \item \emph{GC}: \Cref{alg:greedy-capture-greedy-centroid}, which runs non-centroid greedy capture followed by greedy centroid selection, and attains $\frac{2}{\lambda}$-core (\Cref{lem:greedy-capture-core}). We denote it by 'GC'.
    \item \emph{SemiBall}: \Cref{alg:weighted-algorithm}, which mimics centroid greedy capture but with limited cluster switching, and attains $\frac{\sqrt{2\lambda-11\lambda^2+13}+3-\lambda}{2-2\lambda}$-core (\Cref{lem:semi-ball}). We denote it by 'SemiBall'.
    \item \emph{$k$-means++}: This classical algorithm approximately optimizes the $k$-means objective, $\sum_{i \in \calN} d(i,\calX(i))^2$.
    \item \emph{$k$-medoids:} This classical algorithm approximately optimizes the $k$-medoids objective, $\sum_{i \in \calN} d(i,\calX(i))$.
\end{itemize}

The $k$-means++ and $k$-medoids implementations are based on the \texttt{Scikit-learn} package in Python.\footnote{Scikit-learn: \url{https://scikit-learn.org}} 

It is worth remarking that three of the four algorithms---GC, $k$-means++, and $k$-medoids---take only the distance metric $d$ as input and not the parameter $\lambda$. Hence, the clustering returned by these methods is independent of $\lambda$, although its approximations to proportional fairness metrics would depend on $\lambda$.  

\paragraph{Datasets.}
We use three datasets from the UCI Machine Learning Repository~\cite{kelly2025uci}: \textit{Iris}, \textit{Pima Indians Diabetes}, and \textit{Adult}. These are the three datasets used by \citet{CMS24} for their experiments with non-centroid clustering. 

\begin{itemize}
    \item The \textit{Iris} dataset contains 150 data points with measurements of sepal and petal dimensions across three species of iris flowers.
    \item The (Pima Indians) \textit{Diabetes} dataset includes health-related indicators such as insulin dosage, glucose level, and number of pregnancies.
    \item The \textit{Adult} dataset consists of 48,842 instances, with both categorical features (such as race and education) and numerical features (such as age and hours worked per week). It is typically used for the classification task of predicting whether an individual's income exceeds a specified threshold, but here it is used for clustering the individuals. We apply a one-hot encoding to all categorical features.\footnote{Categorical features encoded as numerical values (e.g., $1,2,3,\ldots$) yield misleading distances in the absence of the one-hot encoding.}
\end{itemize}

\paragraph{Metrics.} We evaluate the four algorithms above on five metrics. 

The first two metrics are core and FJR violations (i.e., the smallest $\alpha$ such that the clustering is in $\alpha$-core or $\alpha$-FJR); these two are proportional fairness metrics. We compute the exact violations of core and FJR by solving integer linear programs, which, given a clustering, find a deviating cluster $(C,x)$ recording the largest possible violation of the corresponding metric. 

The next three metrics are the $k$-means objective (stated above), the $k$-medoids objective (stated above), and average within-cluster distance, $\sum_{t\in[k]}\frac{1}{|C_t|}\sum_{i,j\in C_t} d(i, j)$~\citep{ahmadi2022individualpreferencestabilityclustering}, which is another popular objective; these three can be considered efficiency metrics. These metrics are easy to compute given a clustering. 

\paragraph{Experimental setup.}
Following \citet{CFLM19,CMS24}, we assume $\mathcal{N} = \mathcal{M}$, and use the Euclidean $L_2$ distance metric. We vary two parameters: the number of clusters $k \in \set{5,6,\ldots,25}$ and the weighted loss parameter $\lambda \in \set{0.1,0.2,\ldots,0.9}$. Specifically, we report experimental results when varying $\lambda \in \set{0.1,0.2,\ldots,0.9}$ while fixing $k=15$, and when varying $k \in \set{5,6,\ldots,25}$ while fixing $\lambda = 0.5$.

While our algorithms (and the traditional baselines) scale well, solving integer linear programs to compute the exact core and FJR violations of the clustering they return is computationally intensive. Hence, for the two larger datasets, namely \textit{Adult} and \textit{Diabetes}, we randomly sample $100$ data points in each of $40$ independent trials. For the \emph{Iris} dataset, since $k$-means++ uses a randomized initialization step, we run it $20$ times. Our plots show the five metrics above on average along with 95\% confidence intervals. 

\paragraph{Results.} The results for \emph{Adult}, \emph{Diabetes}, and \emph{Iris} datasets are presented in \Cref{fig:Adults,fig:Diabetes,fig:Iris}, respectively. The qualitative takeaways are similar across the three datasets, so we use the \emph{Adult} dataset (\Cref{fig:Adults}) as an example to illustrate them. 

First, our two algorithms, GC and SemiBall, achieve near-exact core and FJR, which is significantly better than their worst-case approximation guarantees from \Cref{lem:greedy-capture-core,lem:semi-ball}. The classical algorithms, $k$-means++ and $k$-medoids, admit notable core and FJR violations, and their fairness deteriorates as $k$ increases; this may be because $k$ is still less than $n/2$, so an increase in $k$ increases the number of small coalitions (of size close to $\nicefrac{n}{k}$) that can deviate. In particular, $k$-means++ is highly unfair, especially for larger values of $k$ and smaller values of $\lambda$ (i.e., when the centroid loss is dominant). Interestingly, $k$-medoids is reasonably fair; \citet{CMS24} observe it to be unfair for the setting of $\lambda = 1$ (non-centroid loss), but our results show that its fairness improves as $\lambda$ decreases, and it becomes almost as fair as GC and SemiBall, producing near-exact core and FJR clustering, when $\lambda \to 0$ (centroid loss). 

The increased fairness of GC and SemiBall comes at the cost of decreased efficiency metrics as compared to $k$-means++ and $k$-medoids. The efficiency loss is, however, small, especially when either $k$ is large or $\lambda$ is small. Among our two algorithms, SemiBall seems to consistently perform better in efficiency metrics (and equal in fairness metrics). 

Hence, our main takeaway is that SemiBall and $k$-medoids are compelling algorithmic choices for semi-centroid clustering; the choice between the two may rely on the value of $k$ and $\lambda$, as well as the principal's desired fairness-efficiency tradeoff.

\section{Discussion}\label{sec:discussion}

Our work leaves open a number of immediate technical questions, which can be gleaned from the gaps between upper and lower bounds in \Cref{table1}. Future work can also investigate generalizations such as non-additive combinations of centroid and non-centroid losses and several exciting applications. 

One particularly important special case of this model that we leave largely unexplored is when $\calM=\calN$. As mentioned previously, this restriction applied to centroid clustering is what \citet{ebadian2025boosting} use to model the problem of sortition (randomly selecting a committee of agents that represents the whole population), and can also be used to model many problems where one wishes to group a set of $n$ datapoints into $k$ clusters, and return a single point to serve as a representative point for that cluster. This can have numerous applications such as coreset selection~\cite{chai2023efficient}, scalable LLM fine-tuning~\cite{yang2024smalltolarge}, feature selection~\cite{liu2011feature}, and dimensionality reduction~\cite{napoleon2011new}. However, our semi-centroid algorithms offer no guarantees that the centroid it selects for a cluster will be a point that is in that cluster, which may be desirable in all these cases. Exploring algorithms with this additional constraint would be an interesting future direction.

When $\calM \neq \calN$, such as when partitioning accepted papers to a conference into $k$ sessions and assigning a title to each session, or partitioning students into $k$ teams and assigning each team a course project, one can utilize our algorithms for the dual metric loss as two different metrics may likely be needed in such cases. 

Some of these applications may be large scale, for which our polynomial-time algorithms may not suffice and near-linear time algorithms may need to be devised. Investigating proportional fairness guarantees attainable via such ultra-fast algorithms remains an uncharted territory. 

\bibliographystyle{plainnat}
\bibliography{abb, nisarg, ultimate}

\begin{thebibliography}{30}
\providecommand{\natexlab}[1]{#1}
\providecommand{\url}[1]{\texttt{#1}}
\expandafter\ifx\csname urlstyle\endcsname\relax
  \providecommand{\doi}[1]{doi: #1}\else
  \providecommand{\doi}{doi: \begingroup \urlstyle{rm}\Url}\fi

\bibitem[Ahmadi et~al.(2022)Ahmadi, Awasthi, Khuller, Kleindessner, Morgenstern, Sukprasert, and Vakilian]{ahmadi2022individualpreferencestabilityclustering}
Saba Ahmadi, Pranjal Awasthi, Samir Khuller, Matthäus Kleindessner, Jamie Morgenstern, Pattara Sukprasert, and Ali Vakilian.
\newblock Individual preference stability for clustering, 2022.
\newblock URL \url{https://arxiv.org/abs/2207.03600}.

\bibitem[Arkin et~al.(2009)Arkin, Bae, Efrat, Okamoto, Mitchell, and Polishchuk]{ABEOMP09}
Esther~M. Arkin, Sang~Won Bae, Alon Efrat, Kazuya Okamoto, Joseph~SB Mitchell, and Valentin Polishchuk.
\newblock Geometric stable roommates.
\newblock \emph{Information Processing Letters}, 109\penalty0 (4):\penalty0 219--224, 2009.

\bibitem[Aziz et~al.(2024)Aziz, Lee, Chu, and Vollen]{ALCV24}
Haris Aziz, Barton~E Lee, Sean~Morota Chu, and Jeremy Vollen.
\newblock Proportionally representative clustering.
\newblock In \emph{Proceedings of the 20th Conference on Web and Internet Economics (WINE)}, 2024.
\newblock Forthcoming.

\bibitem[Balcan et~al.(2019)Balcan, Dick, Noothigattu, and Procaccia]{BDNP19}
Maria-Florina Balcan, Travis Dick, Ritesh Noothigattu, and Ariel~D Procaccia.
\newblock Envy-free classification.
\newblock In \emph{Proceedings of the 33rd Annual Conference on Neural Information Processing Systems (NeurIPS)}, pages 1238--1248, 2019.

\bibitem[Bera et~al.(2019)Bera, Chakrabarty, Flores, and Negahbani]{bera2019fair}
Suman Bera, Deeparnab Chakrabarty, Nicolas Flores, and Maryam Negahbani.
\newblock Fair algorithms for clustering.
\newblock In \emph{Proceedings of the 32nd Annual Conference on Neural Information Processing Systems (NeurIPS)}, pages 4955--4966, 2019.

\bibitem[Caragiannis et~al.(2024)Caragiannis, Micha, and Shah]{CMS24}
Ioannis Caragiannis, Evi Micha, and Nisarg Shah.
\newblock Proportional fairness in non-centroid clustering.
\newblock In \emph{Proceedings of the 38th Annual Conference on Neural Information Processing Systems (NeurIPS)}, pages 19139--19166, 2024.

\bibitem[Chai et~al.(2023)Chai, Wang, Tang, Yuan, Liu, Deng, and Wang]{chai2023efficient}
Chengliang Chai, Jiayi Wang, Nan Tang, Ye~Yuan, Jiabin Liu, Yuhao Deng, and Guoren Wang.
\newblock Efficient coreset selection with cluster-based methods.
\newblock In \emph{Proceedings of the 29th International Conference on Knowledge Discovery and Data Mining (KDD)}, pages 167--178, 2023.

\bibitem[Chaudhury et~al.(2024)Chaudhury, Murhekar, Yuan, Li, Mehta, and Procaccia]{CMYL+24}
Bhaskar~Ray Chaudhury, Aniket Murhekar, Zhuowen Yuan, Bo~Li, Ruta Mehta, and Ariel~D Procaccia.
\newblock Fair federated learning via the proportional veto core.
\newblock In \emph{Proceedings of the 41st International Conference on Machine Learning (ICML)}, pages 42245--42257, 2024.

\bibitem[Chen et~al.(2019)Chen, Fain, Lyu, and Munagala]{CFLM19}
Xingyu Chen, Brandon Fain, Liang Lyu, and Kamesh Munagala.
\newblock Proportionally fair clustering.
\newblock In \emph{Proceedings of the 36th International Conference on Machine Learning (ICML)}, pages 1032--1041, 2019.

\bibitem[Chen et~al.(2018)Chen, Podimata, Procaccia, and Shah]{CPPS18}
Yiling Chen, Chara Podimata, Ariel~D Procaccia, and Nisarg Shah.
\newblock Strategyproof linear regression in high dimensions.
\newblock In \emph{Proceedings of the 19th ACM Conference on Economics and Computation (EC)}, pages 9--26, 2018.

\bibitem[Chhabra et~al.(2021)Chhabra, Masalkovait{\.e}, and Mohapatra]{chhabra2021overview}
Anshuman Chhabra, Karina Masalkovait{\.e}, and Prasant Mohapatra.
\newblock An overview of fairness in clustering.
\newblock \emph{IEEE Access}, 9:\penalty0 130698--130720, 2021.

\bibitem[Chierichetti et~al.(2017)Chierichetti, Kumar, Lattanzi, and Vassilvitskii]{chierichetti2017fair}
Flavio Chierichetti, Ravi Kumar, Silvio Lattanzi, and Sergei Vassilvitskii.
\newblock Fair clustering through fairlets.
\newblock In \emph{Proceedings of the 30th Annual Conference on Neural Information Processing Systems (NeurIPS)}, pages 5029--5037, 2017.

\bibitem[Dekel et~al.(2010)Dekel, Fischer, and Procaccia]{DFP10}
Ofer Dekel, Felix Fischer, and Ariel~D. Procaccia.
\newblock Incentive compatible regression learning.
\newblock \emph{Journal of Computer and System Sciences}, 76\penalty0 (8):\penalty0 759--777, 2010.

\bibitem[Ebadian and Micha(2025)]{ebadian2025boosting}
Soroush Ebadian and Evi Micha.
\newblock Boosting sortition via proportional representation.
\newblock In \emph{Proceedings of the 24th International Conference on Autonomous Agents and Multi-Agent Systems (AAMAS)}, pages 667--675, 2025.

\bibitem[Fan et~al.(2023)Fan, Peng, Tian, and Fain]{FPTF23}
Ziming Fan, Nianli Peng, Muhang Tian, and Brandon Fain.
\newblock Welfare and fairness in multi-objective reinforcement learning.
\newblock In \emph{Proceedings of the 22nd International Conference on Autonomous Agents and Multi-Agent Systems (AAMAS)}, pages 1991--1999, 2023.

\bibitem[Hossain and Shah(2021)]{HS21}
Safwan Hossain and Nisarg Shah.
\newblock The effect of strategic noise in linear regression.
\newblock \emph{Autonomous Agents and Multi-Agent Systems}, 35\penalty0 (2):\penalty0 21, 2021.

\bibitem[Hossain et~al.(2021)Hossain, Micha, and Shah]{HMS21}
Safwan Hossain, Evi Micha, and Nisarg Shah.
\newblock Fair algorithms for multi-agent multi-armed bandits.
\newblock In \emph{Proceedings of the 34th Annual Conference on Neural Information Processing Systems (NeurIPS)}, pages 24005--24017, 2021.

\bibitem[Kellerhals and Peters(2024)]{KP24}
Leon Kellerhals and Jannik Peters.
\newblock Proportional fairness in clustering: A social choice perspective.
\newblock In \emph{Proceedings of the 37th Annual Conference on Neural Information Processing Systems (NeurIPS)}, pages 111299--111317, 2024.

\bibitem[Kelly et~al.(2025)Kelly, Longjohn, and Nottingham]{kelly2025uci}
Markelle Kelly, Rachel Longjohn, and Kolby Nottingham.
\newblock The {UCI} machine learning repository.
\newblock \url{https://archive.ics.uci.edu}, 2025.
\newblock Accessed: 2025-05-15.

\bibitem[Krishnaswamy et~al.(2021)Krishnaswamy, Jiang, Wang, Cheng, and Munagala]{KJWC+21}
Anilesh Krishnaswamy, Zhihao Jiang, Kangning Wang, Yu~Cheng, and Kamesh Munagala.
\newblock Fair for all: Best-effort fairness guarantees for classification.
\newblock In \emph{Proceedings of the 24th International Conference on Artificial Intelligence and Statistics (AISTATS)}, pages 3259--3267, 2021.

\bibitem[Li et~al.(2021)Li, Li, Sun, Wang, and Wang]{li2021approximate}
Bo~Li, Lijun Li, Ankang Sun, Chenhao Wang, and Yingfan Wang.
\newblock Approximate group fairness for clustering.
\newblock In \emph{Proceedings of the 38th International Conference on Machine Learning (ICML)}, pages 6381--6391, 2021.

\bibitem[Li et~al.(2023)Li, Micha, Nikolov, and Shah]{LMNS23}
Lily Li, Evi Micha, Aleksandar Nikolov, and Nisarg Shah.
\newblock Partitioning friends fairly.
\newblock In \emph{Proceedings of the 37th AAAI Conference on Artificial Intelligence (AAAI)}, pages 5747--5754, 2023.

\bibitem[Liu et~al.(2011)Liu, Wu, and Zhang]{liu2011feature}
Huawen Liu, Xindong Wu, and Shichao Zhang.
\newblock Feature selection using hierarchical feature clustering.
\newblock In \emph{Proceedings of the 20th ACM International Conference on Information and Knowledge Management (CIKM)}, pages 979--984, 2011.

\bibitem[Meir et~al.(2012)Meir, Procaccia, and Rosenschein]{MPR12}
Reshef Meir, Ariel~D. Procaccia, and Jeffrey~S. Rosenschein.
\newblock Algorithms for strategyproof classification.
\newblock \emph{Artificial Intelligence}, 186:\penalty0 123--156, 2012.

\bibitem[Micha and Shah(2020)]{MS20}
Evi Micha and Nisarg Shah.
\newblock Proportionally fair clustering revisited.
\newblock In \emph{Proceedings of the 47th International Colloquium on Automata, Languages and Programming (ICALP)}, pages 85:1--85:16, 2020.

\bibitem[Napoleon and Pavalakodi(2011)]{napoleon2011new}
D~Napoleon and S~Pavalakodi.
\newblock A new method for dimensionality reduction using k-means clustering algorithm for high dimensional data set.
\newblock \emph{International Journal of Computer Applications}, 13\penalty0 (7):\penalty0 41--46, 2011.

\bibitem[Perote and {Perote-Pe\~na}(2004)]{PP04}
Javier Perote and Juan {Perote-Pe\~na}.
\newblock Strategy-proof estimators for simple regression.
\newblock \emph{Mathematical Social Sciences}, 47:\penalty0 153--176, 2004.

\bibitem[Peters et~al.(2021)Peters, Pierczyński, and Skowron]{PPS21}
Dominik Peters, Grzegorz Pierczyński, and Piotr Skowron.
\newblock Proportional participatory budgeting with additive utilities.
\newblock In \emph{Proceedings of the 34th Annual Conference on Neural Information Processing Systems (NeurIPS)}, page 12726–12737, 2021.

\bibitem[Yang et~al.(2024)Yang, Mishra, Chiang, and Mirzasoleiman]{yang2024smalltolarge}
Yu~Yang, Siddhartha Mishra, Jeffrey Chiang, and Baharan Mirzasoleiman.
\newblock Smalltolarge (s2l): Scalable data selection for fine-tuning large language models by summarizing training trajectories of small models.
\newblock In \emph{Proceedings of the 37th Annual Conference on Neural Information Processing Systems (NeurIPS)}, pages 83465--83496, 2024.

\bibitem[Zhang et~al.(2023)Zhang, Malekmohammadi, Chen, and Yu]{ZMCY23}
Guojun Zhang, Saber Malekmohammadi, Xi~Chen, and Yaoliang Yu.
\newblock Proportional fairness in federated learning.
\newblock \emph{Transactions on Machine Learning Research}, 109\penalty0 (4):\penalty0 219--224, 2023.

\end{thebibliography}

\newpage
\appendix
\section*{\centering{\LARGE Appendix}}

\bigskip
\section{Greedy Capture Algorithms}\label{app:greedy-capture}

In the three dimensions discussed in \Cref{sec:greedy-capture}, which induce eight variants of greedy capture, centroid greedy capture (\Cref{alg:centroid-greedy-capture}) makes the following choices:
\begin{enumerate}
    \item[(1)] The balls grow around the centroids.
    \item[(2)] After a ball ``captures'' a group of $\nicefrac{n}{k}$ points and opens, it continues to grow and capture any uncaptured points as soon as it contains them.
    \item[(3)] After all agents have been captured, each agent gets an opportunity to switch to the cluster containing the centroid that is closest to it.
\end{enumerate}

We can show that the $(1+\sqrt2)$-core guarantee of this algorithm with respect to the centroid loss is retained even when disabling one of the features (2) and (3). 

\begin{theorem}
    Consider a variant of greedy capture in which balls grow around the agents, and open balls keep growing and/or agents are allowed to switch clusters at the end of the algorithm (at least one of the two holds). Then, it always returns a clustering in the $(1+\sqrt2)$-core with respect to the centroid loss.
\end{theorem}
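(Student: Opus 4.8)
The plan is to adapt the argument by which \citet{CFLM19} show that centroid greedy capture outputs a clustering in the $(1+\sqrt2)$-core, isolating the single place where they invoke ``the open ball around the deviation center'' and replacing it by a short two-agent argument, so that only one of the two extra features (open balls keep growing; agents re-assign to the nearest selected center at the end) is actually needed. Suppose for contradiction the returned clustering $\calX = \{(C_1,x_1),\dots,(C_k,x_k)\}$ is blocked in the centroid paradigm by $(S,y)$ with $|S| \ge n/k$, i.e.\ $\lossc_i(\calX) > (1+\sqrt2)\,\distc(i,y)$ for every $i \in S$. Put $\rho = \max_{i \in S}\distc(i,y)$ and let $j^* \in S$ attain this maximum.

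First I would case-split on the scale $\delta_y$ at which the ball around $y$ opens. If $\delta_y \le \rho$, then $y$ is among the selected centers, and every $i \in S$ ends up with a centroid at distance at most $\rho$: either $i$ re-assigns to $y$ at the end (the switching feature), or the open ball around $y$ keeps growing (the growth feature) and therefore captures $i$ at some scale $\le \max(\delta_y,\distc(i,y)) \le \rho$ unless a different ball captured $i$ at an even smaller scale --- in all cases $\lossc_i(\calX) \le \rho$. Combined with the blocking inequality this forces $\rho = \distc(j^*,y) < \rho/(1+\sqrt2)$, a contradiction. Hence $\delta_y > \rho$; since $S$ is a set of at least $n/k$ agents lying inside $B(y,\rho)$, the ball around $y$ failing to open by scale $\rho$ means fewer than $n/k$ agents of $S$ are still uncaptured at scale $\rho$, so at least one member of $S$ is captured earlier. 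Let $i^* \in S$ be the first member of $S$ captured, at scale $\delta \le \rho$, inside the cluster of the selected center $x^*$, so that $\distc(i^*,x^*) \le \delta$ and $\lossc_{i^*}(\calX) \le \delta$.

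The finishing move is the two-agent step. Feeding $i^*$ into the blocking inequality gives $\distc(i^*,y) < \delta/(1+\sqrt2) = (\sqrt2-1)\delta$, hence by the triangle inequality $\distc(x^*,y) \le \distc(x^*,i^*) + \distc(i^*,y) \le \sqrt2\,\delta$. Now I would show that $x^*$ is a center available to the distant agent $j^*$ at cost $\distc(j^*,x^*) \le \distc(j^*,y) + \distc(y,x^*) \le \rho + \sqrt2\,\delta \le (1+\sqrt2)\rho$ (the last step using $\delta \le \rho$): under the switching feature this is immediate because $j^*$ re-assigns to its nearest selected center, which is no farther than $x^*$; under the growth feature instead, the open ball of $x^*$ --- or some ball that reaches $j^*$ no later --- captures $j^*$ by scale $\distc(j^*,x^*)$, so $j^*$'s final centroid is at distance at most that. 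Either way $\lossc_{j^*}(\calX) \le (1+\sqrt2)\rho$, contradicting $\lossc_{j^*}(\calX) > (1+\sqrt2)\rho$.

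I expect the main obstacle to be precisely this last fork: proving that the center $x^*$ ``responsible for'' the early-captured agent $i^*$ is usable by $j^*$ at distance at most $\distc(j^*,x^*)$. Under the re-assignment feature this is a one-line triangle inequality, but under the ``open balls keep growing'' feature one has to argue carefully, using monotonicity of the common ball radius, that $j^*$ cannot remain uncaptured past scale $\distc(j^*,x^*)$ and that whichever ball catches it supplies a centroid no farther away; one must also confirm that, when the balls grow around agents rather than centers, the extra hop through the opening agent's nearest feasible center does not inflate the constant past $1+\sqrt2$. The remaining ingredients --- the pigeonhole step locating $i^*$, the boundary case $\delta_y \le \rho$, and fixing a tie-breaking rule for simultaneous captures or openings --- are routine and do not interact with the constant.
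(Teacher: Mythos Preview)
Your proposal is correct and follows essentially the same route as the paper's proof: both arguments single out the first-captured agent $i^*$ in $S$ (captured by some center $x^*$ at scale $\delta$), the farthest agent $j^*$ from $y$, show $\delta \le \rho$ because otherwise the ball around $y$ would have opened first, and then use whichever feature is present to conclude $\lossc_{j^*}(\calX) \le d(j^*,x^*)$ before bounding $d(j^*,x^*) \le (1+\sqrt2)\rho$ via the triangle inequality. The organizational differences are minor: you separate out an explicit Case~1 ($y$ itself becomes a selected center), which the paper absorbs into the main argument, while the paper singles out the sub-case $\calX(i)=\calX(j)$, which you do not need because you bound $\lossc_{j^*}(\calX)$ directly by $d(j^*,x^*)$ rather than by $d(j^*,x_{\calX(j^*)})$ first; your chain $d(x^*,y)\le\sqrt2\,\delta \Rightarrow d(j^*,x^*)\le(1+\sqrt2)\rho$ and the paper's chain $d(j,x_t)\le d(i,x_t)+d(i,y)+d(j,y)\le(2+\tfrac{1}{1+\sqrt2})d(j,y)$ are algebraically the same computation.

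Two small remarks. First, in the growth-feature branch of your finishing move, the claim ``$j^*$ is captured by scale $d(j^*,x^*)$'' should really read ``by scale $\max(\delta,d(j^*,x^*))$,'' since the ball around $x^*$ only opens at scale $\delta$; this makes no difference to the bound because $\max(\delta,d(j^*,x^*)) \le (1+\sqrt2)\rho$ either way. Second, the concern you flag at the end---that the statement literally says balls grow around \emph{agents}---is well-spotted: the paper's own proof, like yours, treats the balls as growing around \emph{centroids} (it speaks of $y$'s ball opening and of $x_t$ capturing $i$), so the word ``agents'' in the statement is a slip for ``centroids,'' and your worry about an extra hop through a nearest feasible center does not arise.
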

\begin{proof}
    For contradiction assume this is false, let $\calX = \set{(C_1,x_1),\dots,(C_k,x_k)}$ be the clustering found by such an algorithm, and suppose that there is a deviating cluster $(S,y)$ where $|S|\geq\lceil\frac{n}{k}\rceil$ and $(1+\sqrt2) \cdot d(i, y) < d(i, x_{\calX(i)})$ for every $i\in S$.

    Let $i = \argmin_{i' \in S}\calX(i')$ be the first agent in $S$ that was captured by the algorithm, and let $t = \calX(i)$. We know that $x_t \neq y$, or $i$ would clearly not be a part of the deviating cluster $S$. Let $j=\argmax_{j'\in S}d(j',y)$ be the agent in $S$ that is farthest from $y$, and let $\calX(j) = t'$.

    First, note that we must have $d(j,y) \geq d(i,x_t)$. If this were false, and $d(j,y) < d(i,x_t)$, then by the way that greedy capture functions, the cluster $(S,y)$ would be captured prior to $i$ being captured by $x_t$, which would give a contradiction.

    With the above inequality, we can show that $t \neq t'$. If this were not the case, and $t = t'$, then we could say the following:
    \begin{align*}
        (1+\sqrt2) \cdot d(j,y) &< d(j,x_{t'}) = d(j,x_t) \\
        &\leq d(i,x_t) + d(i,y) + d(j,y) \text{ //Triangle inequality}\\
        &< d(i,x_t) + \frac{1}{1+\sqrt2} \cdot d(i,x_t) + d(j,y) \\
        &\leq \left(\frac{1}{1+\sqrt2}+2\right) \cdot d(j,y).
    \end{align*}
    Hence, $1+\sqrt{2}<\frac{1}{1+\sqrt{2}}+2=1+\sqrt{2}$, which is a contradiction. 

    We can also show that $d(j, x_t) \geq d(j, x_{t'})$. We can do this by considering two cases.

    \emph{Case 1: Open balls keep growing.} If open balls keep growing in the greedy capture variant under consideration, and if $d(j, x_t) < d(j, x_{t'})$, then the fact that $x_t$ captures $i$ prior to $x_{t'}$ capturing $j$ implies that at the point in the algorithm when $x_{t'}$ captures $j$, $x_t$ will have already captured $i$, and $j$ will have already been in the radius of $x_t$'s ball at a previous value of $\delta$. This would mean that $x_t$ must have initially captured $j$. But this would contradict $\calX(j) = t'$, since regardless of if agents are allowed to switch clusters in the last step of the algorithm, $j$ could not end up in the $t'$-th cluster at the end of the algorithm.

    \emph{Case 2: Agents can switch clusters}. If agents can switch clusters at the end of the greedy capture variant under consideration, and if $d(j, x_t) < d(j, x_{t'})$, then regardless of whether open balls continue to grow, $j$ would always want to switch to $x_t$ over $x_{t'}$, contradicting the fact that $\calX(j) = t'$.

    With this inequality, we can say the following:
    \begin{align*}
        (1+\sqrt2) \cdot d(j,y) &< d(j,x_{t'})\\
        &\leq d(j,x_t) \\
        &\leq d(i,x_t) + d(i,y) + d(y,j)  \\
        &< d(i,x_t) + \frac{1}{1+\sqrt2} \cdot d(i,x_t) + d(j,y) \\
        &\leq \left(\frac{1}{1+\sqrt2}+2\right) \cdot d(j,y),
    \end{align*}
    Again yielding $1+\sqrt{2}<\frac{1}{1+\sqrt{2}}+2=1+\sqrt{2}$, which is a contradiction. 
\end{proof}

\section{Impossibility Results for Balanced Clustering}

In this section, we provide two impossibility results for \emph{balanced} clustering in our semi-centroid world. Informally, balancedness dictates that the partition of the set of agents contain clusters of roughly equal size. In many applications such as creating groups of students for class projects or designing oral presentation sessions at a conference, this may be a desirable criterion. 

Formally, one may refer to a clustering $\calX = \set{(C_1,x_1),\ldots,(C_k,x_k)}$ as \emph{balanced} if $|C_t| \in \set{\floor{\nicefrac{n}{k}},\ceil{\nicefrac{n}{k}}}$ for all $t \in [k]$ (equivalently, $||C_t|-|C_{t'}|| \le 1$ for all $t,t' \in [k]$). However, both our impossibility results hold even under the following weaker criterion.

\begin{definition}[Balanced Clustering]\label{def:balanced}
    We call a clustering $\calX = \set{(C_1,x_1),\ldots,(C_k,x_k)}$ \emph{balanced} if, when $k$ divides $n$, $|C_t| = \nicefrac{n}{k}$ for all $t \in [k]$. 
\end{definition}

It may be encouraging that non-centroid greedy capture indeed produces a balanced clustering under this weak version.\footnote{It may be possible that as soon as a ball covers at least $\ceil{\nicefrac{n}{k}}$ uncaptured agents, it actually covers many more agents due to them being equidistant from the center. However, in this case, one can modify the algorithm to let the ball capture exactly $\ceil{\nicefrac{n}{k}}$ of the uncaptured agents contained in the ball, leaving the rest on the boundary to be captured by other balls in the future. This modification does not affect any of the guarantees achieved by the algorithm. Further, it produces a balanced clustering as per \Cref{def:balanced}. Note that when $k$ does \emph{not} divide $n$, the last non-empty cluster may contain fewer than even $\floor{\nicefrac{n}{k}}$ agents.} The centroid greedy capture, however, offers no such guarantees. This is because, while it still opens balls as soon as they capture $\ceil{\nicefrac{n}{k}}$ uncaptured agents, (i) open balls continue to grow and capture more agents, and (ii) agents are reassigned to the nearest center in the end. At least the second aspect is retained to some degree in both our algorithm for the dual metric loss (\Cref{alg:multi-metric-algorithm}) and one of our algorithms for the weighted single metric loss (\Cref{alg:weighted-algorithm}). This makes the final sizes of the clusters rather unpredictable (and not necessarily roughly equal). 

That said, \Cref{alg:greedy-capture-greedy-centroid}, which runs non-centroid greedy capture followed by a greedy centroid selection step, does in fact produce a balanced clustering and achieves $\frac{2}{\lambda}$-core with respect to the weighted single metric loss. Unfortunately, this becomes unbounded when $\lambda$ approaches $0$ (the centroid world). The next result shows that this is impossible to avoid when requiring a balanced clustering. Hence, our use of \Cref{alg:weighted-algorithm}, which computes an imbalanced clustering to achieve a finite core approximation in the small $\lambda$ regime (thus yielding a finite core approximation across the entire spectrum of $\lambda \in [0,1]$) is necessary.

\begin{theorem}\label{thm:balanced-impossibility-single-metric}
    For the weighted single metric loss with parameter $\lambda \in [0,1]$, there exists an instance in which no balanced clustering is in the $\alpha$-core for any $\alpha < \frac{1}{\sqrt{\lambda}}$.
\end{theorem}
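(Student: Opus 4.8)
The plan is to exhibit a single instance with all distances finite (so that Greedy Capture's $\nicefrac{2}{\lambda}$-core guarantee from \Cref{lem:greedy-capture-core} still applies and the bound we prove must indeed be finite for $\lambda>0$), parametrized by $\lambda$ and an auxiliary distance $p>0$ that I fix at the very end by an optimization, in which every balanced clustering admits a deviating coalition whose members all improve by a factor approaching $\nicefrac{1}{\sqrt{\lambda}}$. The interesting regime is $\lambda\in(0,1)$: at $\lambda=1$ the claim is vacuous since $\alpha$-core requires $\alpha\ge 1$, and at $\lambda=0$ it asserts that balanced (purely centroid) clustering admits no finite approximation, which is consistent with $\nicefrac{2}{\lambda}$ also being vacuous there.

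The instance is engineered so that the only ``efficient'' clusterings are unbalanced. I would take $n=km$ with $m=\nicefrac{n}{k}\ge 2$ and plant several mutually conflicting cohesive units, each of size strictly larger than $m$ (each a near-clique whose agents sit close to a dedicated feasible center, with intra-unit distances of order $p$), arranged around a bowtie/cycle pattern in the spirit of \Cref{fig:bobw-impossibility} but with the infinite distances replaced by a finite value and the ``arms'' scaled by $p$. A short counting argument then shows that no balanced clustering — whose clusters all have size exactly $m$ — can keep any one of these units intact: in every balanced clustering, at least one agent of some unit is displaced into a block containing agents foreign to its unit, and in fact at least one further unit member also shares a size-$m$ block with a foreign agent. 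The crucial design constraint is that each agent's only low-loss options require company from outside any single size-$m$ block, so that confining an agent to a size-$m$ cluster necessarily inflates either its centroid distance or (more importantly) its non-centroid loss by a fixed amount of order $1$.

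Given a fixed balanced clustering $\calX$, I would locate such a displaced agent and form the deviation consisting of its entire cohesive unit (of size $\ge m$, so a legitimate coalition) together with that unit's dedicated center. Every member of the unit then has loss of order $p$ under the deviation, while in $\calX$ its loss is bounded below either by a foreign distance (order $1$) or by the inflated non-centroid term forced by the size-$m$ constraint. These two lower bounds move oppositely in $p$, and equating them fixes the balancing point $p\asymp\sqrt{\lambda}$ — the $\lambda$-weight multiplying the non-centroid term is precisely what produces the square root — at which the guaranteed improvement factor is $\nicefrac{1}{\sqrt{\lambda}}$. To finish, I would check the (few, up to symmetry) remaining balanced clusterings and the finitely many admissible center assignments, confirming each yields such a coalition, and verify that the unshown distances extend to a pseudometric by setting each to the largest value permitted by the triangle inequality, exactly as in the proof of \Cref{thm:bobw-impossibility}.

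The main obstacle is the ``content glue agent'' phenomenon: the natural deviating coalition is a whole cohesive unit, but a balanced clustering can place up to $m$ members of that unit in one block where they are already satisfied, leaving them no incentive to join a deviation. This is exactly why the construction must guarantee that no size-$m$ block can simultaneously satisfy $m$ members of a unit of size $>m$ — every block must keep some unit member's non-centroid loss elevated — and making this hold uniformly over all balanced clusterings while pinning the worst-case improvement factor to exactly $\nicefrac{1}{\sqrt{\lambda}}$ (consistent with the $\nicefrac{2}{\lambda}$ attainable via \Cref{alg:greedy-capture-greedy-centroid}) is the delicate part of the argument.
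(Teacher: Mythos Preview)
Your proposal is a plan rather than a proof, and the plan itself diverges from what is needed in ways that create real work you never discharge. Two concrete issues:

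\textbf{(1) Wrong mechanism.} You build the impossibility around ``cohesive units of size strictly larger than $m$'' that cannot fit inside a balanced block, then deviate with the whole unit. This immediately runs into the ``content glue agent'' obstacle you yourself flag: up to $m$ members of such a unit can sit together in one balanced block with low loss, and then they have no incentive to join the deviation. You acknowledge this is ``the delicate part'' but never resolve it. The paper sidesteps this entirely with a different mechanism: it uses the six-agent instance of \Cref{fig:bobw-impossibility} with $k=3$ (so $m=2$) and cohesive units of size \emph{exactly} $m$, not larger. The tension is not ``unit too big to fit'' but rather a conflict between the non-centroid structure (which forces $\{a,b\}$ to be a cluster) and the centroid assignment (which cannot place the center at both $a$ and $b$, so whichever of them is not the center has loss $\ge 1$ and can profitably deviate with $c$ to $(\{b,c\},b)$). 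There is no glue-agent issue because the deviating coalitions are always of size exactly $m=2$.

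\textbf{(2) Wrong scale for $p$.} You write $p\asymp\sqrt{\lambda}$; the correct choice is $p=\nicefrac{1}{\sqrt{\lambda}}$. In the paper's instance the two competing improvement ratios are $p$ (when $\{a,b\}$ is broken: loss $\ge\lambda p$ drops to $\lambda$) and $\nicefrac{1}{\lambda p}$ (when $\{a,b\}$ is a cluster: loss $\ge 1$ drops to $\lambda p$); equating them gives $p=\nicefrac{1}{\sqrt{\lambda}}$ and a common ratio of $\nicefrac{1}{\sqrt{\lambda}}$.

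The paper's proof is a two-case analysis on a fixed six-point instance with no counting argument and no asymptotics; I suggest abandoning the general-$k$ framework and working the concrete instance directly.
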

\begin{proof}
    Consider the instance from \Cref{fig:bobw-impossibility}. Note that $k=3$ divides $n=6$. Set $p = \frac{1}{\sqrt{\lambda}}$. For contradiction, assume that there exists a balanced clustering $\calX = \set{(C_1,x_1),(C_2,x_2),(C_3,x_3)}$ with $|C_1|=|C_2|=|C_3|=2$, which is in the $\alpha$-core for some $\alpha < \frac{1}{\sqrt{\lambda}}$.

    First, note that $\set{a,b}$ must be one of the clusters. Suppose for contradiction that this is not the case. Then, at most one of agents $a$ and $b$ can be clustered with $c$ (i.e., at most one of $\calX(a)=\calX(c)$ and $\calX(b)=\calX(c)$ is true). Without loss of generality, suppose that agent $a$ is clustered with $c$ (a symmetric argument holds when agent $b$ is clustered with $c$). Then, $\lossa_a(\calX) \geq \lambda \cdot p = \sqrt{\lambda}$; this lower bound holds due to the non-centroid loss part alone. Since agent $b$ is clustered with an agent in $\set{d,e,f}$ that it is infinitely far from, $\lossa_b(\calX) = \infty$. Then, consider the deviating cluster $(\set{a,b},a)$, which yields $\lossa_a(\set{a,b},a) = \lambda$ and $\lossa_b(\set{a,b},a) = 1$. Compared to $\calX$, agent $b$ improves its loss by an infinite factor and agent $a$ improves its loss by a factor of $\frac{\sqrt{\lambda}}{\lambda} = \frac{1}{\sqrt{\lambda}}$, which contradicts the fact that $\calX$ is in the $\alpha$-core for some $\alpha < \frac{1}{\sqrt{\lambda}}$. 

    Therefore, $\set{a,b}$ must be one of the clusters and at least one of agents $a$ and $b$ is not the centroid of this cluster. Without loss of generality, assume that the centroid is not at agent $b$. Then, consider the deviating cluster $(\set{b,c},b)$. $\lossa_b(\set{b,c},b) = \lambda \cdot p = \sqrt{\lambda}$ and $\lossa_c(\set{b,c},b) = p$. In contrast, $\lossa_b(\calX) \ge 1$, so agent $b$ improves by a factor of at least $\frac{1}{\sqrt{\lambda}}$. Further, since $\set{a,b}$ is a cluster, agent $c$ must be clustered with an agent from $\set{d,e,f}$ that it is infinitely far from, yielding $\lossa_c(\calX) = \infty$. This means agent $c$ improves by an infinite factor. This contradicts the fact that $\calX$ is in the $\alpha$-core for some $\alpha < \frac{1}{\sqrt{\lambda}}$. This completes the proof.
\end{proof}

Since the broader class of dual metric loss contains the weighted single metric loss for all $\lambda \in [0,1]$, it follows that one cannot hope to design a balanced clustering algorithm that achieves any finite approximation to the core with respect to the dual metric loss.   

\begin{theorem}
    Under the dual metric loss, no algorithm that always produces a balanced clustering can achieve $\alpha$-core for any $\alpha \ge 1$.
\end{theorem}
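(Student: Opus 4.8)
The plan is to derive this as an immediate consequence of \Cref{thm:balanced-impossibility-single-metric}, exploiting the fact that the weighted single metric loss (for any fixed $\lambda \in [0,1]$) is a special case of the dual metric loss, obtained by taking $\distm = \lambda \cdot d$ and $\distc = (1-\lambda)\cdot d$ for the single metric $d$. The quantity $\nicefrac{1}{\sqrt\lambda}$ appearing in that theorem is unbounded as $\lambda \to 0$, so for any prescribed constant $\alpha$ we can choose $\lambda$ small enough to push the single-metric impossibility threshold past $\alpha$.

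Concretely, I would proceed as follows. Fix an arbitrary $\alpha \ge 1$ and an arbitrary algorithm $\mathcal{A}$ that always outputs a balanced clustering. Choose any $\lambda \in (0,1]$ with $\lambda < \nicefrac{1}{\alpha^2}$ (for instance $\lambda = \nicefrac{1}{(\alpha^2+1)}$), so that $\nicefrac{1}{\sqrt\lambda} > \alpha$. Apply \Cref{thm:balanced-impossibility-single-metric} with this value of $\lambda$ to obtain an instance $I$ (the one in \Cref{fig:bobw-impossibility} with $p = \nicefrac{1}{\sqrt\lambda}$) in which no balanced clustering is in the $\alpha'$-core for any $\alpha' < \nicefrac{1}{\sqrt\lambda}$; in particular, no balanced clustering of $I$ is in the $\alpha$-core under the weighted single metric loss with parameter $\lambda$. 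Reinterpret $I$ as a dual metric loss instance by setting $\distm = \lambda\cdot d$ and $\distc = (1-\lambda)\cdot d$; the resulting dual metric loss coincides pointwise with the weighted single metric loss used in \Cref{thm:balanced-impossibility-single-metric}, so the core (and hence the set of $\alpha$-core clusterings) is identical. Since $\mathcal{A}$ outputs a balanced clustering on $I$, that clustering cannot be in the $\alpha$-core, which establishes the claim.

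There is essentially no substantive obstacle here: the only things to check are the bookkeeping points that (i) $\nicefrac{1}{\sqrt\lambda} > \alpha$ indeed holds for the chosen $\lambda$, (ii) the chosen $\lambda$ lies in $(0,1]$ (which it does, since $\alpha \ge 1$ forces $\nicefrac{1}{\alpha^2} \le 1$, and we take $\lambda$ strictly positive), and (iii) the dual metric loss with the scaled metrics $\lambda\cdot d$ and $(1-\lambda)\cdot d$ really does reduce to the weighted single metric loss, which is stated verbatim in \Cref{sec:prelims}. The only mildly delicate point worth a sentence is that the argument must quantify correctly: we are not claiming a single instance defeats all $\alpha$ simultaneously, but rather that for each $\alpha$ a suitable instance (depending on $\alpha$ through $\lambda$) defeats every balanced-clustering algorithm, which is exactly what the theorem statement asks.
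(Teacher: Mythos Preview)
Your proposal is correct and follows essentially the same approach as the paper: both fix $\alpha$, choose $\lambda < \nicefrac{1}{\alpha^2}$, realize the weighted single metric loss instance from \Cref{thm:balanced-impossibility-single-metric} as a dual metric instance via $\distm = \lambda\cdot d$, $\distc = (1-\lambda)\cdot d$, and conclude that any balanced clustering has core approximation at least $\nicefrac{1}{\sqrt\lambda} > \alpha$. The only cosmetic difference is that the paper phrases this as a proof by contradiction, whereas you give a direct argument.
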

\begin{proof}
    Suppose for contradiction that there exists an algorithm that always produces a balanced clustering in the $\alpha$-core with respect to the dual metric loss. Then, consider the instance from \Cref{fig:bobw-impossibility}, and set the non-centroid metric as $\distm = \lambda \cdot d$ and the centroid metric as $\distc = (1-\lambda) \cdot d$ for any $\lambda < 1/\alpha^2$. The induced dual metric loss is $\lossa_i(C,x) = \lambda \cdot \max_{j \in C} d(i,j) + (1-\lambda) \cdot d(i,x)$, which is precisely the weighted single metric loss induced by distance metric $d$ and parameter $\lambda$. From \Cref{thm:balanced-impossibility-single-metric}, the core approximation achieved by any balanced clustering, including the one produced by the algorithm, must be at least $\frac{1}{\sqrt{\lambda}} > \alpha$, which is the desired contradiction.  
\end{proof}

\end{document}